\newcommand{\pushright}[1]{\ifmeasuring@#1\else\omit\hfill$\displaystyle#1$\fi\ignorespaces}
\newcommand{\pushleft}[1]{\ifmeasuring@#1\else\omit$\displaystyle#1$\hfill\fi\ignorespaces}
\newcommand{\SB}[1]{#1^+} 
\newcommand{\norm}[1]{\left\lVert#1\right\rVert}
\setlist[itemize]{leftmargin=*}
\newcounter{thm_count}
\newcounter{claim_count}
\theoremstyle{remark} 
\newtheorem*{theorem*}{Theorem} 
\newtheorem{theorem}[thm_count]{\bf Theorem} 
\newtheorem{lemma}[thm_count]{\bf Lemma} 
\newtheorem{claim}[claim_count]{\bf Claim} 
\newcounter{assump}
\newcounter{defn}
\newcounter{prbl}
\theoremstyle{remark}
\newtheorem{problem}[prbl]{\bf Problem}
\newtheorem{assumption}[assump]{\bf Assumption}
\newtheorem{remark}[defn]{\bf Remark}
\renewenvironment{proof}[1][Proof:]{\begin{trivlist}
\item[\hskip \labelsep {\bfseries #1}]}{\end{trivlist}}
\title{Distributed Optimization for Client-Server Architecture \\with Negative Gradient Weights \footnote{This research is supported in part by National Science Foundation awards 1421918 and 1610543, and Toyota InfoTechnology Center. Any opinions, findings, and conclusions or recommendations expressed here are those of the authors and do not necessarily reflect the views of the funding agencies or the U.S. government. }}
\author{Shripad Gade $\qquad$ Nitin H. Vaidya \\ \\ Department of Electrical and Computer Engineering, and \\ Coordinated Science Laboratory, \\ University of Illinois at Urbana-Champaign. \\ Email: \{gade3, nhv\}@illinois.edu \\ \\ Technical Report\footnote{Submitted: 12 August, 2016, Revised: 16 December, 2016. Added Section~3.1, added additional discussion to Section 5, added references.}}
\date{}
\begin{document}

\maketitle
\vspace*{-0.25in}

\begin{abstract}
Availability of both massive datasets and computing resources have made machine learning and predictive analytics extremely pervasive. In this work we present a synchronous algorithm and architecture for distributed optimization motivated by privacy requirements posed by applications in machine learning. We present an algorithm for the recently proposed multi-parameter-server architecture. We consider a group of parameter servers that learn a model based on randomized gradients received from clients. Clients are computational entities with private datasets (inducing a private objective function), that evaluate and upload randomized gradients to the parameter servers. The parameter servers perform model updates based on received gradients and share the model parameters with other servers. We prove that the proposed algorithm can optimize the overall objective function for a very general architecture involving $C$ clients connected to $S$ parameter servers in an arbitrary time varying topology and the parameter servers forming a connected network. 
\end{abstract}

\section{Introduction}
We study a system with $C$ clients. Each client ($C_i$) maintains a private, convex function $f_i(x)$ and we are interested in solving the optimization problem,
\begin{align}
    \underset{x \in \mathcal{X}}{\min} \; f(x) := \sum_{i=1}^C f_i(x).
\end{align}
Each $f_i: \mathbb{R}^D \rightarrow \mathbb{R}$ is a differentiable convex function, with the gradients being Lipschitz continuous. The decision variable $x \in \mathcal{X} \subseteq \mathbb{R}^D$ is a $D$ dimensional real vector. We present a novel distributed optimization algorithm for the recently proposed multiple parameter server architecture for machine learning \cite{abadi2016tensorflow}. This algorithm and architecture are motivated by the need to preserve privacy in the learning process. We discuss the privacy enhancement due to our algorithm later in the report. 

Optimization is at the heart of Machine Learning algorithms. Recent work has focused on distributed variants of optimization algorithms \cite{nesterov2012efficiency,liu2015asynchronous,agarwal2011distributed,Singh2014, Nedic2007}. Several solutions to distributed optimization of convex functions are proposed for myriad scenarios involving directed graphs \cite{Nedi2015}, link failures and losses \cite{hadjicostis2016robust}, asynchronous communication models \cite{nedic2011asynchronous,wei20131,zhang2014asynchronous}, stochastic functions  \cite{agarwal2011distributed,ram2010distributed,ram2009incremental}, fault tolerance \cite{su2015fault,Su2016podc} and differential privacy \cite{huang2015differentially}. Popular methods in convex optimization show guaranteed learning under the assumption that the learning rate being a positive, monotonically non-increasing sequence with constraints on its sum and sum of squares (see Eq.~\ref{Eq:LearnStepCond}). In this work we present an interleaved consensus and projected gradient descent algorithm allowing the possibility of negative weight being multiplied to the learning rate. We show that this algorithm introduces privacy preserving properties in the learning algorithm while conserving correctness and convergence properties.  

Parameter Server framework, presented in \cite{Li2013,NIPS2014_5597}, is shown to improve efficiency, elastic scalability and fault tolerance in distributed machine learning tasks. Asymptotic convergence of delayed proximal gradient method, in parameter server framework is proved by Li \textit{et al.} in \cite{Li2013}. Abadi \textit{et al.} demonstrate via experiments that an architecture with multiple parameter servers can efficiently perform distributed machine learning \cite{abadi2016tensorflow}. However, no analysis exists for proving convergence in multi-parameter server architectures. We address this in our work and prove convergence in learning tasks performed on multiple parameter server architecture.  

Privacy enhancing distributed optimization algorithms are critical while learning models from sensitive data. Datasets involving medical or financial information are extremely sensitive, and outmost care has to be taken to prevent any leakage of information. Experimental results of privacy preserving deep learning were presented in \cite{shokri2015privacy} and demonstrate  $\epsilon$-differential privacy in experiments. However, Shokri \textit{et al.}, in \cite{shokri2015privacy}, do not provide convergence analysis or correctness guarantees for their algorithm. Abadi \textit{et al.} in \cite{abadi2016deep} present a learning algorithm and present analysis of privacy costs under the differential privacy framework. In \cite{gade16PLN}, we show that arbitrary function sharing approach provides privacy. We present precise privacy definitions and prove privacy claims\footnote{Interested readers are referred to \cite{gade16PLN} and references therein for a discussion on existing privacy preserving optimization algorithms.}. In this work, we show accurate synchronous learning in multiple parameter server architecture (\cite{abadi2016tensorflow}).

\subsection{Organization}
Learning problem formulation, assumptions, learning system architecture and notation are presented in Section~\ref{Sec:ProbF}. Interleaved consensus and descent algorithm is proposed in Section~\ref{Sec:Algorithm}. Correctness and convergence results for the iterative algorithm are presented in Section~\ref{Sec:ConvergenceResults}. Short discussion on privacy enhancing aspect of our algorithm is tabled in Section~\ref{Sec:PrivacyDiscussion}. Simulation results validating correctness and convergence claims, and related explanations and discussion are presented in Section~\ref{Sec:SimulationResults}. 

\section{Problem Formulation and Assumptions} \label{Sec:ProbF}
We consider a problem setting where a group of entities collaboratively learn a model by minimizing an additive cost function. Clients ($C_i$) have access to a private cost function ($f_i(x)$). Each cost function ($f_i(x)$) is induced by a dataset (privately stored with client $C_i$). These datasets may be formed by independently collected data or may be partitioned from a much larger central dataset resulting in disjoint or overlapping partitions. The underlying model can be parametrized by a vector $x \ (\in \mathbb{R}^D)$ of dimension $D$. The set of all feasible parameter vectors is called the decision set, $\mathcal{X} \ (\subseteq \mathbb{R}^D)$. The learning process involves finding an optimal decision vector that optimizes sum of functions ($f_i(x)$) over the set $\mathcal{X} $. The clients are connected to one or more parameter servers in an arbitrary time-varying topology (albeit with some connectivity constraint). 

Figure~\ref{Fig:Schematic} shows client-server architecture with 4 parameter servers and 7 clients. The clients are connected to the servers in an arbitrary time-varying topology. We assume that the clients are connected to one or more servers at least once every $\Delta$ iterations (Assumption~\ref{Asmp:QConn}). The parameter servers are connected between themselves in an arbitrary possibly time-varying topology (with the only constraint being that the servers form a connected component, Assumption~\ref{Asmp:SerConn}). The clients download latest parameter vectors (also referred to as server-states) from the servers and compute gradients for their local, private cost function. At every iteration, the clients upload randomized gradients to one or more servers. Every server collects all received gradients, and updates its parameter vector using projected gradient descent rule. The above steps occur synchronously for $\Delta$ iterations. The projected gradient descent steps are followed by a consensus step among all servers. The algorithm is described in detail in Section~\ref{Sec:Algorithm}.  

\begin{figure}[!t]
    \centering
    \includegraphics[height=3in]{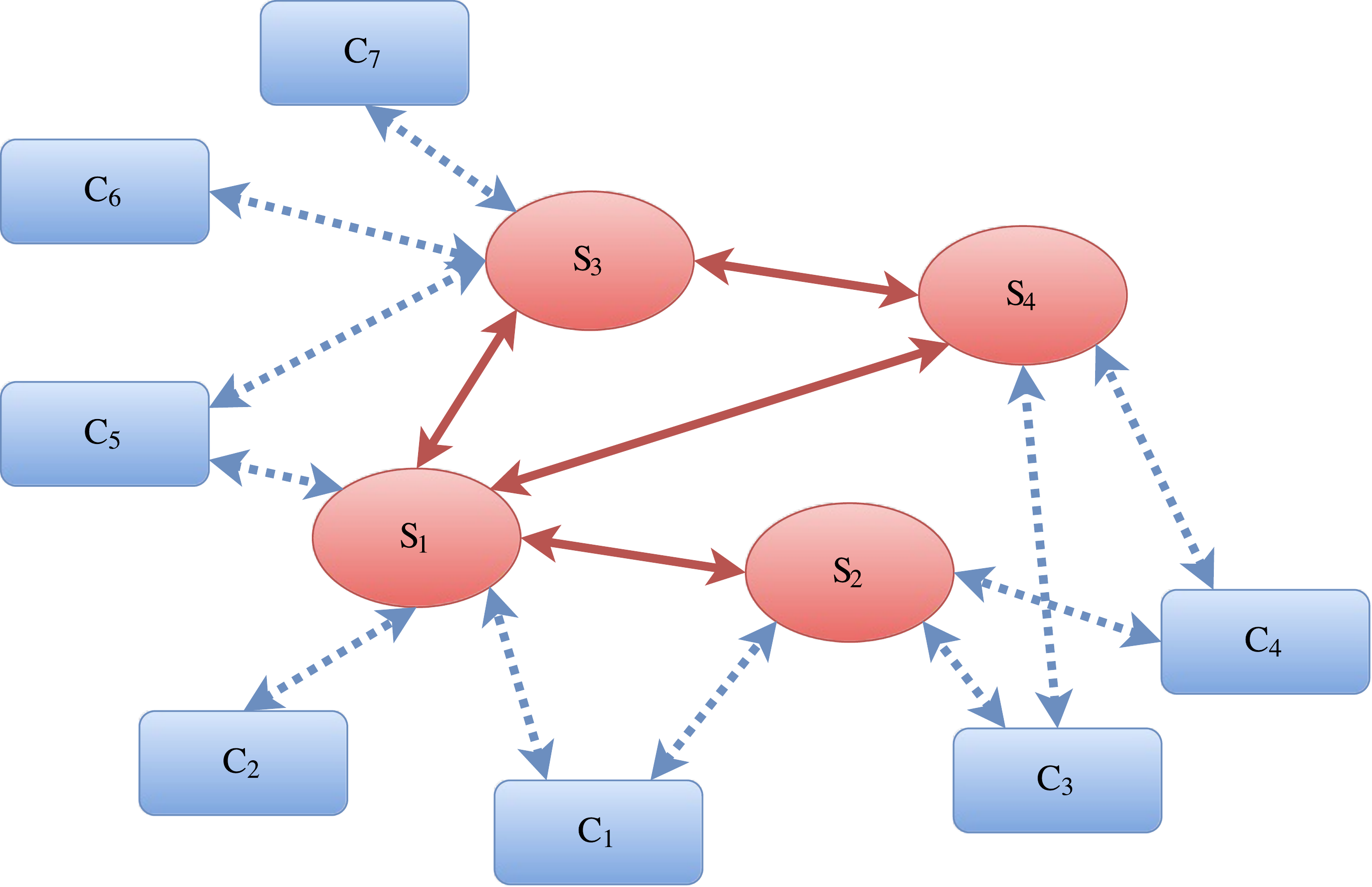}
    \caption{A schematic for Client-Server architecture: $S_i$, (red ovals) are parameter servers and $C_i$, (blue squares) are clients.}
    \label{Fig:Schematic}
\end{figure}

We impose following assumptions on the functions $f_i(x)$ and on the decision set, $\mathcal{X}$:  
\begin{assumption} [Objective Functions] \label{Asmp:Function} 
The objective functions $f_i : \mathbb{R}^D \rightarrow \mathbb{R}, \; \forall \; i = 1, 2,{ }\ldots, C$ are continuously differentiable convex functions of parameter vector $x \in \mathcal{X}$.  
\end{assumption}
\begin{assumption} [Decision Set]
The feasible parameter vector set, $\mathcal{X}$, is a convex-compact subset of $\mathbb{R}^D$. \label{Asmp:Set}
\end{assumption}

We make a boundedness assumption on the gradient ($g_h(x)$) of function $f_h(x)$ in Assumption~\ref{Asmp:SubBound}. We also make an additional assumption on the Lipschitz continuity of gradients $g_h(x)$.  
\begin{assumption} [Gradient Boundedness] \label{Asmp:SubBound}
Let $g_i(x)$ denote the gradient of the function $f_i(x)$. There exist scalars $L_1, L_2, \ldots, L_C $ such that, $\| g_i(x) \| \leq L_i; \; \forall \; i = 1, 2, \cdots, C \; \text{and} \; \forall x \in \mathcal{X}$.
\end{assumption}
\begin{assumption}[Gradient Lipschitzness]
\label{Asmp:GradLip}
Each function gradient ($g_h(x)$) is assumed to be Lipschitz continuous, i.e. there exist scalars $N_h > 0$ such that, $\|g_h(x) - g_h(y) \| \leq N_h \| x -y\|$ for all $x \neq y; \ \forall x, y \in \mathcal{X}$ and for all $h = 1, 2, \ldots, C$. 
\end{assumption}




For the purpose of this report we will assume without explicitly stating that all communication links are synchronous and loss less. All agents are assumed to operate perfectly and do not experience any faults. As discussed above in the distributed optimization architecture, the server-client and server-server topologies satisfy the following assumptions, 
\begin{assumption} [Server-Client $\Delta$-Connectedness] \label{Asmp:QConn}
Over a period of $\Delta$ steps, every client has uploaded gradients to at least one of the servers. 
\end{assumption}

\begin{assumption} [Server-Server Connectedness] \label{Asmp:SerConn}
The server graph forms a connected component at every consensus step.  
\end{assumption}

The distributed learning problem is formally articulated in Problem~\ref{Prob:GlobalOpt}.  

\begin{problem} [Distributed Learning] \label{Prob:GlobalOpt}
Learning a model implies finding a minimizer to the objective function ($f(x)$) defined as the sum of functions ($f_i(x)$), subject to Assumptions ~\ref{Asmp:Function} and \ref{Asmp:Set}, i.e.,
\begin{align*}
\text{find} \; x^* \in \underset{x \in \mathcal{X}}{\text{argmin}} \; f(x), 
\end{align*}
where $f(x) = \sum_{i = 1}^{C} \; f_i (x)$.
\end{problem}

\subsection{Notation} \label{Sec:Notation}
Let us denote the number of servers by $S$ and the number of clients by $C$. Upper case alphabets ($I, J, K$ etc.) are used to index servers and lower case alphabets ($j, h$) are used to index clients. We use the symbol ``$\sim$" to denote communication link and information sharing between entities. As an example, $I \sim G$ denotes that servers $I$ and $G$ have a communication link between them, and correspondingly $I \cancel{\sim} J$ denotes that servers $I$ and $J$ do not share information (cannot communicate) with each other. The dimension of the problem (number of parameters in the decision vector) is denoted by $D$.

Iterations number is denoted by the 2-tuple $\{i,k\}$. The first element in the iteration number denotes the number of steps after elapsed after a consensus step and the second element denotes the number of consensus steps elapsed. In this algorithm we perform consensus every $\Delta$ steps. Consequently, the total iterations elapsed at $\{i,k\}$ can be calculated to be $k \ \Delta + i$. The decision vector (also referred to as iterate from now on) stored in server $I$ at time $\{i,k\}$ is denoted by $x^I_{i,k}$, where the superscript denotes the server-id, the subscript denotes the time index. $x^J_{i,k}[p]$ ($p = 1, 2, \ldots, D$) denotes the $p^{th}$ dimension in decision vector $x^J_{i,k}$. 

The average of iterates at time instant $\{i,k\}$ is denoted by $\bar{x}_{i,k}$. 
\begin{equation}
\bar{x}_{i,k} = \frac{1}{S} \sum_{J=1}^S x^J_{i,k}. \label{Eq:deltaDef}
\end{equation}
We denote the disagreement of an iterate ($x^J_{i,k}$) with the iterate average ($\bar{x}_{i,k}$) by $\delta^J_{i,k}$.
\begin{equation}
\delta^J_{i,k} = x^J_{i,k} - \bar{x}_{i,k}.
\label{Eq:deltaDef2}
\end{equation}
At time instants $\{0,k\}$ we denote $\delta^J_{0,k}$ by $\delta^J_k$ for simplicity. 
\begin{equation}
\delta^J_{k} = \delta^J_{0,k}.
\label{Eq:deltaDef3}
\end{equation}

\noindent We use $\tilde{.}$ to denote a vector that is stacked by its coordinates. As an example, consider three vectors in $\mathbb{R}^3$ given by ${a_1} = [a_x, \ a_y, \ a_z]^T$, ${a_2} = [b_x, \ b_y, \ b_z]^T$, ${a_3} = [c_x, \ c_y, \ c_z]^T$. Let us represent ${a} = [{a_1}, \ {a_2}, \ {a_3}]^T$, then $\tilde{{a}} = [a_x, \ b_x, \ c_x, \ a_y, \ b_y, \ c_y, \ a_z, \ b_z, \ c_z]^T$. Similarly we can write stacked model parameter vector as, 
\begin{equation}
\tilde{x}_{0,k} = [x^1_{0,k}[1], x^2_{0,k}[1], \ldots, x^S_{0,k}[1], x^1_{0,k}[2], x^2_{0,k}[2], \ldots, x^S_{0,k}[2], \ldots, x^1_{0,k}[D], \ldots, x^S_{0,k}[D]]^T. \label{Eq:TildeVecDef}
\end{equation}

We use $g_h(x_{i,k})$ to denote the gradient of function $f_h(x)$ evaluated at $x_{i,k}$. Remember, $L_h$ are bounds on gradients from Assumption~\ref{Asmp:SubBound} and $N_h$ are Lipschitz constants from Assumption~\ref{Asmp:GradLip}, for all $h = 1, 2, \ldots, C$. We define constants $\SB{L} = \sum_{h=1}^C L_h$ and $\SB{N} = \sum_{h=1}^C N_h$ to be used later in the analysis. 

$\|.\|$ denotes standard Euclidean norm for vectors, and matrix 2-norm for matrices ($\|A\| = \sqrt{\lambda_{\max}(A^\dagger A)} = \sigma_{\max}(A)$, where $A^\dagger$ denotes conjugate transpose of matrix A, and $\lambda, \ \sigma$  are eigenvalues and singular values respectively). 

Throughout this report, we will use the following definitions and notation regarding the optimal solution ($x^*$), the set of all optima ($\mathcal{X}^*$) and the function value at optima ($f^*$),
$$ f^* = \inf_{x \in \mathcal{X}} f(x), \qquad \mathcal{X}^* = \{x \in \mathcal{X} | f(x) = f^*\}, \qquad dist(x, \mathcal{X}^*) = \inf_{x^* \in \mathcal{X}^*} \|x - x^*\|.$$ 
\noindent The optimal function value, at the solution of the optimization problem or the minimizing state vector is denoted by $x^*$, is denoted by $f^*$.

\section{Interleaved Consensus and Descent Algorithm} \label{Sec:Algorithm}
We proposed an interleaved consensus and projected gradient descent algorithm to find a distributed solution to Problem~\ref{Prob:GlobalOpt}. We consider a scenario with multiple parameter servers (called servers) and clients. As described in Section~\ref{Sec:ProbF}, the servers are connected to each other in an arbitrary topology (Assumption~\ref{Asmp:SerConn}), and the clients are connected to the servers in a time varying topology (Assumption~\ref{Asmp:QConn}). Projected gradient descent and consensus steps are performed in interleaved fashion. 

\subsubsection*{Projected Gradient Descent}
In the projected gradient descent phase, the clients download the current states from the parameter server(s) (that are connected to the client) and upload corresponding weighted gradients to the respective parameter servers. The projected gradient step is given by the following update rule, 
\begin{equation}
x^I_{i,k} = \mathcal{P}_{\mathcal{X}} \left[ x^I_{i-1,k} - \alpha_k \sum_{h=1}^C W_{i-1,k}[I,h] \nabla f_h(x^I_{i-1,k}) \right], \label{Eq:PGDE}
\end{equation}
where $W_{i-1,k}[I,h]$ is the weight assigned (by client $h$) to an gradient update sent to server $I$. These computations and gradient uploads happen synchronously, at every time step. Projected Descent without weights is a well known iterative gradient based method that guarantees convergence to optimum under reducing learning rate ($\alpha_k$) \cite{bertsekas1976goldstein}. We assume that the monotonically non-increasing learning rate/step possesses the following properties, 
\begin{equation}
    \alpha_k > 0, \ \forall k \geq 0; \quad \alpha_{k+1} \leq \alpha_k, \ \forall k \geq 0; \quad  \sum_{k=0}^\infty \alpha_k = \infty; \ \text{and} \quad \sum_{k=0}^\infty \alpha_k^2 < \infty. \label{Eq:LearnStepCond}
\end{equation}

The weights used by clients at iteration $i$ after $k^{th}$ consensus steps are expressed in a ${S \times C}$ matrix, $W_{i,k}$. The $(I,j)^{th}$ entry of matrix $W_{i,k}[I,j]$ denotes the weight that client $j$ assigns to server $I$. The  weights are further characterized by the Symmetric Learning Condition (SLC) and the Bounded Update Condition (BUC). The SLC ensures that over a period of $\Delta$ steps, every client $h$ (objective function $f_h(x)$) is assigned equal weight ($=M$).
\begin{assumption} [SLC] \label{Asmp:SLC}
Equal weights are assigned to updates from every client over a period of $\Delta$ steps. More formally, if the sum of weight matrices over a period of $\Delta$ steps is denoted by $W = \sum_{i = 1}^{\Delta} W_{i-1,k}$, then the sum of all entries in a column (of matrix $W$) is a constant $M$.  
\begin{align}
\mathbb{1}_{[1 \times S]} \ W = \mathbb{1}_{[1 \times S]} \left[ \sum_{i = 1}^{\Delta} W_{i-1,k} \right] = M \ \mathbb{1}_{[1 \times C]}. && \ldots \text{for some scalar $M>0$}
\end{align}
\end{assumption}

The BUC ensures that the sum of absolute values of weights is upper bounded. BUC is needed because, if negative weights are permitted, the weights may oscillate with arbitrary large amplitudes (while SLC is still satisfied). BUC restricts the clients from selecting arbitrarily large weights.   
\begin{assumption} [BUC] \label{Asmp:SLC-BUC}
The sum of absolute value of weights (over $\Delta$ steps)is upper bounded by a constant ($\bar{M} > 0$). Mathematically, if we denote the sum of absolute values of weight matrices as $\bar{W} = \sum_{i = 1}^{\Delta} | W_{i-1,k} | $, then the sum of entries in every column of $\bar{W}$ is upper bounded by a constant ($\bar{M} > 0$).  
\begin{align}
\mathbb{1}_{[1 \times S]} \ \bar{W} = \mathbb{1}_{[1 \times S]} \left[ \sum_{i = 1}^{\Delta} | W_{i-1,k} |  \right] \leq \bar{M} \ \mathbb{1}_{[1 \times C]}. && \ldots \text{for some scalar $\bar{M}>0$}
\end{align}
\end{assumption}

\subsubsection*{Consensus}
The projected gradient descent steps are followed by a consensus step between parameter servers. A consensus step is performed after every $\Delta$ iterations. For an arbitrary incomplete graph we use a doubly stochastic weight matrix $B_k$ to write the update rule. The parameters at server $I$ are updated by,
\begin{equation}
x^I_{0,k+1} = \sum_{J=1}^S B_k[I,J] x^J_{\Delta,k}. \label{Eq:ConsensusUpdateRelation1}
\end{equation}

\noindent If we assume a fully connected topology for servers, the update rule can be written as, 
\begin{equation}
x^I_{0,k+1} = \frac{1}{S} \sum_{J=1}^S x^J_{\Delta,k}.
\end{equation}

The weight matrix $B_k$ has to be a row stochastic matrix for the decision vectors to enter the synchronization manifold and $B_k$ has to be column stochastic for the update to be average of parameter vectors at servers \cite{tsi1984phdthesis}. Methods for constructing $B_k$ matrix based on the graph topology are well known and we point the reader to Refs.~\cite{nedic2009distributed,blondel2005convergence,xiao2007distributed} for a more thorough treatment. We construct a doubly stochastic weight matrix for a bidirectional communication graph between servers (denoted as $B_{k}$ with entries $B_{k}[I,J] \geq 0$) using the following rules. 

\noindent For all $I, J = 1, 2, \ldots, S$ and all $k \geq 0$, the following hold:
\begin{enumerate}[label=(\alph*)]
\item $B_{k}[I,J] \geq \kappa$ for all $J \sim I$ at $\{i,k\}$ and $B_{k}[I,J] = 0$ for $J \cancel{\sim} I$, for some $\kappa>0$;
\item $\sum_{J=1}^N B_{k}[I,J] = 1 \quad \forall \; I$;
\item $\sum_{I=1}^N B_{k}[I,J] = 1 \quad \forall \; J$.
\end{enumerate}
The matrix so constructed assigns positive weights (lower bounded by $\kappa$) to all neighbors. The last two conditions enforce row and column stochasticity respectively.

The Interleaved Consensus and Descent Algorithm is formally presented in Algorithm~\ref{Algo:ASLearn} and \ref{Algo:ASLearn2}. Server side algorithm for information fusion, state update and consensus operation is described in Algorithm~\ref{Algo:ASLearn} while Client side algorithm for gradient upload is presented in Algorithm~\ref{Algo:ASLearn2}.

\begin{algorithm}[!t]
\caption{Server - Interleaved Consensus and Descent Algorithm}
\begin{algorithmic}[1]
\State Input: $x^J_{0,k}$, $\alpha_k$, $\Delta$, NSteps \Comment{NSteps - Termination Criteria}     
\State Result: $x^* = \underset{x \in \mathcal{X}} {argmin} \sum_{i=1}^{C} f_i(x) $ 
\For {k = 1 to NSteps} 
    \For {i = 0 to $ \Delta$-1}     \Comment{$\Delta$ Projected Gradient Steps}
        \For {J = 1 to $S$}   \Comment{Server updates states based on gradients received}
            \State $x^J_{i,k} = \mathcal{P}_\mathcal{X} \left[ x^J_{i-1,k} - \alpha_k \sum_{h=1}^C W_{i-1,k}[J,h] \; g_h(x^J_{i-1,k})\right]$
        \EndFor
    \EndFor
    \For {J = 1 to $S$} \Comment {Server performs consensus step with neighbors}
        \State $x^J_{0,k+1} = \sum_{I=1}^S B_k[J,I] \ x^I_{ \Delta,k}$ \Comment{$B_k\text{ is } S\times S $ doubly stochastic matrix}
    \EndFor
\EndFor
\end{algorithmic}
\label{Algo:ASLearn}
\end{algorithm}
\begin{algorithm}[!t]
\caption{Client - Interleaved Consensus and Descent Algorithm}
\begin{algorithmic}[1]
\State Input: $x^J_{i,k}$, $\Delta$, NSteps \Comment{NSteps - Termination Criteria}     
\State Result: $W[J,h]g_h(x^J_{i,k})$ for all $J = 1, 2, \ldots, S$
\For {k = 1 to NSteps} 
    \For {i = 0 to $ \Delta$-1}     \Comment{$\Delta$ Projected Gradient Steps}
        \For {J = 1 to $S$}   
            \State Download: $x^J_{i,k}$ 
            \State Compute: $g_h(x^J_{i,k})$ \Comment{Compute gradient}
            \State Select: $W_{i,k}[J,h]$ \Comment{Decide on random weight (SLC \& BUC)}
            \State Upload: $W_{i,k}[J,h]\; g_h(x^J_{i,k})$ \Comment{Client uploads gradient update to Server $J$}
        \EndFor
    \EndFor
\EndFor
\end{algorithmic}
\label{Algo:ASLearn2}
\end{algorithm}

\subsection{Coordinate-wise Independently Weighted Updates}

In the algorithm above, we assume that weights $W_{i,k}[J,h]$ are applied to the entire gradient vectors $\nabla f_h (x^J_{i,k})$. However, we can easily show that we can choose different weights for different coordinates of the gradient vector, i.e. $W^{p}_{i,k}[J,h]$ is used as weight for $\nabla f_h (x^J_{i,k})[p]$ for $p = 1, 2, \ldots, D$. If we use different weights for every coordinates the projected gradient descent update rule (Eq.~\ref{Eq:PGDE}) gets modified to, 
\begin{align}
x^I_{i,k}[p] = \mathcal{P}_{\mathcal{X}} \left[ x^I_{i-1,k}[p] - \alpha_k \sum_{h=1}^C W^p_{i-1,k}[I,h] \nabla f_h(x^I_{i-1,k})[p] \right]. && (p = 1, 2, \ldots, D)
\end{align}
To account for the coordinate-wise independently weighted updates, the symmetric learning condition will have to be satisfied coordinate wise, 
\begin{align}
\mathbb{1}_{[1 \times S]} \ W^p = \mathbb{1}_{[1 \times S]} \left[ \sum_{i = 1}^{\Delta} W^p_{i-1,k} \right] = M \ \mathbb{1}_{[1 \times C]}. && \ldots \text{for scalar $M>0$, and $p = 1, 2, \ldots, D$}
\end{align}
The bounded update condition is also satisfied coordinate wise,
\begin{align}
\mathbb{1}_{[1 \times S]} \ \bar{W}^p = \mathbb{1}_{[1 \times S]} \left[ \sum_{i = 1}^{\Delta} | W^p_{i-1,k} |  \right] \leq \bar{M} \ \mathbb{1}_{[1 \times C]}. && \ldots \text{for scalar $\bar{M}>0$, and $p = 1, 2, \ldots, D$}
\end{align}
The proofs developed in this report will hold even for coordinate wise weighted updates, since we estimate bounds for a summation of weights over $\Delta$ steps in the proof, rather than the weights themselves. If the SLC and BUC hold coordinate wise, the coordinate wise selection of weights does not alter the convergence proofs (albeit a few constants in the bound will be slightly different).

\section{Convergence Results} \label{Sec:ConvergenceResults}

We prove convergence result for the most general case first. Elements of $W$ matrix can be negative numbers, however, the symmetric learning (Assumption~\ref{Asmp:SLC}) and bounded update (Assumption~\ref{Asmp:SLC-BUC}) conditions hold for arbitrary time varying server-client communication graph and an incomplete server-server communication graph. It is followed by convergence results for more restrictive cases involving complete server graph (instead of Assumption~\ref{Asmp:SerConn}, see Section~\ref{Sec:4.1}) and complete server graph with non-negative weight matrix $W_{i,k}$ (see Section~\ref{Sec:4.2}).

We state two important results that will be useful in convergence analysis, the first being on convergence of non-negative almost supermartingales by Robbins and Siegmund (Theorem 1, \cite{robbins1985convergence}) followed by Lemma 3.1 (b) by Ram \textit{et al.}, \cite{ram2010distributed}.
\begin{lemma}  \label{Lem:RobSiegConv}
Let ($\Omega, \mathcal{F}, \mathcal{P}$) be a probability space and let $\mathcal{F}_0 \subset \mathcal{F}_1 \subset \ldots$ be a sequence of sub $\sigma-$fields of $\mathcal{F}$. Let $u_k, v_k$ and $w_k$, $k = 0, 1, 2, \ldots$ be non-negative $\mathcal{F}_k-$ measurable random variables and let \{$\gamma_k$\} be a deterministic sequence. Assume that $\sum_{k=0}^\infty \gamma_k < \infty$, and $\sum_{k=0}^\infty w_k < \infty$ and $$E[u_{k+1}|\mathcal{F}_k] \leq  (1+\gamma_k) u_k -v_k + w_k,$$ holds with probability 1. Then, the sequence \{$u_k$\} converges to a non-negative random variable and $\sum_{k=0}^\infty v_k < \infty$.
\end{lemma}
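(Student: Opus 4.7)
The plan is to reduce the almost-supermartingale inequality to an ordinary supermartingale inequality by two successive changes of variables: first rescale by $\prod_{j}(1+\gamma_j)$ to absorb the $(1+\gamma_k)$ factor, and then add an auxiliary accumulator to absorb the $w_k$-perturbation so that Doob's supermartingale convergence theorem can be applied (after a localization step).

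First I would set $\beta_k = \prod_{j=0}^{k-1}(1+\gamma_j)$ with $\beta_0=1$. Because $\{\gamma_k\}$ is deterministic, non-negative, and summable, $\beta_k$ is non-decreasing and converges to a finite limit $\beta_\infty\in[1,\infty)$. Dividing the hypothesis through by $\beta_{k+1}$ and writing $\tilde u_k = u_k/\beta_k$, $\tilde v_k = v_k/\beta_{k+1}$, $\tilde w_k = w_k/\beta_{k+1}$ yields
$$E[\tilde u_{k+1}\mid \mathcal{F}_k] \;\leq\; \tilde u_k - \tilde v_k + \tilde w_k.$$
Since $1\leq \beta_k\leq\beta_\infty$, the summability/convergence properties of the tilde-sequences coincide with those of the original sequences. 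Each of $\tilde u_k,\tilde v_k,\tilde w_k$ remains non-negative and $\mathcal{F}_k$-measurable.

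Next, absorb the perturbation by setting
$$Z_k \;=\; \tilde u_k \;+\; \sum_{j=0}^{k-1}\tilde v_j \;-\; \sum_{j=0}^{k-1}\tilde w_j.$$
A direct substitution using the displayed inequality shows $E[Z_{k+1}\mid \mathcal{F}_k]\leq Z_k$, so $\{Z_k\}$ is an $\{\mathcal{F}_k\}$-supermartingale. It is bounded below by $-\sum_{j=0}^{\infty}\tilde w_j$, which is finite almost surely but not necessarily integrable, so Doob's theorem does not apply to $Z_k$ directly. To get around this I would introduce the stopping times $\tau_N = \inf\bigl\{k\colon \sum_{j=0}^{k-1}\tilde w_j > N\bigr\}$. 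On $\{\tau_N=\infty\}$ the stopped process $Z_{k\wedge \tau_N}$ is bounded below by the constant $-N$, hence $\sup_k E[Z_{k\wedge\tau_N}^{-}]<\infty$, and Doob's theorem yields almost-sure convergence of $Z_{k\wedge\tau_N}$ to a limit $Z^\infty_N$. Since $\sum\tilde w_j<\infty$ almost surely, $\tau_N\to\infty$ almost surely as $N\to\infty$, so $\{Z_k\}$ itself converges almost surely.

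Finally I would extract the two conclusions from the identity
$$\tilde u_k + \sum_{j=0}^{k-1}\tilde v_j \;=\; Z_k + \sum_{j=0}^{k-1}\tilde w_j.$$
The right-hand side converges almost surely (the second summand by hypothesis, the first by the previous step), so the left-hand side converges as well. The partial sums $\sum_{j=0}^{k-1}\tilde v_j$ are non-decreasing and $\tilde u_k\geq 0$, so if $\sum\tilde v_j$ diverged the left-hand side would tend to $+\infty$, a contradiction; hence $\sum_{k=0}^\infty v_k<\infty$ almost surely, and then $\tilde u_k$ is the difference of two convergent sequences and also converges, giving $u_k=\beta_k\tilde u_k$ convergent to a non-negative random variable. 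The main obstacle in this strategy is the localization step: the hypothesis gives summability of $w_k$ only almost surely rather than in expectation, which is precisely the feature that distinguishes an ``almost supermartingale'' from an ordinary supermartingale and forces the stopping-time machinery above rather than a one-shot invocation of Doob's theorem.
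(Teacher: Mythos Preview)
The paper does not prove this lemma at all: it is stated as a quotation of Robbins and Siegmund's almost-supermartingale convergence theorem (Theorem~1 in \cite{robbins1985convergence}) and is used as a black box in the proofs of Theorems~\ref{Th:ConvMain} and~\ref{Th:ConvCSG}. So there is no ``paper's own proof'' to compare against.

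That said, your sketch is the standard proof of the Robbins--Siegmund result and is essentially correct. The two-step normalization (divide by $\beta_k=\prod_{j<k}(1+\gamma_j)$, then shift by the accumulated $\tilde v_j$ and $\tilde w_j$) is exactly how one reduces to a genuine supermartingale, and the localization via stopping times $\tau_N$ is the right way to handle the fact that $\sum w_k<\infty$ is only assumed almost surely. Two small points worth tightening in a full write-up: (i) the lemma as stated does not explicitly say $\gamma_k\geq 0$, which you need for $\beta_k$ to be well-behaved---this is implicit in the paper's usage and in Robbins--Siegmund, so it is a harmless omission; (ii) the claim that $Z_{k\wedge\tau_N}\geq -N$ needs the stopping time indexed so that the partial sum of $\tilde w_j$ \emph{up to the stopped index} is at most $N$; with your definition $\tau_N=\inf\{k:\sum_{j<k}\tilde w_j>N\}$ the bound at the stopping time itself can overshoot by one term, so one usually either adjusts the index in the definition or bounds by $-N-\tilde w_{\tau_N-1}$ and uses a further truncation. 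These are routine fixes and do not affect the argument.
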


\begin{lemma} \label{Lem:Ram}
Let $\{\zeta_k\}$ be a non-negative sequence scalar sequence. If $\sum_{k=0}^\infty \zeta_k < \infty$ and $0 < \beta < 1$, then $\sum_{k=0}^\infty \left( \sum_{j=0}^k \beta^{k-j} \zeta_j \right) < \infty$.
\end{lemma}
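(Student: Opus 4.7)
The plan is to interchange the order of summation, which is unconditionally legitimate here because every term $\beta^{k-j}\zeta_j$ is non-negative (Tonelli's theorem for counting measures, or equivalently the rearrangement principle for non-negative double series). After the swap, the inner sum collapses to a geometric series whose value is independent of the outer index.

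Concretely, I would write
\[
\sum_{k=0}^{\infty} \sum_{j=0}^{k} \beta^{k-j}\zeta_j
\;=\; \sum_{j=0}^{\infty} \zeta_j \sum_{k=j}^{\infty} \beta^{k-j}
\;=\; \sum_{j=0}^{\infty} \zeta_j \sum_{\ell=0}^{\infty} \beta^{\ell}
\;=\; \frac{1}{1-\beta}\sum_{j=0}^{\infty} \zeta_j,
\]
after the substitution $\ell = k-j$ in the inner sum and using $0<\beta<1$ to evaluate the geometric series. Since $\sum_{j=0}^{\infty} \zeta_j < \infty$ by hypothesis and $1/(1-\beta)$ is a finite positive constant, the right-hand side is finite, which is exactly the claim.

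The only step that might deserve a sentence of justification is the interchange of summation order. For readers who prefer an elementary argument, I could present it instead as a monotone limit: truncate the outer sum at $K$, observe that each index pair $(k,j)$ with $0\le j\le k\le K$ appears exactly once, and let $K\to\infty$ using monotone convergence (since all summands are non-negative). There is no genuine obstacle in this lemma; it is a one-line Fubini calculation, and I would present it as such rather than belaboring it.
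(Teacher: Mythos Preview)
Your proof is correct. Note that the paper does not supply its own proof of this lemma: it simply quotes the result as Lemma~3.1(b) of Ram \textit{et al.}~\cite{ram2010distributed}. Your Fubini/Tonelli interchange followed by evaluation of the geometric series is exactly the standard argument and is the same one used in the cited reference, so there is nothing to compare beyond that.
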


\noindent The well known non-expansive property (cf. \cite{bertsekas2003convex}) of Euclidean projection onto a non-empty, closed, convex set $\mathcal{X}$, is represented by the following inequality, $\forall \; x, y \in \mathbb{R}^D$, 
\begin{align}
    \| \mathcal{P}_\mathcal{X}[x] - \mathcal{P}_\mathcal{X}[y] \| \leq \|x - y\|. \label{Eq:EUCPROJINEQUAL}
\end{align}

We now present the relationship of server iterates between two consensus steps (at time instants 
$\{0,k\}$ and $\{0,k+1\}$) in the following Lemma.
\begin{lemma} \label{Lem:IterateConvAnyW}
Under Assumptions~\ref{Asmp:Function}, \ref{Asmp:Set}, \ref{Asmp:SubBound}, \ref{Asmp:GradLip}, \ref{Asmp:QConn} and \ref{Asmp:SerConn} a sequence {$x^J_{0,k}$} generated by an interleaved consensus and projected gradient algorithm following symmetric learning condition and bounded update condition (Assumptions~\ref{Asmp:SLC}, \ref{Asmp:SLC-BUC} respectively) satisfies, for all $y \in \mathcal{X}$,
\begin{align}
\eta_{k+1}^2 &\leq \left(1 + \gamma_k \right) \eta_{k}^2 - 2\alpha_k M \left(f(\bar{x}_{0,k}) - f(y)\right) + {\alpha_k^2}(C_1^2 + C_2^2 + C_3^2) + \alpha_k C_4 \max_{J}\{\|\delta^J_k\|\}, \label{Eq:Interim5}
\end{align}
where, $\delta^J_k$ (Eq.~\ref{Eq:deltaDef3}) is the disagreement between the iterate and its average (Eq.~\ref{Eq:deltaDef}, \ref{Eq:deltaDef2}),
\begin{align}
\eta_k^2 &= \sum_{J=1}^S \|x^J_{0,k}-y\|^2, \quad 
\gamma_k = \frac{1}{S} (2 \alpha_k^2 \bar{M}^2 \SB{N} \SB{L} + \alpha_k \bar{M}\SB{N} \max_J \|\delta^J_{k}\|), \nonumber \\
C_1^2 &= \sum_{i=1}^{ \Delta}\sum_{J=1}^S \left( \sum_{h=1}^C | W_{i-1,k}[J,h] | \  L_h \right)^2, \quad C_2^2 = 4(\bar{M}\SB{L})^2, \quad C_3^2 = 2 \bar{M}^2 \SB{N} \SB{L}, \quad \text{and} \nonumber \\
C_4 &= \bar{M} (2 \SB{L} + \SB{N}). \nonumber 
\end{align}
\end{lemma}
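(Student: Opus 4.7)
My plan is to derive Eq.~\ref{Eq:Interim5} by tracking the $\Delta$ projected gradient steps from $x^J_{0,k}$ to $x^J_{\Delta,k}$ and then absorbing the consensus step. A single PGD step, combined with the non-expansiveness of $\mathcal{P}_\mathcal{X}$ (Eq.~\ref{Eq:EUCPROJINEQUAL}) at the fixed point $y\in\mathcal{X}$, gives
\begin{align*}
\|x^J_{i,k}-y\|^2 \leq \|x^J_{i-1,k}-y\|^2 - 2\alpha_k\langle u^J_{i-1,k},\, x^J_{i-1,k}-y\rangle + \alpha_k^2\,\|u^J_{i-1,k}\|^2,
\end{align*}
where $u^J_{i-1,k}=\sum_h W_{i-1,k}[J,h]\,g_h(x^J_{i-1,k})$. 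I sum over $i=1,\dots,\Delta$ (telescoping the squared-norm differences) and over $J$. The norm-squared piece, bounded by $\|u^J_{i-1,k}\|\leq\sum_h|W_{i-1,k}[J,h]|\,L_h$ via Assumption~\ref{Asmp:SubBound}, collects exactly into $\alpha_k^2\,C_1^2$. For the consensus step, double-stochasticity of $B_k$ with convexity of $\|\cdot\|^2$ gives $\|x^J_{0,k+1}-y\|^2\leq\sum_I B_k[J,I]\|x^I_{\Delta,k}-y\|^2$; summing over $J$ and using column stochasticity yields $\eta_{k+1}^2\leq \sum_J\|x^J_{\Delta,k}-y\|^2$, so consensus contributes no additional error.

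The heart of the argument is the inner-product term $T:=\sum_{J,i,h} W_{i-1,k}[J,h]\langle g_h(x^J_{i-1,k}),\,x^J_{i-1,k}-y\rangle$. Because $W_{i-1,k}[J,h]$ may be negative, per-step convexity of $f_h$ cannot be invoked directly. My remedy is to re-center around $(\bar x_{0,k},g_h(\bar x_{0,k}))$:
\begin{align*}
\langle g_h(x^J_{i-1,k}),\,x^J_{i-1,k}-y\rangle &= \langle g_h(\bar x_{0,k}),\,\bar x_{0,k}-y\rangle \\
&\quad + \langle g_h(x^J_{i-1,k})-g_h(\bar x_{0,k}),\,\bar x_{0,k}-y\rangle + \langle g_h(x^J_{i-1,k}),\,x^J_{i-1,k}-\bar x_{0,k}\rangle.
\end{align*}
The first (centered) piece, summed against $W_{i-1,k}[J,h]$ over $J$ and $i$, collapses by the SLC (Assumption~\ref{Asmp:SLC}) to $M\langle\nabla f(\bar x_{0,k}),\,\bar x_{0,k}-y\rangle\geq M(f(\bar x_{0,k})-f(y))$ via convexity of $f$. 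This is the crucial place where the SLC rescues a convexity-based descent despite the sign ambiguity in $W$.

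The two residual pieces I control by Cauchy--Schwarz together with Assumptions~\ref{Asmp:GradLip} and \ref{Asmp:SubBound}. To bound $\|x^J_{i-1,k}-\bar x_{0,k}\|$ I route through $x^J_{0,k}$: non-expansiveness of $\mathcal{P}_\mathcal{X}$ applied across the intermediate PGD steps together with the BUC (Assumption~\ref{Asmp:SLC-BUC}) yields $\|x^J_{i-1,k}-x^J_{0,k}\|\leq\alpha_k\bar M\SB{L}$, so $\|x^J_{i-1,k}-\bar x_{0,k}\|\leq\alpha_k\bar M\SB{L}+\|\delta^J_k\|$. Aggregating the weighted sums via the BUC then turns each $\sum_{J,i,h}|W_{i-1,k}[J,h]|(\cdot)$ into $\bar M\SB{N}$ or $\bar M\SB{L}$ factors. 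The lone $\|\bar x_{0,k}-y\|$ left in the gradient-difference piece is converted to $\|\bar x_{0,k}-y\|^2\leq\eta_k^2/S$ by Jensen's inequality and $2ab\leq a^2+b^2$, generating the $\gamma_k\eta_k^2$ term; the leftover $\alpha_k$-scaled additive residuals fill out $\alpha_k^2(C_2^2+C_3^2)$ and $\alpha_k C_4\max_J\|\delta^J_k\|$.

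The main obstacle is precisely the sign ambiguity in $W$: the usual per-iteration descent lemma is unavailable, so the negative weights must be absorbed by aggregating against $W_{i-1,k}[J,h]$ over the full $\Delta$-block \emph{before} invoking convexity. All perturbation terms produced by the re-centering must then be controlled uniformly in time so as not to spoil summability of the $\alpha_k^2$-weighted residuals later on; uniformity in time is exactly what the BUC guarantees, and it is what makes the subsequent Robbins--Siegmund argument (Lemma~\ref{Lem:RobSiegConv}) go through once $\gamma_k$ is shown to be summable.
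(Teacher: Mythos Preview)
Your proposal is correct and follows essentially the same strategy as the paper: telescope the $\Delta$ PGD steps, absorb the consensus via double-stochasticity, re-center the cross term at $(\bar x_{0,k},\,g_h(\bar x_{0,k}))$ so that the SLC collapses the main piece to $M\langle \nabla f(\bar x_{0,k}),\,\bar x_{0,k}-y\rangle$, and then control the residuals via the BUC, Lipschitzness, and the Jensen/$2ab\le a^2+b^2$ conversion of $\|\bar x_{0,k}-y\|$ into $\eta_k^2/S$. The paper organizes the residuals into four pieces $T_1,\dots,T_4$ (splitting $x^J_{i-1,k}-x^J_{0,k}$ through explicit projection-error terms $D^J_{i-1,k},E^J_{i-1,k}$) rather than your three, and uses a Kronecker-product eigenvalue argument for the consensus contraction instead of your direct Jensen step, but the bookkeeping is equivalent.
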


\begin{proof}
Every server accepts gradients from its clients and updates states based on projected gradient descent algorithm,  Eq.~\eqref{Eq:PGDE}, where $\mathcal{P}_\mathcal{X}$ is the projection operator,
\begin{align}
    x^J_{i,k} = \mathcal{P}_{\mathcal{X}}\left[ x^J_{i-1,k} - \alpha_k \sum_{h=1}^C W_{i-1,k}[J,h] \; g_h(x^J_{i-1,k})\right]. && \ldots  J = \{1, 2, \ldots, S\} \label{Eq:E1}
\end{align}

\noindent Using the non-expansive property of the projection operator ($\mathcal{X}$ is a non-empty closed, convex set) on the update equation Eq.~\eqref{Eq:E1}, $\forall \ J = 1, 2, \ldots, S \text{ and } y \in \mathcal{X}$,
\begin{align}
    \|x^J_{i,k} - y\|^2 &= \|\mathcal{P}_{\mathcal{X}}\left[ x^J_{i-1,k} - \alpha_k \sum_{h=1}^C W_{i-1,k}[J,h] \; g_h(x^J_{i-1,k})\right] - y\|^2, && \ldots  \mathcal{P}_{\mathcal{X}}[y] = y \\
    &\leq \| x^J_{i-1,k} - \alpha_k \sum_{h=1}^C W_{i-1,k}[J,h] \; g_h(x^J_{i-1,k}) - y\|^2. \label{Eq:E2} && \ldots \text{Eq.~\ref{Eq:EUCPROJINEQUAL}}
\end{align}
\noindent Note that the norm used for all vectors and matrices in this report are Euclidean norms or 2-norms (see Section~\ref{Sec:Notation}). We add the inequality in Eq.~\eqref{Eq:E2} for all servers (all $J$),
\begin{align}
    \sum_{J=1}^S \|x^J_{i,k} - y\|^2 &\leq \sum_{J=1}^S \| x^J_{i-1,k} - \alpha_k \sum_{h=1}^C W_{i-1,k}[J,h] \; g_h(x^J_{i-1,k}) - y\|^2, \\
    &\leq \sum_{J=1}^S \|x^J_{i-1,k} - y\|^2 + \alpha_k^2 \sum_{J=1}^S \norm{\sum_{h=1}^C W_{i-1,k}[J,h] \ g_h(x^J_{i-1,k})}^2 \nonumber \\
    & \qquad \qquad \qquad - 2 \alpha_k \sum_{J=1}^S \left(\sum_{h=1}^C W_{i-1,k}[J,h] \ g_h(x^J_{i-1,k})\right)^T \left(x^J_{i-1,k} - y \right).\label{Eq:E3}
\end{align}

\noindent Adding inequality in Eq.~\eqref{Eq:E3} for iterations between consensus steps, i.e. adding for $i = 1, 2, \ldots, \Delta$, and cancelling telescoping terms we get, 
\begin{align}
    \sum_{J=1}^S \|x^J_{\Delta,k} - y\|^2 &\leq \sum_{J=1}^S \|x^J_{0,k} - y\|^2 + \alpha_k^2 C_1^2 \nonumber \\
    & \qquad - 2 \alpha_k \sum_{i=1}^{\Delta} \sum_{J=1}^S \left( \left(\sum_{h=1}^C W_{i-1,k}[J,h] \ g_h(x^J_{i-1,k})\right)^T \left(x^J_{i-1,k} - y \right) \right), \label{Eq:E4}
\end{align}
where, $C_1^2 = \sum_{i=1}^{ \Delta}\sum_{J=1}^S \left( \sum_{h=1}^C | W_{i-1,k}[J,h] | \  L_h \right)^2$. \\

We now use consensus relationship between iterates at $\{\Delta,k\}$ and $\{0,k+1\}$. We start by stacking the state vector for all servers component/coordinate wise. We use $\tilde{.}$ notation to denote a vector that is stacked by its coordinates (see Eq.~\ref{Eq:TildeVecDef} for definition). $\tilde{y}$ denotes $S$ copies of $y$ vector stacked coordinate wise. We know that in D-dimension the consensus step can be written using Kronecker products \cite{fax2004information}. 
\begin{align}
    \tilde{x}_{0,k+1} &= (I_D \otimes B_k) \tilde{x}_{\Delta,k}. \label{Eq:E5} && \ldots \text{Consensus Step} \\
    \tilde{x}_{0,k+1} - \tilde{y} &= (I_D \otimes B_k) (\tilde{x}_{\Delta,k} - \tilde{y}).  && \ldots \text{Subtracting $\tilde{y}$ in Eq.\eqref{Eq:E5} and } (I_D \otimes B_k)\tilde{y} = \tilde{y}
\end{align}

\noindent Taking norm of both sides of the equality (2-norm),
\begin{align}
    \|\tilde{x}_{0,k+1} - \tilde{y}\|_2^2 &= \|(I_D \otimes B_k) (\tilde{x}_{\Delta,k} - \tilde{y})\|_2^2, && \ldots \text{Norms of equal vectors are equal}  \\
    &\leq \|(I_D \otimes B_k)\|_2^2 \|(\tilde{x}_{\Delta,k} - \tilde{y})\|_2^2. && \ldots \text{$\|Ax\|_2 \leq \|A\|_2 \|x\|_2$} \label{Eq:Eigenvallt1}
\end{align}

\noindent We use the following property of eigenvalues of Kronecker product of matrices. If $A$ ($m$ eigenvalues given by $\lambda_i$, with $i = 1, 2, \ldots, m$) and $B$ ($n$ eigenvalues given by $\mu_j$, with $j = 1, 2, \ldots, n$) are two matrices then the eigenvalues of the Kronecker product $A \otimes B$ are given by $\lambda_i \mu_j$ for all $i$ and $j$ ($mn$ eigenvalues). Hence, the eigenvalues of $I_D \otimes B_k$ are essentially $D$ copies of eigenvalues of $B_k$. Since $B_k$ is a doubly stochastic matrix, its eigenvalues are upper bounded by 1. Clearly, $\|(I_D \otimes B_k)\|_2^2 = \lambda_{\max}((I_D \otimes B_k)^\dagger(I_D \otimes B_k)) \leq 1$. This follows from the fact that $(I_D \otimes B_k)^\dagger(I_D \otimes B_k)$ is a doubly stochastic matrix since product of two doubly stochastic matrices is also doubly stochastic. Note that if $B_k$ is only row stochastic, $\lambda_{\max}((I_D \otimes B_k)^\dagger(I_D \otimes B_k))$ may be greater than 1. Hence, we need the transition matrix $B_k$ to be doubly stochastic. From, Eq.~\ref{Eq:Eigenvallt1} and the fact that $\|(I_D \otimes B_k)\|_2^2 \leq 1$ we get \footnote{Inequality in Eq.~\ref{Eq:ConsBoundktok1} can also be proved using first principles and doubly stochastic nature of $B_k$, without using eigenvalue arguments, see Appendix~\ref{sec:Apx0} for proof.},
\begin{align}
    \|\tilde{x}_{0,k+1} - \tilde{y}\|_2^2 &\leq \|(\tilde{x}_{\Delta,k} - \tilde{y})\|_2^2. \label{Eq:ConsBoundktok1}
\end{align} 

\noindent Furthermore, the square of the norm of a stacked vector is equal to sum of the square of the norms of all servers. 
\begin{align}
    \sum_{J=1}^S \|x^J_{0,k+1} - y\|^2 = \|\tilde{x}_{0,k+1} - \tilde{y}\|_2^2 \leq \|(\tilde{x}_{\Delta,k} - \tilde{y})\|_2^2 = \sum_{J=1}^S \|x^J_{\Delta,k} - y\|^2. \label{Eq:E6}
\end{align}

\noindent Combining Eq.~\eqref{Eq:E4} and \eqref{Eq:E6}, we get the following relationship, 
\begin{align}
    \sum_{J=1}^S \|x^J_{0,k+1} - y\|^2  &\leq \sum_{J=1}^S \|x^J_{0,k} - y\|^2 + \alpha_k^2 C_1^2 \nonumber \\
    & \qquad \qquad \underbrace{- 2 \alpha_k \sum_{i=1}^{\Delta} \sum_{J=1}^S \left[ \left(\sum_{h=1}^C W_{i-1,k}[J,h] \ g_h(x^J_{i-1,k})\right)^T \left(x^J_{i-1,k} - y \right) \right]}_{\Lambda}. \label{Eq:E9}
\end{align}

\noindent We use the term $\eta_k^2$ to represents the RMS like error term. 
\begin{equation}
\eta^2_{k} = \sum_{J=1}^S \|x^J_{0,k} - y\|^2.
\label{Eq:RMSErrorDef}
\end{equation}

We will construct a few reltions and establish a few bounds before we present a bound on the term $\Lambda$. We start by writing an expression for $x^J_{i,k}$,
\begin{equation}
x^J_{i,k} = x^J_{i-1,k} - \alpha_k \left[ \sum_{h=1}^C W_{i-1,k} [J,h] g_h(x^J_{i-1,k})\right] + e^J_{i-1}, 
\label{Eq:ProjErrBound}
\end{equation}
where $e^J_{i-1}$ is the difference term between a gradient descent and its projection at $i^{th}$ iteration. Recursively performing this unrolling operation to relate the iterates at time $\{i,k\}$ to $\{0,k\}$ we get,
$$ x^J_{i,k} = x^J_{0,k} - \alpha_k \sum_{t = 1}^{i} \left[ \sum_{h=1}^C W_{t-1,k} [J,h] g_h(x^J_{t-1,k})\right] + \sum_{t = 1}^{i} e^J_{t-1}.$$

The sequence of iterates belongs to the decision set ($\mathcal{X}$) i.e. $x^J_{i,k} \in \mathcal{X}$. Hence the distance of the iterates from the decision set is zero, i.e. $dist(x^J_{i,k},\mathcal{X}) = dist(x^J_{i-1,k},\mathcal{X}) = 0$. Clearly from Eq.~\ref{Eq:ProjErrBound}, the projection error is upper bounded by the deviation due to the gradient descent step, i.e. \begin{equation}
\|e^J_{i-1}\| \leq \alpha_k \|\sum_{h=1}^C W_{i-1,k} [J,h] g_h(x^J_{i-1,k}) \|.
\label{Eq:ProjErrBoundNew}
\end{equation}

We hence arrive at the following expression, that will be used later in the analysis. Note the expressions for terms $\alpha_k D^J_{i-1,k}$ and $\alpha_k E^J_{i-1,k}$ and their bounds, they will be used later.
\begin{align}
 y - x^J_{i-1,k} &= y - x^J_{0,k} + \underbrace{\alpha_k \sum_{t = 1}^{i-1} \left[ \sum_{h=1}^C W_{t-1,k} [J,h] g_h(x^J_{t-1,k})\right]}_{\alpha_k D^J_{i-1,k}} - \underbrace{\sum_{t = 1}^{i-1} e^J_{t-1}}_{\alpha_k E^J_{i-1,k}} \nonumber \\
 &\triangleq y - x^J_{0,k} + \alpha_k ( D^J_{i-1,k} - E^J_{i-1,k}). \label{Eq:EstErrEqn-fc}
 \end{align}
We obtain bounds on both $\sum_{J=1}^S D^J_{i-1,k}$ and $\sum_{J=1}^S E^J_{i-1,k}$ using the property $\| \sum (.) \| \leq \sum \| (.) \|$ successively,
\begin{align}
\sum_{J=1}^S \|E^J_{i-1,k} \| &= \sum_{J=1}^S \| \frac{1}{\alpha_k} \sum_{t = 1}^{i-1} e^J_{t-1}\| \leq  \frac{1}{\alpha_k} \sum_{J=1}^S \sum_{t = 1}^{i-1} \|e^J_{t-1}\| \leq \frac{1}{\alpha_k} \sum_{J=1}^S \sum_{t = 1}^{i-1} \alpha_k\| \sum_{h = 1}^C W_{i-1,k}[J, h] g_h(x^J_{i-1,k}) \| \nonumber \\
&\leq \sum_{J=1}^S \sum_{t = 1}^{i-1}  \sum_{h = 1}^C | W_{i-1,k}[J, h] | \ \|g_h(x^J_{i-1,k}) \| \leq \sum_{J=1}^S \sum_{t = 1}^{ \Delta}  \sum_{h = 1}^C | W_{i-1,k}[J, h] | \ \|g_h(x^J_{i-1,k}) \| \nonumber \\
&\leq \bar{M} \SB{L}. \quad \quad \ldots \text{Replacing $i-1$ with $ \Delta$, adding positive terms to the bound.}  \label{Eq:EBound} \\
\sum_{J=1}^S\|D^J_{i-1,k}\| & = \sum_{J=1}^S \| \sum_{t = 1}^{i-1} \left[ \sum_{h=1}^C W_{t-1,k} [J,h] g_h(x^J_{t-1,k})\right] \| \leq \sum_{J=1}^S \sum_{t = 1}^{i-1} \left[ \sum_{h=1}^C | W_{t-1,k} [J,h] | \ \|g_h(x^J_{t-1,k})\|\right] \nonumber \\
& \leq \sum_{J=1}^S \sum_{t = 1}^{ \Delta} \left[ \sum_{h=1}^C | W_{t-1,k} [J,h] | \ \|g_h(x^J_{t-1,k})\|\right] \leq \bar{M} \SB{L}. \label{Eq:DBound}
\end{align} 
Quite naturally we also get, $\|D^K_{i-1,k}\| \leq \sum_{J=1}^S \|D^J_{i-1,k}\|$ and $\|E^K_{i-1,k}\| \leq \sum_{J=1}^S \|E^J_{i-1,k}\|$ for all $K$. \\

The gradient is Lipschitz continuous, hence from Assumption~\ref{Asmp:GradLip} we have, $\| g_h(\bar{x}_{0,k}) - g_h(x^J_{i-1,k})\| \leq N_h \| \bar{x}_{0,k} - x^J_{i-1,k}\|$. Hence we can express the gradient at point $x^J_{i-1,k}$ as the gradient at point $\bar{x}_{0,k}$ plus a norm bounded vector,
\begin{align} 
g_h(x^J_{i-1,k}) =  g_h(\bar{x}_{0,k}) + l^J_{h,i-1}, 
\label{Eq:UnrollGradFirst}
\end{align}
where, the vector $l^J_{h,i-1}$ is such that $\|l^J_{h,i-1}\| \leq N_h \| x^J_{i-1,k} - \bar{x}_{0,k}\|$. We rewrite this bound using triangle inequality, $\|l^J_{h,i-1}\| \leq N_h \| x^J_{i-1,k} - x^J_{0,k} + x^J_{0,k} - \bar{x}_{0,k}\| \leq N_h \|x^J_{i-1,k} - x^J_{0,k}\| + N_h \|x^J_{0,k} - \bar{x}_{0,k}\|$. We further bound $\| x^J_{i-1,k} - x^J_{0,k} \|$ using Eq.~\ref{Eq:EstErrEqn-fc} and further use triangle inequality to obtain bound on $l^J_{h,i-1}$ itself. 
\begin{align*}
    \| x^J_{i-1,k} - x^J_{0,k} \| &= \alpha_k \| D^J_{i-1,k} - E^J_{i-1,k}\| \leq \alpha_k (\| D^J_{i-1,k}\| + \|E_{i-1,k}\|) \leq 2 \alpha_k \bar{M} \SB{L}, \\
    \| l^J_{h,i-1} \| &\leq N_h \| x^J_{i-1,k} - x^J_{0,k}\| + N_h \|x^J_{0,k} - \bar{x}_{0,k}\| \leq 2 \alpha_k \bar{M} N_h \SB{L} + N_h \| \delta^J_k\|,
\end{align*}
where $\delta^J_k \triangleq x^J_{0,k} - \bar{x}_{0,k}$ defines (Eq.~\ref{Eq:deltaDef2}, \ref{Eq:deltaDef3}) the misalignment of iterate (of server $J$) with the average of the iterates. We now bound the largest magnitude of $l^J_{h,i}$ vector,
\begin{equation}
\max_{J,h} \{ \|l^J_{h,i}\| \} \leq \max_{J,h} \{2 \alpha_k \bar{M} N_h \SB{L} + N_h \| \delta^J_k\|\} \leq 2 \alpha_k \bar{M} \SB{N} \SB{L} + \SB{N} \max_J \{\| \delta^J_k\|\}. \label{Eq:MaxLBound}
\end{equation}

We now begin to construct a bound on the term $\Lambda$,
\begin{align}
    \Lambda &= - 2 \alpha_k \sum_{i=1}^{\Delta} \sum_{J=1}^S \left[ \left(\sum_{h=1}^C W_{i-1,k}[J,h] \ g_h(x^J_{i-1,k})\right)^T \left(x^J_{i-1,k} - y \right) \right] \nonumber \\
    & \text{Absorbing the negative sign in $\left(x^J_{i-1,k} - y \right)$ and using Eq.~\eqref{Eq:EstErrEqn-fc} and adding-subtracting $\bar{x}_{0,k}$,} \nonumber \\
    &= 2 \alpha_k \sum_{i=1}^{\Delta} \sum_{J=1}^S \left[ \left(\sum_{h=1}^C W_{i-1,k}[J,h] \ g_h(x^J_{i-1,k})\right)^T \left( y - \bar{x}_{0,k} + \bar{x}_{0,k} - x^J_{0,k} + \alpha_k (D^J_{i-1,k} - E^J_{i-1,k}) \right) \right] \\
    & \text{We now unroll the gradient (using Gradient Lipschitzness) from Eq.~\eqref{Eq:UnrollGradFirst}, to get,} \nonumber \\
    &= 2 \alpha_k \sum_{i=1}^{\Delta} \sum_{J=1}^S \left[ \left(\sum_{h=1}^C W_{i-1,k}[J,h] \ \left(g_h(\bar{x}_{0,k}) + l^J_{h,i-1} \right) \right)^T \left( y - \bar{x}_{0,k} + \bar{x}_{0,k} - x^J_{0,k} + \alpha_k (D^J_{i-1,k} - E^J_{i-1,k}) \right) \right] \\
    & \text{Rearranging terms we get,} \nonumber \\
    &\leq \underbrace{2 \alpha_k \sum_{i=1}^{\Delta} \sum_{J=1}^S \left[ \left( \sum_{h=1}^C W_{i-1,k}[J,h] \ g_h(\bar{x}_{0,k})\right)^T (y - \bar{x}_{0,k}) \right]}_{T1} \nonumber \\
    & \quad + \underbrace{2 \alpha_k \sum_{i=1}^{\Delta} \sum_{J=1}^S \left[ \left( \sum_{h=1}^C W_{i-1,k}[J,h] \ l^J_{h,i-1}\right)^T (y - \bar{x}_{0,k}) \right]}_{T2} \nonumber \\
    & \quad + \underbrace{2 \alpha_k \sum_{i=1}^{\Delta} \sum_{J=1}^S \left[ \left(\sum_{h=1}^C W_{i-1,k}[J,h] \ \left(g_h(\bar{x}_{0,k}) + l^J_{h,i-1} \right) \right)^T (\bar{x}_{0,k} - x^J_{0,k}) \right]}_{T3} \nonumber \\
    & \quad + \underbrace{2 \alpha_k^2 \sum_{i=1}^{\Delta} \sum_{J=1}^S \left[ \left(\sum_{h=1}^C W_{i-1,k}[J,h] \ \left(g_h(\bar{x}_{0,k}) + l^J_{h,i-1} \right) \right)^T (D^J_{i-1,k} - E^J_{i-1,k}) \right]}_{T4}. \label{Eq:BoundonLambdaInterim1}
\end{align}

\noindent The first term ($T_1$), in the above expression, is upper bounded by using the gradient/subgradient inequality for convex functions. We use the independence of $g_h(\bar{x}_{0,k})$ term to $i,\ J$ and independence of $(y - \bar{x}_{0,k})$ to $i,\ J,\ h$, to rearrange the order of summation in $T_1$.
\begin{align}
T_1 &= 2 \alpha_k  \sum_{J=1}^S \left[ \left( \sum_{h=1}^C \left[\sum_{i=1}^{\Delta} W_{i-1,k}[J,h] \right]\ g_h(\bar{x}_{0,k})\right)^T (y - \bar{x}_{0,k}) \right] \nonumber \\
&= 2 \alpha_k  \sum_{h=1}^C\left[ \left( \left[\sum_{J=1}^S \sum_{i=1}^{\Delta} W_{i-1,k}[J,h] \right]\ g_h(\bar{x}_{0,k})\right)^T (y - \bar{x}_{0,k}) \right] \nonumber \\
&= 2 \alpha_k \left[ M \sum_{h=1}^C g_h(\bar{x}_{0,k}) \right]^T (y - \bar{x}_{0,k}) = 2 \alpha_k \left(g(\bar{x}_{0,k})\right)^T (y - \bar{x}_{0,k}) \nonumber \\
&\leq - 2 \alpha_k M (f(\bar{x}_{0,k}) - f(y)). \nonumber 
\end{align}

\noindent We bound the second term ($T_2$), third term ($T_3$) and the fourth term ($T_4$) by replacing it with its norm (strengthening the inequality). We also extensively use the property $\norm{\sum_i a_i} \leq \sum_i \norm{a_i}$ for some $a_i$.
\begin{align}
T_2 &\leq \|T_2\| = 2 \alpha_k \norm{\sum_{i=1}^{\Delta} \sum_{J=1}^S \left[ \left( \sum_{h=1}^C W_{i-1,k}[J,h] \ l^J_{h,i-1}\right)^T (y - \bar{x}_{0,k}) \right]} \nonumber \\
&\leq 2 \alpha_k \sum_{h=1}^C \left[ \left( \sum_{J=1}^S \sum_{i=1}^{\Delta} | W_{i-1,k}[J,h] | \right) \right] \max_{J,h} \|l^J_{h,i-1}\| \|(y - \bar{x}_{0,k})\| \nonumber \\
& \leq 2 \alpha_k \bar{M} (\max_{J,h} \|l^J_{h,i-1}\|) \|\bar{x}_{0,k} - y\| \nonumber \\
&\leq 2 \alpha_k \bar{M} (2 \alpha_k \SB{N} \bar{M} \SB{L} + \SB{N} \max_J \{\| \delta^J_k\|\}) \|\bar{x}_{0,k}-y\|. && \ldots \text{Eq.~\ref{Eq:MaxLBound}} \nonumber \\
T_3 &\leq \|T_3\| = 2 \alpha_k \norm{\sum_{i=1}^{\Delta} \sum_{J=1}^S \left[ \left(\sum_{h=1}^C W_{i-1,k}[J,h] \ \left(g_h(\bar{x}_{0,k}) + l^J_{h,i} \right) \right)^T (\bar{x}_{0,k} - x^J_{0,k}) \right]} \nonumber \\
&\leq 2 \alpha_k \left(\sum_{h=1}^C \left[ \left( \sum_{J=1}^S \sum_{i=1}^{\Delta} | W_{i-1,k}[J,h] | \right) \right] \right) \SB{L} \|\bar{x}_{0,k} - x^J_{0,k}\|  \nonumber \\
&\leq 2 \alpha_k \bar{M} \SB{L} \max_J{\|\delta^J_{0,k}\|}. \nonumber \\
T_4 &\leq \|T_4\| = 2 \alpha_k^2 \norm{\sum_{i=1}^{\Delta} \sum_{J=1}^S \left[ \left(\sum_{h=1}^C W_{i-1,k}[J,h] \ \left(g_h(\bar{x}_{0,k}) + l^J_{h,i-1} \right) \right)^T (D^J_{i-1,k} - E^J_{i-1,k}) \right]} \nonumber \\
&\leq 2 \alpha_k^2 \sum_{h=1}^C \left[ \sum_{J=1}^S \sum_{i=1}^\Delta |W_{i-1,k}[J,h]| \right] \|g_h(x^J_{i-1,k})\| (\|D^J_{i-1,k}\| + \|E^J_{i-1,k}\|)\nonumber \\
&\leq 4 \alpha_k^2 (\bar{M}\SB{L})^2. \nonumber
\end{align} \\
Combining the above bounds (on $T_1$, $T_2$, $T_3$ and $T_4$) with the bound on $\Lambda$ (Eq.~\ref{Eq:BoundonLambdaInterim1}), 
\begin{align}
\Lambda &\leq - 2 \alpha_k M (f(\bar{x}_{0,k}) - f(y)) + 2 \alpha_k \bar{M} (2 \alpha_k \SB{N} \bar{M} \SB{L} + \SB{N} \max_J \{\| \delta^J_k\|\}) \|\bar{x}_{0,k}-y\| \nonumber \\
& \qquad + 2 \alpha_k \bar{M} \SB{L} \max_J{\|\delta^J_{0,k}\|} + 4 \alpha_k^2 (\bar{M}\SB{L})^2. \label{Eq:E7}
\end{align} 
    
\noindent We now try to rewrite (new bound) the second term in Eq.~\eqref{Eq:E7} (bound on $T_2$). We begin with definition of $\bar{x}_{0,k}$, followed by application of Jensen's inequality,\footnote{If $x_1, x_2, \ldots, x_n$ are $n$ non-negative real numbers, and $m, n$ are positive integers then, $\left(\frac{x_1+x_2+\ldots+x_n}{n}\right)^m \leq \left(\frac{x_1^m+x_2^m+\ldots+x_n^m}{n}\right)$. In this case, $m = 2$ and $n = S$.}
\begin{align}
    \bar{x}_{0,k} - y &= \left( \frac{1}{S} \sum_{J=1}^S x^J_{0,k} \right) - y = \frac{1}{S} \sum_{J=1}^S (x^J_{0,k} - y), \label{Eq:DEFAVGITER} && \ldots \text{Eq.~\ref{Eq:deltaDef}} \\
    \|\bar{x}_{0,k} - y\|^2 &= \norm{\frac{1}{S} \sum_{J=1}^S (x^J_{0,k} - y)}^2 \leq \left(\frac{1}{S} \sum_{J=1}^S \|(x^J_{0,k} - y)\|\right)^2 && \ldots \text{Eq.~\ref{Eq:DEFAVGITER} and $\triangle$ Inequality} \nonumber \\
    &\leq \frac{1}{S} \sum_{J=1}^S \|(x^J_{0,k} - y)\|^2 = \frac{1}{S} \eta_k^2. && \ldots \text{Jensen's Inequality} \label{Eq:BoundonNormSquarexy}
\end{align}
Using the property that $2\|a\| \leq 1 + \|a\|^2$ where we consider $\|a\| = \|\bar{x}_{0,k} - y\|$, followed by using the bound derived in Eq.~\ref{Eq:BoundonNormSquarexy} we can rewrite the bound in Eq.~\eqref{Eq:E7} as,
\begin{align}
    \Lambda &\leq - 2 \alpha_k M (f(\bar{x}_{0,k}) - f(y)) + \alpha_k \bar{M} (2 \alpha_k \SB{N} \bar{M} \SB{L} + \SB{N} \max_J \{\| \delta^J_k\|\}) (1 + \frac{1}{S}\eta_k^2) \nonumber \\
    & \qquad + 2 \alpha_k \bar{M} \SB{L} \max_J{\|\delta^J_{0,k}\|} + 4 \alpha_k^2 (\bar{M}\SB{L})^2.
\end{align}\\

\noindent Merging the above bound on $\Lambda$ with Eq.~\eqref{Eq:E9}, we get,
\begin{align}
    \sum_{J=1}^S \|x^J_{0,k+1} - y\|^2  &\leq \left( 1+ \underbrace{2 \alpha_k^2 \frac{1}{S} \bar{M}^2 \SB{N} \SB{L} + \alpha_k \frac{1}{S} \bar{M}\SB{N} \max_J \|\delta^J_{k}\|}_{\gamma_k} \right) \underbrace{\sum_{J=1}^S \|x^J_{0,k} - y\|^2}_{\eta_k^2} \nonumber \\
    & \quad - 2 \alpha_k M (f(\bar{x}_{0,k}) - f(y)) + \alpha_k \underbrace{\left(2 \bar{M} \SB{L} + \bar{M}\SB{N} \right) }_{C_4}\max_J{\|\delta^J_{k}\|} \nonumber \\
    & \quad + \alpha_k^2 \underbrace{\left(C_1^2 + 4(\bar{M}\SB{L})^2 + 2 \bar{M}^2 \SB{N} \SB{L}\right)}_{C_1^2 + C_2^2 + C_3^2}.
\end{align}
$\hfill \blacksquare$
\end{proof}

\begin{lemma} \label{Lem:deltaJbound}
Let iterates be generated by interleaved consensus and projected gradient descent algorithm (Algorithms~\ref{Algo:ASLearn} and \ref{Algo:ASLearn2}), and Assumptions~\ref{Asmp:Function}, \ref{Asmp:Set}, \ref{Asmp:SubBound}, \ref{Asmp:GradLip}, \ref{Asmp:QConn} and  \ref{Asmp:SerConn} hold, then there exists constant $\nu<1$, such that the following bound on the maximum (over $J$) disagreement between iterates at servers and the average iterate given by $\delta^J_k$ (Eq.~\ref{Eq:deltaDef}, \ref{Eq:deltaDef2} and \ref{Eq:deltaDef3}) holds,
$$\max_J\{\|\delta^J_{k+1}\|\} \leq \frac{S-1}{S} \left(\nu^{k+1} \max_{P,Q} \left( \|x^P_{0,0} - x^Q_{0,0}\| \right) + 4 \bar{M}\SB{L} \sum_{i = 1}^k \left( \alpha_i \nu^{k-i+1} \right) \right).$$
\end{lemma}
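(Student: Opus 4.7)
The plan is to derive a one-step recursion of the form $\max_J\|\delta^J_{k+1}\|\le \nu\,\max_I\|\delta^I_k\|+c\,\alpha_k$, and then unroll it. Two ingredients are needed: (i) a bound on the drift produced by the $\Delta$ gradient steps between two consecutive consensus rounds, and (ii) a contraction bound for the consensus step itself.

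First I would observe that since $B_k$ is doubly stochastic, the mean is preserved across a consensus round: $\bar{x}_{0,k+1}=\frac{1}{S}\sum_J\sum_I B_k[J,I]\,x^I_{\Delta,k}=\bar{x}_{\Delta,k}$. Substituting this into the consensus update (\ref{Eq:ConsensusUpdateRelation1}) gives
\[
\delta^J_{k+1} \;=\; \sum_I B_k[J,I]\bigl(x^I_{\Delta,k}-\bar{x}_{\Delta,k}\bigr)
\;=\; \sum_I B_k[J,I]\,\delta^I_k \;+\; \sum_I B_k[J,I]\,\xi^I_k,
\]
where $\xi^I_k := (x^I_{\Delta,k}-x^I_{0,k}) - \tfrac{1}{S}\sum_Q(x^Q_{\Delta,k}-x^Q_{0,k})$ is the demeaned $\Delta$-step drift.

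Next I would bound the drift. The computation already used inside the proof of Lemma~\ref{Lem:IterateConvAnyW} yields the per-server bound $\|x^I_{\Delta,k}-x^I_{0,k}\|\le 2\alpha_k\bar{M}\SB{L}$. Writing
\[
\xi^I_k \;=\; \frac{1}{S}\sum_{Q\neq I}\Bigl[(x^I_{\Delta,k}-x^I_{0,k}) - (x^Q_{\Delta,k}-x^Q_{0,k})\Bigr]
\]
(the $Q=I$ summand vanishes) and using the triangle inequality gives $\|\xi^I_k\|\le \tfrac{S-1}{S}\cdot 4\alpha_k\bar{M}\SB{L}$, which is exactly where the $\tfrac{S-1}{S}$ factor in the drift term arises. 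For the consensus part, I would invoke the standard averaging-dynamics contraction: because $\sum_I\delta^I_k=0$ and $B_k$ is doubly stochastic with entries bounded below by $\kappa$ on the edges of a connected graph (Assumption~\ref{Asmp:SerConn}), there exists $\nu\in(0,1)$, uniform over $k$, such that $\max_J\|\sum_I B_k[J,I]\,\delta^I_k\|\le \nu\,\max_I\|\delta^I_k\|$. Combined with $\sum_I B_k[J,I]=1$ on the drift term, this yields the one-step recursion
\[
\max_J\|\delta^J_{k+1}\| \;\le\; \nu\,\max_I\|\delta^I_k\| \;+\; \tfrac{S-1}{S}\cdot 4\alpha_k\bar{M}\SB{L}.
\]

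Unrolling this recursion from $k=0$ and plugging in the initial-condition bound $\max_J\|\delta^J_0\|\le \tfrac{S-1}{S}\max_{P,Q}\|x^P_{0,0}-x^Q_{0,0}\|$ (obtained from $\delta^J_0=\tfrac{1}{S}\sum_{Q\neq J}(x^J_{0,0}-x^Q_{0,0})$) then produces, after factoring out $\tfrac{S-1}{S}$, the claimed bound, with the geometric convolution of the past step sizes $\alpha_i$ against the contraction factor $\nu$. The genuinely delicate step is establishing the uniform constant $\nu<1$: while $\nu$ is naturally the second-largest singular value of $B_k$ when one works in the stacked $\ell_2$ norm (using the Kronecker-product computation from Lemma~\ref{Lem:IterateConvAnyW}), translating that contraction into the per-server max-norm statement used above requires some care to avoid a spurious $\sqrt{S}$ factor. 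This can be handled either by a coordinate-wise spectral argument followed by taking maxima, or by exploiting the fact that the family of admissible doubly stochastic $B_k$ (with $\kappa$-lower-bounded entries on connected graphs on $S$ vertices) is finite, and therefore admits a uniform contraction rate.
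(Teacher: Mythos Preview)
Your approach differs from the paper's in a structural way, and the difference is exactly where the gap lies. The paper does \emph{not} set up a recursion on $\max_J\|\delta^J_k\|$; it sets one up on the pairwise \emph{diameter} $\max_{P,Q}\|x^P_{0,k}-x^Q_{0,k}\|$. Writing $x^I_{0,k+1}-x^G_{0,k+1}=\sum_J(B_k[I,J]-B_k[G,J])\,x^J_{\Delta,k}$ and regrouping the positive and negative coefficients, the paper obtains the Dobrushin-type bound $\max_{I,G}\|x^I_{0,k+1}-x^G_{0,k+1}\|\le \nu\,\max_{P,Q}\|x^P_{\Delta,k}-x^Q_{\Delta,k}\|$. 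Crucially, to get $\nu<1$ the paper introduces an \emph{additional} assumption inside the proof (Assumption~\ref{Asmp:Scrambling}: $B_k$ is scrambling), together with the finiteness of graph topologies on $S$ nodes. It then bounds the drift over $\Delta$ steps by $4\alpha_k\bar M\SB{L}$ via Eq.~(\ref{Eq:EstErrEqn-fc}), unrolls the diameter recursion, and only at the very last line converts to the deviation via $\max_J\|\delta^J_{k+1}\|\le \tfrac{S-1}{S}\max_{I,G}\|x^I_{0,k+1}-x^G_{0,k+1}\|$.

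Your key step, by contrast, is the claim $\max_J\|\sum_I B_k[J,I]\,\delta^I_k\|\le \nu\,\max_I\|\delta^I_k\|$ for zero-sum $\delta$. This is \emph{not} the Dobrushin contraction (which controls the oscillation $\max-\min$, equivalently the diameter, not the max of absolute values of a zero-mean vector), and it is false under the hypotheses you invoke. Take $S=4$, the path graph $1\!-\!2\!-\!3\!-\!4$, and the doubly stochastic
\[
B_k=\begin{pmatrix}0.8&0.2&0&0\\0.2&0.6&0.2&0\\0&0.2&0.6&0.2\\0&0&0.2&0.8\end{pmatrix},\qquad \delta=(1,1,-1,-1)^T.
\]
This $B_k$ is connected with $\kappa=0.2$ on every edge, yet $(B_k\delta)_1=1$ and $(B_k\delta)_4=-1$, so $\max_J|(B_k\delta)_J|=\max_I|\delta_I|$: no contraction. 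Your proposed fixes do not rescue this. The family of admissible $B_k$ is not finite (the topologies are, but the matrices form a continuum), and even compactness would not help because the example shows the max-norm rate attains~$1$. The $\ell_2$/spectral route does give a strict contraction for connected $B_k$, but translating back to the per-server max norm costs exactly the $\sqrt{S}$ you wished to avoid. The cleanest repair is to mimic the paper: run the recursion on the diameter (where the scrambling assumption yields the classical ergodicity-coefficient contraction), and apply the $\tfrac{S-1}{S}$ conversion to $\delta^J$ only once, at the end.
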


\begin{proof}
We use Kronecker product to write consensus step as shown in Eq.~\ref{Eq:E5}. This step is equivalent to the following form of representing the consensus step (Eq.~\ref{Eq:ConsensusUpdateRelation1}),
\begin{equation}
x^I_{0,k+1} = \sum_{J=1}^S B_k [I,J] x^J_{\Delta,k},
\end{equation}
where $x^I_{0,k+1}$ and $x^J_{\Delta,k}$ represent the parameter vector at servers $I$ and $J$ at time instant $\{0,k+1\}$ and $\{\Delta,k\}$ respectively while $B_k [I,J]$ represents the $I^{th}$ row and $J^{th}$ column entry of transition matrix $B_k$. We can then write the difference between parameter vectors at server $I$ and $G$ as,
\begin{equation}
x^I_{0,k+1} - x^G_{0,k+1} = \sum_{J=1}^S \left( B_k[I,J] - B_k[G,J] \right) x^J_{\Delta,k}. \label{Eq:ROWSTOC}
\end{equation}
Since $B_k$ is doubly stochastic, clearly the coefficients of states in Eq.~\ref{Eq:ROWSTOC} add up to zero (i.e. $\sum_{J=1}^S ( B_k[I,J] - B_k[G,J] ) = 0$). Collecting all positive coefficients and negative coefficients and rearranging we get the following equation,
\begin{align}
x^I_{0,k+1} - x^G_{0,k+1} = \sum_{P,Q} \eta_{P,Q} (x^P_{\Delta,k} - x^Q_{\Delta,k}),  && \ldots \forall \; I, G 
\label{Eq:REARRANG}
\end{align}
where, $\eta_{P,Q} \geq 0$ is the weight associated to servers $P$ and $Q$ and $\eta_{P,Q} \geq 0$. Note that all coefficients $\eta_{P,Q}$ refer to some $I, G$ pair at time $k$. For simplicity in notation we will ignore $I, G$ and $k$ without any loss of generality or correctness.

\begin{assumption} \label{Asmp:Scrambling}
The transition matrix $B_k$ is assumed to be a scrambling matrix. \cite{seneta2006non}
\end{assumption}
From Assumption~\ref{Asmp:Scrambling}, any two rows of  $B_k$ matrix have a common non-zero column entry. Also, since any entry of $B_k$ is less than 1, the difference is also strictly less than 1. Hence, $\sum \eta_{PQ} < 1$. By taking norm on both sides of equality in \eqref{Eq:REARRANG}, recalling  Assumption~\ref{Asmp:Scrambling} and using property of norm of sum, we get for any $I, G$ pair, 
\begin{equation}
\|x^I_{0,k+1} - x^G_{0,k+1}\| \leq \left( \sum_{P,Q} \eta_{P,Q} \right) \max_{P,Q} \|x^P_{\Delta,k} - x^Q_{\Delta,k}\|. 
\end{equation}
Since the above inequality is valid for all $I, G$, we can rewrite the above relation as,
\begin{align}
 \max_{I, G} \|x^I_{0,k+1} - x^G_{0,k+1}\| \leq \max_{I,G}\left( \sum_{P,Q} \eta_{P,Q} \right) \max_{P,Q} \|x^P_{\Delta,k} - x^Q_{\Delta,k}\|. \label{Eq:LAB1}
\end{align}
Note that, $\max_{I,G} \left( \sum \eta_{P,Q} \right)$ is dependent only on the topology at time $k$ (i.e. the doubly stochastic weight matrix given by $B_k$). Due to the countable nature of possible topologies for $S$ servers, we can define a new quantity $\nu = \max_k \{ \max_{I,G} \{ \sum \eta_{P,Q} \} \}$ \footnote{Note that $\eta_{P,Q}$ is dependent on $I,G$ pair and $k$.}. By definition, $\max_{I,G} \{ \sum \eta_{P,Q} \} \leq \nu, \; \forall \; k \geq 0$ and since $\max_{I,G} \{ \sum \eta_{P,Q} \} < 1 \; \forall \; k \geq 0$, we have $\nu < 1$. 

We further use Eq.~\ref{Eq:EstErrEqn-fc} to relate iterate at time $\{\Delta,k\}$ to $\{0,k\}$, followed by triangle inequality along with Eq.~\ref{Eq:LAB1} and definition of $\nu$, 
\begin{align}
    \max_{I , G} \|x^I_{0,k+1} - x^G_{0,k+1}\| &\leq \nu \max_{P,Q} \|x^P_{0,k} - x^Q_{0,k} + \alpha_k (D^Q_{\Delta,k}-E^Q_{\Delta,k}-D^P_{\Delta,k}+E^P_{\Delta,k})\| \\
    &\leq \nu\left( \max_{P,Q} \{ \|x^P_{0,k} - x^Q_{0,k}\| + \alpha_k \left( \|D^Q_{\Delta,k}\| + \|E^Q_{\Delta,k}\|+ \|D^P_{\Delta,k}\|+ \|E^P_{\Delta,k}\|\right) \} \right) \nonumber \\ 
    &\leq \nu\left( \max_{P,Q} \left( \|x^P_{0,k} - x^Q_{0,k}\| + 4 \alpha_k \bar{M} \SB{L} \right) \right) \nonumber \\
    & \leq \nu \max_{P,Q} \left( \|x^P_{0,k} - x^Q_{0,k}\| \right) + 4 \alpha_k \nu\bar{M} \SB{L}.
\end{align}

Now we perform an unrolling operation, and relate the maximum disagreement between servers to the initial disagreement between agents (at step $\{0,0\}$).
\begin{align}
\max_{I , G} \|x^I_{0,k+1} - x^G_{0,k+1}\| &\leq  \left( \nu \max_{P,Q} \left( \|x^P_{0,k} - x^Q_{0,k}\| \right) + 4 \alpha_k \nu \bar{M} \SB{L} \right) \nonumber \\
&\leq \left( \nu \left( \nu \max_{P,Q} \left( \|x^P_{0,k-1} - x^Q_{0,k-1}\| \right) + 4 \alpha_{k-1} \nu \bar{M} \SB{L} \right) + 4 \alpha_k \nu \bar{M} \SB{L} \right) \nonumber \\
&\leq \qquad \ldots \nonumber \\ 
&\leq \left(\nu^{k+1} \max_{P,Q} \left( \|x^P_{0,0} - x^Q_{0,0}\| \right) + 4 \bar{M}\SB{L} \sum_{i = 1}^k \left( \alpha_i \nu^{k-i+1} \right) \right). \label{Eq:ASYMPCONVPARVEC}
\end{align}

We start with the definition of $\delta^J_{k+1}$ (see  Eq.~\ref{Eq:deltaDef2}, \ref{Eq:deltaDef3}) and consider the maximum deviation ($\delta^J_{k+1}$) over all servers,
\begin{align}
\max_J\{\|\delta^J_{k+1}\|\} &= \max_J\{ \|x^J_{0,{k+1}} - \bar{x}_{0,{k+1}}\| \} = \max_J\{ \|x^J_{0,{k+1}} - \frac{1}{S} \sum_{I=1}^S x^I_{0,{k+1}}\| \} \nonumber \\
&= \max_J\{ \|\frac{1}{S} \sum_{I \neq J}^S (x^J_{0,{k+1}} - x^I_{0,{k+1}})\| \} \leq \frac{S-1}{S} \max_{I,G} \|x^I_{0,{k+1}} - x^G_{0,{k+1}}\|. \label{Eq:ScramRes1}
\end{align}
Together with Eq.~\ref{Eq:ASYMPCONVPARVEC}, we arrive at the desired expression from the statement of lemma, 
\begin{align}
\max_J\{\|\delta^J_{k+1}\|\} \leq \frac{S-1}{S} \left(\nu^{k+1} \max_{P,Q} \left( \|x^P_{0,0} - x^Q_{0,0}\| \right) + 4 \bar{M}\SB{L} \sum_{i = 1}^k \left( \alpha_i \nu^{k-i+1} \right) \right).
\end{align} 
$\hfill \blacksquare $
\end{proof}

Note that we can do away with Assumption~\ref{Asmp:Scrambling}, and prove similar bound on the maximum disagreement between server iterates and its average for any connected graph. For simplicity, we assume that the transition matrix is scrambling.

\begin{claim} [Server Consensus] \label{Cl:Consensus}
The server parameter vectors achieve consensus asymptotically. $$\lim_{k \rightarrow \infty} \max_{I,G} \|x^I_{0,k+1} - x^G_{0,k+1}\| = 0.$$
\end{claim}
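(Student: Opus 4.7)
The plan is to read off the result directly from the bound established inside the proof of Lemma~\ref{Lem:deltaJbound}, specifically Eq.~\eqref{Eq:ASYMPCONVPARVEC}:
\[
\max_{I,G}\|x^I_{0,k+1}-x^G_{0,k+1}\| \;\leq\; \nu^{k+1}\max_{P,Q}\|x^P_{0,0}-x^Q_{0,0}\| \;+\; 4\bar{M}\SB{L}\sum_{i=1}^{k}\alpha_i\,\nu^{k-i+1},
\]
and then argue that both terms on the right-hand side vanish as $k\to\infty$.

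First I would handle the transient term. Since the decision set $\mathcal{X}$ is convex-compact (Assumption~\ref{Asmp:Set}), the quantity $\max_{P,Q}\|x^P_{0,0}-x^Q_{0,0}\|$ is a finite constant determined by the diameter of $\mathcal{X}$, and because $\nu<1$ the factor $\nu^{k+1}$ drives this term to zero geometrically. No assumption on $\alpha_k$ is needed here.

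Next I would dispense with the convolution tail $T_k := \sum_{i=1}^{k}\alpha_i\,\nu^{k-i+1}$. The key observation is that the step-size conditions \eqref{Eq:LearnStepCond} force $\alpha_k\to 0$: $\{\alpha_k\}$ is non-increasing, so if it had a positive limit then $\sum_k \alpha_k^2$ would diverge, contradicting square-summability. To turn $\alpha_k\to 0$ and $\nu<1$ into $T_k\to 0$, I would use a standard split-the-sum argument: pick $m=\lfloor k/2\rfloor$ and bound
\[
T_k \;\le\; \alpha_1\sum_{i=1}^{m}\nu^{k-i+1} \;+\; \Big(\max_{i>m}\alpha_i\Big)\sum_{i=m+1}^{k}\nu^{k-i+1} \;\le\; \alpha_1\frac{\nu^{k-m+1}}{1-\nu} \;+\; \frac{\alpha_{m+1}}{1-\nu}.
\]
The first summand is geometric in $k$ and the second vanishes since $\alpha_{m+1}\to 0$, so $T_k\to 0$. (Alternatively, one could invoke Lemma~\ref{Lem:Ram} with $\zeta_k=\alpha_k^2$ to conclude summability of an analogous convolution, but the elementary split above is cleaner here because we only have $\sum\alpha_k^2<\infty$, not $\sum\alpha_k<\infty$.)

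Combining the two pieces gives $\max_{I,G}\|x^I_{0,k+1}-x^G_{0,k+1}\|\to 0$, which is the claim. The main (minor) obstacle is the second term: one has to be a little careful because $\sum_k\alpha_k$ diverges, so direct absolute-summability of $T_k$ across $k$ is not available; however, $T_k\to 0$ pointwise in $k$ is all the claim requires, and the splitting trick delivers it from $\alpha_k\to 0$ alone.
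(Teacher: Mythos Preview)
Your proposal is correct and follows essentially the same approach as the paper: both start from the bound in Eq.~\eqref{Eq:ASYMPCONVPARVEC}, dispatch the first term via $\nu^{k+1}\to 0$, and kill the convolution $\sum_{i=1}^k \alpha_i\nu^{k-i+1}$ by splitting the sum into an early piece (controlled geometrically) and a late piece (controlled by $\alpha_k\to 0$). The only cosmetic difference is that the paper splits at a fixed index $K=K(\epsilon)$ chosen so that $\alpha_i<\epsilon$ for $i\ge K$, whereas you split at the moving index $m=\lfloor k/2\rfloor$; your explicit derivation of $\alpha_k\to 0$ from square-summability plus monotonicity is also something the paper uses implicitly.
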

\begin{proof}
We know from Eq.~\ref{Eq:ASYMPCONVPARVEC} that the maximum disagreement between any two servers ($I$ and $G$) is given by,
\begin{align}
    \max_{I,G} \|x^I_{0,k+1} - x^G_{0,k+1}\| \leq \left( \nu^{k+1} \max_{P,Q} \left( \|x^P_{0,0} - x^Q_{0,0}\| \right) + 4 \bar{M}\SB{L} \sum_{i = 1}^k \left( \alpha_i \nu^{k-i+1} \right) \right). \label{Eq:SerConTemp1} 
\end{align}

The first term on the right hand side of above expression decreases to zero as $k \rightarrow \infty$, since $\nu < 1$ and $\nu^{k+1} \rightarrow 0$ as $k \rightarrow 0$. 

We now show that the second term tends to zero too. Let us consider $\epsilon_0 > 0$. Since $\nu < 1$, we can define $0 < \epsilon < \frac{\epsilon_0}{4 \bar{M} \SB{L}} \frac{1 - \nu}{2\nu}$. Since $\alpha_k$ is non-increasing sequence, $\exists \ K = K(\epsilon) \in \mathbb{N}$ such that $\alpha_i < \epsilon$ for all $i \geq K$. Hence we can rewrite the second term for $k > K$ as,
$$ 4 \bar{M}\SB{L} \sum_{i = 1}^k  \left( \alpha_i \nu^{k-i+1} \right) = 4 \bar{M}\SB{L} \left[  \underbrace{\left( \alpha_1 \nu^k + \ldots + \alpha_{K-1} \nu^{k-K+2}\right)}_{A} + \underbrace{ \left(\alpha_K \nu^{k-K+1} + \ldots + \alpha_k \nu \right)}_{B}\right].$$

\noindent We can bound the individual terms A and B by using the monotonically non-increasing property of $\alpha_i$ and bounds on the sum of a geometric series.
\begin{align}
A &=  \alpha_1 \nu^k + \alpha_2 \nu^{k-1} + \ldots + \alpha_{K-1} \nu^{k-K+2} \nonumber \\
&\leq \alpha_1 (\nu^k + \nu^{k-1} + \ldots + \nu^{k-K+2}) && \ldots \alpha_1 \geq \alpha_i, \ \forall \ i \geq 1 \nonumber \\
&\leq \alpha_1 \nu^{k-K+2} \left(\frac{1 - \nu^{K-1}}{1-\nu} \right) \leq \frac{\alpha_1 \nu^{k-K+2}}{1-\nu}. &&  \ldots \nu < 1 \implies (1 - \nu^{K-1}) \leq 1\label{Eq:ABOUNDS1}\\
B &= \alpha_K \nu^{k-K+1} + \ldots + \alpha_k \nu \nonumber \\
& \leq \epsilon \nu \left( \frac{1-\nu^{k-K+1}}{1-\nu}\right) \leq \frac{\epsilon \nu}{1-\nu}.   \label{Eq:BBOUNDS1} && \ldots \nu < 1 \implies (1 - \nu^{k-K+1}) \leq 1
\end{align}

\noindent Since the right side of inequality in Eq.~\ref{Eq:ABOUNDS1} is monotonically decreasing in $k$ ($\nu < 1$), we can conclude, $\exists K_{0_1} > K$, such that, $A < \frac{\epsilon_0}{8 \bar{M} \SB{L}}$. Substituting the upper bound for $\epsilon$ in right side of inequality in Eq.~\ref{Eq:BBOUNDS1}, we get $\exists K_{0_2} > K$ such that $B < \frac{\epsilon_0}{8 \bar{M}\SB{L}}$. 

Using the individual bounds obtained above (on $A$ and $B$), we conclude, $\forall \; \epsilon_0 > 0$, $\exists \; K_0 = \max\{K_{0_1}, K_{0_2}\}$ such that $4\bar{M}\SB{L} \sum_{i=1}^k\left( \alpha_i \nu^{k-i+1} \right) < \epsilon_0$, $\; \forall \; k > K_0$. Clearly (from the $\epsilon-\delta$ definition of limit), we get, $\lim_{k \rightarrow \infty} 4 \bar{M} \SB{L} \sum_{i=1}^k \left( \alpha_i \nu^{k-i+1} \right) = 0$. This limit together with the limit of first term on the right side of Eq.~\ref{Eq:SerConTemp1} being zero, implies, $\lim_{k \rightarrow \infty} \max_{I,G} \|x^I_{0,k+1} - x^G_{0,k+1}\| = 0$. Thus we have asymptotic consensus of the server parameter vectors.

$\hfill \blacksquare$
\end{proof}

\begin{theorem} \label{Th:ConvMain}
Let Assumptions~\ref{Asmp:Function}, \ref{Asmp:Set}, \ref{Asmp:SubBound}, \ref{Asmp:QConn}, \ref{Asmp:SerConn}, and \ref{Asmp:GradLip} hold with $\mathcal{X}^*$ being a nonempty bounded set. Also assume a diminishing step size rule presented in Eq.~\ref{Eq:LearnStepCond}. Then, for a sequence of iterates $\{x_{0,k}\}$ generated by an interleaved consensus projected gradient descent algorithm following symmetric learning condition and bounded update condition, the  iterate average converge to an optimum in $\mathcal{X}^*$.
\end{theorem}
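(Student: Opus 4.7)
The plan is to combine the recursion of Lemma~\ref{Lem:IterateConvAnyW} with the Robbins--Siegmund convergence result (Lemma~\ref{Lem:RobSiegConv}) and the asymptotic server consensus of Claim~\ref{Cl:Consensus} via a standard Fej\'er-monotone transfer argument. Fix any $x^* \in \mathcal{X}^*$ and set $y = x^*$ in Lemma~\ref{Lem:IterateConvAnyW}; since $\mathcal{X}$ is convex (Assumption~\ref{Asmp:Set}), the average $\bar{x}_{0,k}$ lies in $\mathcal{X}$, so $v_k := 2\alpha_k M (f(\bar{x}_{0,k}) - f^*) \geq 0$. I would then cast the resulting inequality in Robbins--Siegmund form with $u_k = \eta_k^2$, the $\gamma_k$ as stated in the lemma, and $w_k := \alpha_k^2(C_1^2+C_2^2+C_3^2) + \alpha_k C_4 \max_J \|\delta^J_k\|$.

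The first real task is to verify that $\sum_k \gamma_k < \infty$ and $\sum_k w_k < \infty$. The constants $C_2,C_3,C_4$ do not depend on $k$, and the elementary inequality $\sum_n a_n^2 \leq (\sum_n a_n)^2$ for nonnegative $a_n$, combined with BUC (Assumption~\ref{Asmp:SLC-BUC}), yields the uniform bound $C_1^2 \leq (\bar M\, \SB{L})^2$; together with $\sum_k \alpha_k^2 < \infty$ from \eqref{Eq:LearnStepCond}, this handles the $\alpha_k^2$ piece. The delicate piece is $\sum_k \alpha_k \max_J \|\delta^J_k\|$. I would substitute the bound of Lemma~\ref{Lem:deltaJbound}: the pure geometric contribution $\sum_k \alpha_{k+1} \nu^{k+1}$ is immediately finite since $\nu<1$ and $\{\alpha_k\}$ is bounded, while for the convolution $\sum_k \alpha_{k+1}\sum_{i=1}^k \alpha_i \nu^{k-i+1}$ I would apply the Cauchy--Young inequality $\alpha_{k+1}\alpha_i \leq \tfrac12(\alpha_{k+1}^2+\alpha_i^2)$. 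This splits the sum into a term dominated by $\tfrac{\nu}{2(1-\nu)}\sum_k \alpha_k^2 < \infty$ and a term $\tfrac12 \sum_k \sum_{i=1}^k \nu^{k-i+1}\alpha_i^2$, which is exactly the setting of Lemma~\ref{Lem:Ram} with $\beta = \nu$ and $\zeta_i = \alpha_i^2$. Since $\gamma_k = O(\alpha_k^2) + O(\alpha_k \max_J\|\delta^J_k\|)$, the same estimates give $\sum_k \gamma_k < \infty$.

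With these summability claims, Lemma~\ref{Lem:RobSiegConv} (applied to deterministic sequences) yields both convergence of $\eta_k^2 = \sum_J \|x^J_{0,k}-x^*\|^2$ and $\sum_k \alpha_k (f(\bar{x}_{0,k}) - f^*) < \infty$. Because $\sum_k \alpha_k = \infty$ and each summand is nonnegative, this forces $\liminf_k f(\bar{x}_{0,k}) = f^*$; compactness of $\mathcal{X}$ (Assumption~\ref{Asmp:Set}) and continuity of $f$ (Assumption~\ref{Asmp:Function}) then produce a subsequence $\bar{x}_{0,k_j} \to \hat x$ with $f(\hat x) = f^*$, so $\hat x \in \mathcal{X}^*$. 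To upgrade convergence of $\eta_k^2$ to convergence of $\bar{x}_{0,k}$, I would use the identity $\eta_k^2 = S\|\bar{x}_{0,k}-x^*\|^2 + \sum_J \|\delta^J_k\|^2$, which follows by expanding $x^J_{0,k} = \bar{x}_{0,k} + \delta^J_k$ and noting $\sum_J \delta^J_k = 0$ from \eqref{Eq:deltaDef}--\eqref{Eq:deltaDef3}. Claim~\ref{Cl:Consensus} together with \eqref{Eq:ScramRes1} drives $\max_J \|\delta^J_k\| \to 0$, so $\|\bar{x}_{0,k}-x^*\|^2$ converges for every $x^* \in \mathcal{X}^*$. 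Specializing to $x^* = \hat x$ and using that the limit vanishes along $\{k_j\}$, the entire sequence $\bar{x}_{0,k}$ converges to $\hat x \in \mathcal{X}^*$.

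I expect the main obstacle to be the opening summability analysis: tracing the double-indexed $\alpha_k$-weighted geometric convolution produced by Lemma~\ref{Lem:deltaJbound} and certifying it is summable, which is what forces the Cauchy--Young splitting followed by Lemma~\ref{Lem:Ram}. Once those estimates are in place, the invocation of Robbins--Siegmund and the Fej\'er-monotone transfer through the decomposition of $\eta_k^2$ are essentially routine.
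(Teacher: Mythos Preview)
Your proposal is correct and follows the same overall skeleton as the paper: plug $y=x^*$ into Lemma~\ref{Lem:IterateConvAnyW}, verify the summability hypotheses of the deterministic Robbins--Siegmund lemma via Lemma~\ref{Lem:deltaJbound} and Lemma~\ref{Lem:Ram}, and then pass from $\sum_k \alpha_k(f(\bar x_{0,k})-f^*)<\infty$ to convergence using compactness and Claim~\ref{Cl:Consensus}.

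There are two minor but genuine differences worth noting. First, for the convolution term $\sum_k \alpha_k \sum_{i\le k}\alpha_i\nu^{k-i+1}$ the paper simply invokes the monotonicity $\alpha_k\le\alpha_i$ from \eqref{Eq:LearnStepCond} to replace $\alpha_k\alpha_i$ by $\alpha_i^2$ in one stroke before applying Lemma~\ref{Lem:Ram}; your Cauchy--Young split $\alpha_{k+1}\alpha_i\le\tfrac12(\alpha_{k+1}^2+\alpha_i^2)$ works just as well and has the virtue of not needing monotonicity, though the paper's route is shorter. Second, your endgame is tighter than the paper's: the paper stops at subsequence convergence of $\bar x_{0,k}$ plus asymptotic consensus and asserts the conclusion, whereas you use the clean orthogonal decomposition $\eta_k^2 = S\|\bar x_{0,k}-x^*\|^2 + \sum_J\|\delta^J_k\|^2$ (valid because $\sum_J\delta^J_k=0$) to show that $\|\bar x_{0,k}-x^*\|$ converges for \emph{every} $x^*\in\mathcal{X}^*$, and then run the standard Fej\'er transfer to upgrade subsequential to full convergence. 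That last step is a genuine strengthening of the paper's argument.
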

\begin{proof}
We intend to prove convergence using deterministic version of Lemma~\ref{Lem:RobSiegConv}. We begin by using the relation between iterates given in Lemma~\ref{Lem:IterateConvAnyW} with $y = x^* \in \mathcal{X}^*$,
\begin{align}
\eta_{k+1}^2 &\leq \left(1 + \gamma_k \right) \eta_{k}^2 - 2\alpha_k M \left(f(\bar{x}_{0,k}) - f(y)\right) + {\alpha_k^2}(C_1^2 + C_2^2 + C_3^2) + \alpha_k C_4 \max_{J}\{\|\delta^J_k\|\}. \label{Eq:PROOF0}
\end{align}
We check if the above inequality satisfies the conditions in Lemma~\ref{Lem:RobSiegConv} viz. $\sum_{k=0}^\infty \gamma_k < \infty$ and $\sum_{k=0}^\infty w_k < \infty$, where $\gamma_k = (1/S) (2 \alpha_k^2 \bar{M}^2 \SB{N} \SB{L} + \alpha_k \bar{M}\SB{N} \max_J \|\delta^J_{k}\|)$ and $w_k = {\alpha_k^2}(C_1^2 + C_2^2 + C_3^2) + \alpha_k C_4 \max_{J}\{\|\delta^J_k\|\}$. \\

We first show that $\sum_{k = 0}^\infty \alpha_k \max_J \|\delta^J_k\| < \infty$.
\begin{align}
    &\sum_{k = 0}^\infty \alpha_k \max_J \|\delta^J_k\| \leq \frac{S-1}{S} \sum_{k = 0}^\infty \alpha_k  \left( \nu^{k+1} \max_{P,Q} \left( \|x^P_{0,0} - x^Q_{0,0}\| \right) + 4 \bar{M}\SB{L} \sum_{i = 1}^k \left( \alpha_i \nu^{k-i+1} \right) \right) \nonumber \\
    &\leq \frac{S-1}{S} \left( \sum_{k = 0}^\infty \alpha_k  \nu^{k+1} \max_{P,Q} \left( \|x^P_{0,0} - x^Q_{0,0}\| \right) + \sum_{k = 0}^\infty 4 \bar{M}\SB{L} \alpha_k \sum_{i = 1}^k \left( \alpha_i \nu^{k-i+1} \right) \right) \nonumber \\
    &\leq \frac{S-1}{S} \left( \max_{P,Q} \left( \|x^P_{0,0} - x^Q_{0,0}\| \right) \sum_{k = 0}^\infty \alpha_k  \nu^{k+1}  + \sum_{k = 0}^\infty 4 \bar{M}\SB{L} \sum_{i = 1}^k \left( \alpha_i^2 \nu^{k-i+1} \right) \right) \quad \ldots \alpha_k \leq \alpha_i, \forall i \leq k \nonumber \\
    &\leq \frac{S-1}{S} \left( \max_{P,Q} \left( \|x^P_{0,0} - x^Q_{0,0}\| \right) \sum_{k = 0}^\infty \alpha_k  \nu^{k+1}  + 4 \bar{M}\SB{L} \sum_{k = 0}^\infty \sum_{i = 1}^k \left( \alpha_i^2 \nu^{k-i+1} \right) \right). \label{Eq:PROOF1}
\end{align}
In the above expression, we can show that the first term is convergent by using the ratio test. We observe that, $$ \limsup_{k \rightarrow \infty} \frac{\alpha_{k+1} \nu^{k+2}}{\alpha_{k} \nu^{k+1}} = \limsup_{k \rightarrow \infty} \frac{\alpha_{k+1} \nu}{\alpha_k} < 1 \implies \sum_{k=0}^\infty \alpha_k \nu^k < \infty,$$
since, $\alpha_{k+1} \leq \alpha_k$ and $\nu < 1$. Arriving at the second term involves using the non-increasing property of $\alpha_i$, i.e. $\alpha_k \leq \alpha_i \forall i \leq k $. Now, we use Lemma~\ref{Lem:Ram}, with $\zeta_j = \alpha_j^2$ where $\sum_{j=1}^\infty \zeta_j < \infty$, and show that the second term in the above expression is finite i.e., $4 \bar{M}\SB{L} \sum_{k = 0}^\infty \sum_{i = 1}^k \left( \alpha_i^2 \nu^{k-i+1} \right) < \infty.$ 
Together using finiteness of both parts on the right side of inequality in Eq.~\ref{Eq:PROOF1} we have proved, 
\begin{align}
\sum_{k = 0}^\infty \alpha_k \max_J \|\delta^J_k\| < \infty.
\label{Eq:PROOF2}
\end{align}

We now begin to prove the finiteness of sum of $\gamma_k$ sequence, i.e. $\sum_{k=0}^\infty \gamma_k < \infty$. 
\begin{align}
\sum_{k=0}^\infty \gamma_k &= \sum_{k=0}^\infty \left( \frac{1}{S} (2 \alpha_k^2 \bar{M}^2 \SB{N} \SB{L} + \alpha_k \bar{M}\SB{N} \max_J \|\delta^J_{k}\|) \right) \nonumber \\
&=\frac{2 \bar{M}^2 \SB{N} \SB{L}}{S} \sum_{k=0}^\infty  \alpha_k^2  +  \bar{M}\SB{N} \sum_{k=0}^\infty \alpha_k \max_J \|\delta^J_{k}\|
\end{align}
Clearly the first term is finite (due to the assumptions on step size, $\sum_{k=1}^\infty \alpha_k^2 < \infty$). The second term is finite by the analysis given above (see Eq.~\ref{Eq:PROOF2}). We have hereby proved, $\sum_{k=0}^\infty \gamma_k < \infty$. We can similarly prove $\sum_{k=0}^\infty w_k < \infty$.
\begin{align}
    \sum_{k=0}^\infty w_k  = (C_1^2 +C_2^2+C_3^2)\sum_{k=0}^\infty \alpha_k^2 + C_4 \sum_{k=0}^\infty \alpha_k \max_J \|\delta^J_k\|.
\end{align}
The first term is finite due the assumption on the step size and the second term is proved in analysis above (see Eq.~\ref{Eq:PROOF2}).

We can now use the deterministic version of Lemma~\ref{Lem:RobSiegConv} to show convergence of the iterate average to the optimum. We know from the analysis above that $\sum_{k=0}^\infty \gamma_k < \infty$ and $\sum_{k=0}^\infty w_k < \infty$.  As a consequence of Lemma~\ref{Lem:RobSiegConv} we get that the sequence $\eta_k^2$ converges to some point and $\sum_{k=0}^\infty 2 \frac{\alpha_k M}{S} (f(\bar{x}_{0,k}) - f(x^*)) < \infty$.

We use $\sum_{k=0}^\infty 2 \frac{\alpha_k M}{S} (f(\bar{x}_{0,k}) - f(x^*)) < \infty$ to show the convergence of the iterate-average to the optimum. Since we know $\sum_{k=0}^\infty \alpha_k = \infty$, it follows directly that $\lim \inf_{k \rightarrow \infty} f(\bar{x}_{0,k}) = f(x^*)$. Due to the continuity of $f(x)$, we know that the sequence of iterate average must enter the optimal set $ \mathcal{X}^*$ (i.e. $\bar{x}_{0,k} \in \mathcal{X}^*$). Since $\mathcal{X}$ is bounded (compactness in $\mathbb{R}^D$) we know that there exists a iterate-average subsequence $\bar{x}_{0,k_l} \subseteq \bar{x}_{0,k}$ that converges to some $x^* \in \mathcal{X}^*$ (i.e. $\lim_{l \rightarrow \infty} \bar{x}_{0,k_l} = x^*$). 

We know from Claim~\ref{Cl:Consensus} that the servers agree to a parameter vector asymptotically (i.e. $x^J_{0,k} \rightarrow x^I_{0,k}, \ \forall I \neq J$ as $k \rightarrow \infty$). Hence, all servers agree to the iterate average. This along with the convergence of iterate-average to the optimal set gives us that all server iterates must enter the optimal set $\mathcal{X}^*$ (i.e. $x^J_{0,k} \in \mathcal{X}^*, \ \forall J$). 

$\hfill \blacksquare $
\end{proof}

\begin{remark} \label{Rem:RMSErrorZero}
Claim~\ref{Cl:Consensus} states that the server parameter vectors asymptotically converge, implying $x^J_{0,k} = x^I_{0,k}$ for any $I, J$ as $k \rightarrow \infty$. It also directly follows that $\bar{x}_{0,k} = x^J_{0,k}$ for any $J$ as $k \rightarrow \infty$. From the definition of $\eta_k^2$ in Eq.~\ref{Eq:RMSErrorDef}, we get,  $\eta_k^2 = \sum_{J = 1}^S \|x^J_{0,k} - x^*\|^2 = S \|x^J_{0,k} - x^*\|^2 = S \|\bar{x}_{0,k} - x^*\|^2$ as $k \rightarrow \infty$ for some $x^* \in \mathcal{X}^*$. Theorem~\ref{Th:ConvMain} states that $\bar{x}_{0,k}$ enters the optimal set $\mathcal{X}^*$ as $k \rightarrow \infty$. From the above statements it is clear,  $\liminf_{k \rightarrow \infty} \eta_k^2 = 0$. 
\end{remark}

\begin{remark}
If the optimal set $\mathcal{X}^*$ is a singleton set, then we know from Theorem~\ref{Th:ConvMain}, Claim~\ref{Cl:Consensus} and Remark~\ref{Rem:RMSErrorZero} that all server states will converge to the only element of $\mathcal{X}^*$, and the RMS Error is driven to zero. If the optimal set has many points, then the server states should maintain the same distance from all optimal points (solutions). In a space of $D$ dimensions, $D+1$ linearly independent optimal points, if we find a server state that is at equal distance from all optimal points, the iterates reach a unique solution. 
\end{remark}

\subsection{Complete Server Graph} \label{Sec:4.1}
We can prove that under the added condition of the server graph being complete, the interleaved algorithm solves Problem~\ref{Prob:GlobalOpt}.
\begin{lemma} \label{Lem:IterateConvCSG}
Let Assumptions~\ref{Asmp:Function}, \ref{Asmp:Set} and \ref{Asmp:SubBound} hold and let the server network form a complete graph. Then a sequence \{$x_{0,k}$\} generated by an interleaved consensus and projected gradient descent algorithm following symmetric learning condition, satisfies, for all $y \in \mathcal{X}$,
\begin{equation}
\| x_{0,k+1} - y \|^2 \leq (1 + \frac{1}{S} \alpha_k^2 \bar{M} F) \| x_{0,k} - y \|^2 - 2  \frac{1}{S}  \alpha_k M (f(x_{0,k}) - f(y)) + \frac{1}{S} \alpha_k^2 C_0^2,
\end{equation}
where, \begin{align*}
&F = \left[  \SB{N}  + 4 C  \Delta \SB{L} \right] \quad \text{and} \\
&C_0^2 = \sum_{J=1}^S \sum_{i=1}^{ \Delta}\left(\sum_{h=1}^C W_{i-1,k}[J,h] \  L_h \right)^2 + 8  \Delta (\bar{M} \SB{L})^2 + 2 \alpha_k \left[ \bar{M}\SB{N}  + C  \Delta (4 \bar{M} \SB{L})^2\right] + \bar{M}F.
\end{align*}
\end{lemma}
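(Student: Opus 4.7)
The plan is to mimic the structure of the proof of Lemma~\ref{Lem:IterateConvAnyW}, but exploit the simplification that under a complete server graph the consensus step produces a common iterate $x_{0,k+1} = \frac{1}{S}\sum_{J=1}^S x^J_{\Delta,k}$, so in particular $x^J_{0,k}=x_{0,k}$ for every $J$ at the start of each $\Delta$-cycle. Because all servers agree at time $\{0,k\}$, the disagreement terms $\delta^J_k$ vanish, and the only drift to control is the within-cycle drift $\|x^J_{i-1,k}-x_{0,k}\|$, which is purely gradient-driven.

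First I would apply the non-expansive property of $\mathcal{P}_{\mathcal{X}}$ to \eqref{Eq:PGDE}, expand the square, and sum over $J=1,\ldots,S$ and telescope over $i=1,\ldots,\Delta$ to obtain
\begin{align*}
\sum_{J=1}^S\|x^J_{\Delta,k}-y\|^2 \leq \sum_{J=1}^S\|x_{0,k}-y\|^2 + \alpha_k^2 C_0'^2 - 2\alpha_k \sum_{i=1}^{\Delta}\sum_{J=1}^S\!\Big(\!\sum_{h=1}^C W_{i-1,k}[J,h]\,g_h(x^J_{i-1,k})\Big)^{\!T}(x^J_{i-1,k}-y),
\end{align*}
where $C_0'^2=\sum_{J,i}\big(\sum_h|W_{i-1,k}[J,h]|L_h\big)^2$ is the first piece of $C_0^2$. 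Then I would use the consensus step and the convexity of $\|\cdot -y\|^2$ (Jensen) to get $S\|x_{0,k+1}-y\|^2 \leq \sum_J\|x^J_{\Delta,k}-y\|^2$, which together with $x^J_{0,k}=x_{0,k}$ yields an inequality of the form $\|x_{0,k+1}-y\|^2 \leq \|x_{0,k}-y\|^2 + \frac{1}{S}\alpha_k^2 C_0'^2 + \frac{1}{S}\Lambda$, where $\Lambda$ is the cross-term.

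The main work is to bound $\Lambda$ in the form stated. I would add and subtract $x_{0,k}$ inside $(x^J_{i-1,k}-y)$ and, using Assumption~\ref{Asmp:GradLip}, write $g_h(x^J_{i-1,k})=g_h(x_{0,k})+l^J_{h,i-1}$ with $\|l^J_{h,i-1}\|\leq N_h\|x^J_{i-1,k}-x_{0,k}\|$. A simple unrolling of the projected gradient iteration (as in \eqref{Eq:ProjErrBound}--\eqref{Eq:DBound}, applied with $\delta^J_k=0$) gives the within-cycle drift bound $\|x^J_{i-1,k}-x_{0,k}\|\leq 2\alpha_k\bar{M}\SB{L}$. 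Substituting and grouping, the cross-term $\Lambda$ splits into four pieces analogous to $T_1,\ldots,T_4$ in Lemma~\ref{Lem:IterateConvAnyW}. The key piece $T_1$ collapses by the symmetric learning condition: $\sum_{J,i}W_{i-1,k}[J,h]=M$ for every $h$, producing the convexity contribution $-2\alpha_k M(f(x_{0,k})-f(y))$. The remaining pieces are bounded via the BUC (Assumption~\ref{Asmp:SLC-BUC}), the drift bound just derived, and the Lipschitz bound on $l^J_{h,i-1}$; using $2\|x_{0,k}-y\|\leq 1+\|x_{0,k}-y\|^2$ converts the lone linear term into a coefficient on $\|x_{0,k}-y\|^2$ plus an additive constant, producing the $(1+\frac{1}{S}\alpha_k^2\bar{M}F)$ multiplier and the remaining pieces of $C_0^2$.

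The main obstacle I expect is bookkeeping: matching the constant $F=\SB{N}+4C\Delta\SB{L}$ and the explicit $C_0^2$ exactly requires being careful with the sums over $i$ and $J$ when invoking BUC (in particular, factors of $\Delta$ and $C$ appear because the inner weight sum is over all $(J,i,h)$ triples before BUC collapses it to $\bar{M}$), and with where $\alpha_k$ versus $\alpha_k^2$ ends up after the $2\|a\|\leq 1+\|a\|^2$ step. Otherwise, because the complete-graph consensus kills all $\delta^J_k$-type terms, the argument is strictly simpler than Lemma~\ref{Lem:IterateConvAnyW} and no extra tools are needed.
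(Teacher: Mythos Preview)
Your proposal is correct and follows essentially the same route as the paper's proof in Appendix~\ref{sec:ApxB}: non-expansiveness plus telescoping over $i$ and $J$, Jensen for the complete-graph consensus step, the $D^J_{i-1,k},E^J_{i-1,k}$ unrolling to control within-cycle drift, Lipschitzness to write $g_h(x^J_{i-1,k})=g_h(x_{0,k})+l_{h,i-1}$, SLC to extract the $-2\alpha_k M(f(x_{0,k})-f(y))$ term, BUC for the residual pieces, and finally $2\|a\|\le 1+\|a\|^2$ to generate the $(1+\tfrac{1}{S}\alpha_k^2\bar{M}F)$ multiplier. Your anticipated obstacle is exactly right: the only real work beyond Lemma~\ref{Lem:IterateConvAnyW} is the bookkeeping of the $\Delta$ and $C$ factors when collapsing the triple sums via BUC, and the paper's own derivation of $F$ and $C_0^2$ is indeed just that bookkeeping.
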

\begin{proof}
Proof can be found in Appendix~\ref{sec:ApxB}
$\hfill \blacksquare$
\end{proof}

\begin{theorem} \label{Th:ConvCSG}
Let Assumptions~\ref{Asmp:Function}, \ref{Asmp:Set},  \ref{Asmp:SubBound} hold, and let $\mathcal{X}^*$ be a non-empty bounded set. Also consider a diminishing step size condition, i.e. $$\alpha_k > 0, \qquad \lim_{k\rightarrow \infty} \alpha_k = 0, \qquad \sum_{k = 0}^{\infty} \alpha_k = \infty. $$  Then for a sequence {$\{x_{0,k}\}$} generated by an interleaved consensus and projected gradient descent algorithm following symmetric learning condition and non-negative $W$ matrix condition, we have,
$$\lim_{k \rightarrow \infty} dist(x_{0,k}, \mathcal{X}^*) = 0, \qquad \lim_{k \rightarrow \infty} f(x_{0,k}) = f^*,$$
where $dist(x_{0,k}) = \| x_{0,k} - x^* \|^2$.
\end{theorem}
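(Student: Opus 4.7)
The plan is to mimic the proof of Theorem~\ref{Th:ConvMain} while exploiting the simplification the complete server graph provides. In this setting the consensus step $x^I_{0,k+1}=\tfrac{1}{S}\sum_J x^J_{\Delta,k}$ makes all servers share a common iterate $x_{0,k}$ right after every consensus round, so the disagreement $\delta^J_k$ vanishes identically. Lemma~\ref{Lem:IterateConvCSG} then gives the clean scalar recursion
\[
V_{k+1} \leq (1+\gamma_k)\,V_k \;-\; 2\frac{\alpha_k M}{S}\bigl(f(x_{0,k})-f(y)\bigr) \;+\; w_k,
\]
where $V_k:=\|x_{0,k}-y\|^2$, $\gamma_k=\tfrac{\alpha_k^2}{S}\bar{M}F$ and $w_k=\tfrac{\alpha_k^2}{S}C_0^2$, with $F$ and $C_0^2$ the constants named in the lemma.

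Setting $y=x^*\in\mathcal{X}^*$, I would apply the deterministic version of Robbins--Siegmund (Lemma~\ref{Lem:RobSiegConv}). Its hypotheses $\sum_k\gamma_k<\infty$ and $\sum_k w_k<\infty$ reduce to $\sum_k\alpha_k^2<\infty$, which is the natural companion of the stated diminishing step-size rule and is already tacitly used when writing the $\alpha_k^2 C_0^2$ bound. The conclusion of Lemma~\ref{Lem:RobSiegConv} is twofold: (i) $V_k$ converges to some finite nonnegative limit $V_\infty(x^*)$, and (ii) $\sum_k\alpha_k\bigl(f(x_{0,k})-f^*\bigr)<\infty$. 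Combined with $\sum_k\alpha_k=\infty$, part (ii) forces $\liminf_k f(x_{0,k})=f^*$.

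To pass from the $\liminf$ statement to actual convergence, I would run the classical subsequence argument. By compactness of $\mathcal{X}$, a subsequence $x_{0,k_l}\to\bar{x}$ exists; continuity of $f$ together with $\liminf_k f(x_{0,k})=f^*$ places $\bar{x}\in\mathcal{X}^*$. Applying conclusion~(i) with $x^*=\bar{x}$, the sequence $\|x_{0,k}-\bar{x}\|^2$ converges, and the subsequence along $k_l$ drives the limit to $0$, so the whole sequence does and $x_{0,k}\to\bar{x}$. This yields both $dist(x_{0,k},\mathcal{X}^*)\to 0$ and, by continuity of $f$, $f(x_{0,k})\to f^*$.

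The main obstacle is less mathematical than a matter of hypothesis: the theorem as stated lists only $\alpha_k>0$, $\alpha_k\to 0$ and $\sum\alpha_k=\infty$, whereas $\sum\alpha_k^2<\infty$ is indispensable for Robbins--Siegmund to apply (without it the $\alpha_k^2 C_0^2$ term need not be summable). I would resolve this by explicitly adopting the full step-size rule of Eq.~\ref{Eq:LearnStepCond}. Apart from this, everything is routine: because all servers share a single iterate between consensus rounds in the complete-graph case, the disagreement-tracking machinery (Lemma~\ref{Lem:deltaJbound}, Claim~\ref{Cl:Consensus}) is not needed, and no bookkeeping for $\max_J\|\delta^J_k\|$ appears.
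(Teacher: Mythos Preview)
Your proposal is correct and matches the paper's own proof essentially step for step: the paper also applies Lemma~\ref{Lem:IterateConvCSG} with $y=x^*$, invokes the deterministic Robbins--Siegmund lemma (explicitly citing $\sum_k\alpha_k^2<\infty$ despite the theorem statement omitting it), and then defers to the subsequence argument of Theorem~\ref{Th:ConvMain}. Your observation that square-summability of $\alpha_k$ is tacitly required is spot on---the paper makes exactly the same implicit appeal.
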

\begin{proof}
The proof is formally stated in Appendix~\ref{sec:ApxB}
$\hfill \blacksquare$
\end{proof}

\subsection{Non-Negative Weight Matrix with Complete Server Graph} \label{Sec:4.2}
We can prove that under the added conditions of non-negative weight matrix ($W_{i,k}$) and the server graph being complete, the interleaved algorithm solves Problem~\ref{Prob:GlobalOpt}.
\begin{lemma} \label{Lem:IterateConvNNWCSG}
Let Assumptions~\ref{Asmp:Function}, \ref{Asmp:Set} and \ref{Asmp:SubBound} hold and let the server network form a complete graph. Then a sequence \{$x_{0,k}$\} generated by an interleaved consensus and projected gradient descent algorithm following symmetric learning condition and non-negativity of $W$ matrix, satisfies, for all $y \in \mathcal{X}$,
\begin{equation}
\| x_{0,k+1} - y\|^2 \leq  \| x_{0,k} - y \|^2 - 2 \frac{M}{S} \alpha_k \left( f(x_{0,k}) - f(y)\right) + \frac{1}{S} \alpha_k^2 C_0^2, \label{Eq:Lemma1}
\end{equation}
where,
\begin{align*}
    C_0^2 = \sum_{J = 1}^S \sum_{i=1}^{ \Delta}\left( \left(\sum_{h=1}^C W_{i-1,k}[J,h] \  L_h \right)^2  + 2 \left(\sum_{h=1}^C W_{i-1,k}[J,h] L_h \left( \sum_{t=1}^{i-1}\sum_{l = 1}^{C} W_{t-1,k}[J,l] L_l \right) \right) \right).
\end{align*}
\end{lemma}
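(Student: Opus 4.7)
The plan is to exploit two big simplifications over Lemma~\ref{Lem:IterateConvAnyW}. First, the complete server graph together with a doubly stochastic consensus matrix makes $B_k = \frac{1}{S}\mathbb{1}\mathbb{1}^T$, so after every consensus step all servers share the same iterate $x_{0,k}$, and the next consensus produces $x_{0,k+1} = \bar x_{\Delta,k} = \tfrac{1}{S}\sum_J x^J_{\Delta,k}$. Second, non-negativity of $W_{i,k}$ collapses $\bar M$ to $M$ (via the SLC) and lets me apply convexity directly to individual $f_h$ terms without having to pass through Lipschitzness of gradients; this explains why $N_h$ is absent from $C_0^2$.

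My starting point is the one-step expansion: apply the non-expansive property of $\mathcal{P}_\mathcal{X}$ and expand the squared norm to obtain
\[
\|x^J_{i,k}-y\|^2 \leq \|x^J_{i-1,k}-y\|^2 - 2\alpha_k\sum_h W_{i-1,k}[J,h]\,g_h(x^J_{i-1,k})^T(x^J_{i-1,k}-y) + \alpha_k^2\bigl\|\textstyle\sum_h W_{i-1,k}[J,h]\,g_h(x^J_{i-1,k})\bigr\|^2.
\]
Since $x^J_{0,k}=x_{0,k}$ for all $J$, I split $x^J_{i-1,k}-y = (x_{0,k}-y) + (x^J_{i-1,k}-x_{0,k})$. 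On the first piece I use convexity of each $f_h$ in the form $g_h(x^J_{i-1,k})^T(x^J_{i-1,k}-y)\geq f_h(x^J_{i-1,k})-f_h(y)$ and then $f_h(x^J_{i-1,k})\geq f_h(x_{0,k})+g_h(x_{0,k})^T(x^J_{i-1,k}-x_{0,k})$, which (using $W\geq 0$) produces the desired $-2\alpha_k\sum_h W_{i-1,k}[J,h](f_h(x_{0,k})-f_h(y))$ plus a residual cross term involving $g_h(x_{0,k})^T(x^J_{i-1,k}-x_{0,k})$. The second piece of the split contributes an analogous inner product with $g_h(x^J_{i-1,k})$.

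I then bound both residual cross terms using Cauchy--Schwarz and Assumption~\ref{Asmp:SubBound}, combined with the drift estimate
\[
\|x^J_{i-1,k}-x_{0,k}\| \leq \sum_{t=1}^{i-1}\|x^J_{t,k}-x^J_{t-1,k}\| \leq \alpha_k\sum_{t=1}^{i-1}\sum_{l=1}^C W_{t-1,k}[J,l]\,L_l,
\]
which follows because projection is non-expansive and weights are non-negative (so the norm collapses to a simple sum). This yields exactly the product $W_{i-1,k}[J,h]L_h\cdot\sum_t\sum_l W_{t-1,k}[J,l]L_l$ appearing in $C_0^2$, with the extra $\alpha_k$ combining with the existing $\alpha_k$ to give the $\alpha_k^2$ scaling. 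The quadratic gradient norm is bounded coordinate-free by $(\sum_h W_{i-1,k}[J,h]L_h)^2$.

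Finally I telescope over $i=1,\ldots,\Delta$, sum over $J=1,\ldots,S$, and apply the SLC $\sum_J\sum_i W_{i-1,k}[J,h] = M$ to collapse the double sum of convex-function gaps into $M(f(x_{0,k})-f(y))$. Jensen's inequality then gives
\[
\|x_{0,k+1}-y\|^2 = \bigl\|\tfrac{1}{S}\textstyle\sum_J x^J_{\Delta,k}-y\bigr\|^2 \leq \tfrac{1}{S}\sum_J\|x^J_{\Delta,k}-y\|^2,
\]
and dividing the summed bound by $S$ produces the claimed inequality. The only real bookkeeping obstacle is keeping track of the two cross-term contributions (one from the convexity-upgrade step $f_h(x^J_{i-1,k})\mapsto f_h(x_{0,k})$, one from the split of $x^J_{i-1,k}-y$) and showing that after using $W\geq 0$, gradient boundedness, and the drift estimate, both can be absorbed into the single symmetric expression $2\,W_{i-1,k}[J,h]L_h\sum_t\sum_l W_{t-1,k}[J,l]L_l$ that appears in $C_0^2$; everything else is a clean telescoping argument.
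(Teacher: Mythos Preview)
Your overall strategy is exactly the paper's: non-expansiveness of $\mathcal{P}_\mathcal{X}$, expand the square, apply the subgradient inequality term-by-term (legitimate because $W\ge 0$), telescope over $i=1,\dots,\Delta$, sum over $J$, invoke the SLC to produce $M(f(x_{0,k})-f(y))$, and finish with Jensen on the consensus average. The drift estimate $\|x^J_{i-1,k}-x_{0,k}\|\le\alpha_k\sum_{t=1}^{i-1}\sum_l W_{t-1,k}[J,l]L_l$ is also the paper's, and your observation that the complete graph forces $x^J_{0,k}=x_{0,k}$ for all $J$ is precisely the simplification the paper exploits.

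The one place your write-up is muddled is the ``split plus two cross terms'' story. If you apply the subgradient inequality $g_h(x^J_{i-1,k})^T(x^J_{i-1,k}-y)\ge f_h(x^J_{i-1,k})-f_h(y)$ to the \emph{whole} inner product, there is no residual coming from a split of $x^J_{i-1,k}-y$; and if you split first, the piece $g_h(x^J_{i-1,k})^T(x_{0,k}-y)$ does not admit the convexity inequality you quoted (the gradient is evaluated at the wrong point). The paper simply does the former: apply convexity once, then add and subtract $f_h(x_{0,k})$ and bound $|f_h(x^J_{i-1,k})-f_h(x_{0,k})|\le L_h\|x^J_{i-1,k}-x_{0,k}\|$ (bounded gradient $\Rightarrow$ $f_h$ is $L_h$-Lipschitz). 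That yields a \emph{single} residual whose drift bound gives exactly the $2\,W_{i-1,k}[J,h]L_h\sum_{t}\sum_l W_{t-1,k}[J,l]L_l$ term in $C_0^2$. Your second convexity step $f_h(x^J_{i-1,k})\ge f_h(x_{0,k})+g_h(x_{0,k})^T(x^J_{i-1,k}-x_{0,k})$ followed by Cauchy--Schwarz reaches the same bound, so the argument is correct---just drop the redundant ``split'' narrative so the bookkeeping doesn't double-count.
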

\begin{proof}
Proof can be found in Appendix~\ref{sec:ApxA}
$\hfill \blacksquare$
\end{proof}

\begin{theorem} \label{Th:ConvNNWCSG}
Let Assumptions~\ref{Asmp:Function}, \ref{Asmp:Set},  \ref{Asmp:SubBound} hold, and let $\mathcal{X}^*$ be a non-empty bounded set. Also consider a diminishing step size condition, i.e. $$\alpha_k > 0, \qquad \lim_{k\rightarrow \infty} \alpha_k = 0, \qquad \sum_{k = 0}^{\infty} \alpha_k = \infty. $$  Then for a sequence {$\{x_{0,k}\}$} generated by an interleaved consensus and projected gradient algorithm following symmetric learning condition and non-negative $W$ matrix condition, we have,
$$\lim_{k \rightarrow \infty} dist(x_{0,k}, \mathcal{X}^*) = 0, \qquad \lim_{k \rightarrow \infty} f(x_{0,k}) = f^*,$$
where $dist(x_{0,k}) = \| x_{0,k} - x^* \|^2$.
\end{theorem}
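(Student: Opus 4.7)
The plan is to apply the deterministic form of the Robbins--Siegmund lemma (Lemma \ref{Lem:RobSiegConv}) to the one-step recursion established in Lemma \ref{Lem:IterateConvNNWCSG}, and then upgrade the resulting $\liminf$ statement to a true limit by a standard subsequence argument. Since the server graph is complete, consensus reduces to plain averaging and we may work with a single iterate $x_{0,k}$, so all of the disagreement machinery ($\delta^J_k$, the $\gamma_k$ factor, and the $C_4 \max_J \|\delta^J_k\|$ term) that complicated the proof of Theorem \ref{Th:ConvMain} is absent here.

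First I would substitute $y = x^* \in \mathcal{X}^*$ into Lemma \ref{Lem:IterateConvNNWCSG}, and observe that the $k$-dependent constant $C_0^2$ admits a uniform bound: because the weights are non-negative, BUC (Assumption \ref{Asmp:SLC-BUC}) gives $\sum_{i=1}^\Delta \sum_{J=1}^S W_{i-1,k}[J,h] \leq \bar{M}$ for every $h$, and combining this with the gradient bound (Assumption \ref{Asmp:SubBound}) yields a constant $\bar{C}_0$ with $C_0^2 \leq \bar{C}_0$ for all $k$. Then Lemma \ref{Lem:RobSiegConv} applies with $u_k = \|x_{0,k}-x^*\|^2$, $v_k = \tfrac{2M\alpha_k}{S}(f(x_{0,k})-f^*) \geq 0$, $\gamma_k \equiv 0$, and $w_k = \tfrac{\alpha_k^2 \bar{C}_0}{S}$; the hypotheses $\sum\gamma_k < \infty$ and $\sum w_k < \infty$ both hold, the latter because $\sum \alpha_k^2 < \infty$ by \eqref{Eq:LearnStepCond}. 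The lemma yields two conclusions: $\|x_{0,k}-x^*\|^2$ converges to a finite limit, and $\sum_{k=0}^\infty \alpha_k (f(x_{0,k})-f^*) < \infty$; combined with $\sum \alpha_k = \infty$, this forces $\liminf_{k\to\infty} f(x_{0,k}) = f^*$.

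To promote the $\liminf$ to a limit I would pick a subsequence along which $f(x_{0,k_l}) \to f^*$, use compactness of $\mathcal{X}$ (Assumption \ref{Asmp:Set}) to pass to a further subsequence converging to some $\tilde{x}^\ast \in \mathcal{X}$, and invoke continuity of $f$ to place $\tilde{x}^\ast \in \mathcal{X}^*$. Re-running the previous step with $y = \tilde{x}^\ast$ in place of $x^*$ shows that $\|x_{0,k} - \tilde{x}^\ast\|^2$ converges; since it converges to zero along the extracted subsequence, the full sequence must converge to zero, i.e.\ $x_{0,k} \to \tilde{x}^\ast$. This immediately gives $dist(x_{0,k}, \mathcal{X}^*) \to 0$, and continuity of $f$ then gives $f(x_{0,k}) \to f^*$.

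The step I expect to require the most care is the uniform bound on $C_0^2$: the cross term $\sum_h W_{i-1,k}[J,h] L_h \bigl(\sum_{t<i}\sum_l W_{t-1,k}[J,l] L_l\bigr)$ couples weights at two different intra-block time indices, so one has to apply BUC to the $i$-sum (or its Cauchy--Schwarz relaxation) rather than slot-by-slot. Once $\bar{C}_0$ is in hand, everything downstream is a textbook projected-subgradient convergence argument and is a direct specialization of the reasoning used in the proof of Theorem \ref{Th:ConvMain}, simplified by the fact that consensus here is exact rather than asymptotic.
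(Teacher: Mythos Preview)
Your argument has a genuine gap: you invoke $\sum_{k} \alpha_k^2 < \infty$ (via Eq.~\eqref{Eq:LearnStepCond}) to verify $\sum_k w_k < \infty$ for the Robbins--Siegmund step, but Theorem~\ref{Th:ConvNNWCSG} does \emph{not} assume square-summability. Its step-size hypothesis is strictly weaker than Eq.~\eqref{Eq:LearnStepCond}: only $\alpha_k > 0$, $\alpha_k \to 0$, and $\sum_k \alpha_k = \infty$ are assumed. Under these hypotheses alone your $w_k = \alpha_k^2 \bar{C}_0 / S$ need not be summable, so Lemma~\ref{Lem:RobSiegConv} cannot be applied and the chain of implications leading to $\liminf_k f(x_{0,k}) = f^*$ and to convergence of $\|x_{0,k}-x^*\|^2$ breaks down. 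Everything downstream (the subsequence extraction and the ``re-run with $y=\tilde{x}^*$'' trick) rests on those two conclusions, so the whole proof collapses.

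The paper sidesteps this by using a level-set argument in the style of Nedi\'{c}--Bertsekas \cite{nedic2001incremental} rather than Robbins--Siegmund. For a fixed $\gamma>0$ one picks $k_0$ large enough that $\alpha_k C_0^2/M \leq \gamma$ for all $k\geq k_0$ (this uses only $\alpha_k\to 0$); then whenever $f(x_{0,k})>f^*+\gamma$ the recursion of Lemma~\ref{Lem:IterateConvNNWCSG} forces $dist(x_{0,k},\mathcal{X}^*)$ to strictly decrease, while whenever $f(x_{0,k})\leq f^*+\gamma$ the iterate lies in a bounded level set and a single step can move it by at most $\alpha_k C_0/\sqrt{S}$. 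Since $\sum_k\alpha_k=\infty$ the second case must recur infinitely often, and the one-step drift vanishes with $\alpha_k$, giving $\limsup_k dist(x_{0,k},\mathcal{X}^*)\leq d(\gamma)$; letting $\gamma\downarrow 0$ finishes. Note that this route never touches $\sum_k\alpha_k^2$. Your Robbins--Siegmund approach would be perfectly valid (and is essentially the argument of Theorem~\ref{Th:ConvMain}) if square-summability were part of the hypothesis, but as stated the theorem requires the level-set method or something equivalent.
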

\begin{proof}
The proof is formally stated in Appendix~\ref{sec:ApxA}
$\hfill \blacksquare$
\end{proof}


\section{Privacy} \label{Sec:PrivacyDiscussion}
We claim that, because we multiply gradients by random weights (both positive and negative), no parameter server has accurate information about the exact state, gradient pair ($x, g_i(x)$), thus enhancing privacy. 

As a first take on privacy, let us consider a malicious entity (or an adversary, $Z$) that intends to extract  information about the individual objective functions ($f_i(x)$), based on information from gradients uploaded to the servers ($W_{i-1,k}[J,h] g_h(x)$) and the parameter vectors ($x$). One may choose to look at the malicious entity as a third party that has gained access to gradient and parameter information and now intends to know datasets/information stored with clients that induce the objective function. We can also consider multiple servers to form a malicious cohort, however, we will restrict to just one malicious server for the purpose of this discussion. The following information and constants are accessible to server $Z$, $\forall \; \{i,k\}$ -  
\begin{itemize}
    \item ($x^Z_{i-1,k}$,$W_{i-1,k}[Z,h] g_h(x^Z_{i-1,k})$) - (State, Randomized Gradient) pair ($\forall \; h = 1, 2, \ldots, C$)
    \item ($x^J_{\Delta,k}$) - Parameter State from neighbors at consensus step, ($\forall \; J \sim Z$)
    \item $\Delta$, $S$ and $C$.
\end{itemize}

Information about individual functions ($L_i$, $N_i$ etc.), the algorithm and other constants, (the strategy of selecting $W[J,h]$, $M$ and $\bar{M}$) is not known to the adversary. 

The random weights can be drawn from a uniform distribution $\mathcal{U}[0,\Gamma_h[k]]$ or $\mathcal{U}[-\Gamma_h[k],0]$, where $\Gamma_h[k]$ is private to client $h$ and changes in every cycle (dependent on $k$). It is not too difficult to construct an algorithm for generating random weights ($W$ matrix) that satisfies SLC and BUC. Correctness and optimality of the result obtained with this algorithm is proved by the analysis in Section~\ref{Sec:ConvergenceResults}. Such an algorithm will ensure that no adversary $Z$, can accurately estimate gradient values ($g_h(x)$) at state $x$ (and thereby function values ($f_h(x)$)) thereby enhancing privacy. 

The concept of weights (positive and negative), however, opens up a lot of avenues to achieve privacy in distributed optimization. Dividing gradients into smaller fractions and supplying these fractions to different servers guarantees that the model is accurate and also provides greater privacy (since the servers are oblivious to the true gradients). 

\subsection*{Privacy of Function Partitioning Approach}
Function partitioning approach is presented in our companion work \cite{gade16convsum}. Clients construct partitions of their own objective functions, and each of such partition is associated to one of the servers. This indicates that whenever a client sends an update to a specific server, it employs the function partition associated with that server as its objective function. Hence any server observes gradients from a specific partition of the client (and not the whole client objective function). The function splitting step is followed by a consensus - projected gradient step with the weights $W_{i,k}$ being fixed. Note that since the function-partitions can be added to form the original client function, all partitions together add up to $f(x)$. We can, using the analysis developed in this report, claim that this problem, with partitioned functions, can be correctly optimized by our algorithm (with constant weights).

As an example consider an optimization problem posed in client-server framework as shown in Figure~\ref{Fig:SV2P1}. The servers intend to optimize $f(x) = f_1(x)+f_2(x)+f_3(x)$ where, $f_1(x)$ corresponds to $C_1$, $f_2(x)$ corresponds to $C_2$, and $f_3(x)$ corresponds to $C_3$. In the function partitioning approach, each client $C_i$ splits its objective function into partitions $f_{i,J}(x)$ (corresponding to $C_{i,J}$), such that $f_i(x) = \sum_{J } f_{i,J}(x)$. Each fraction $C_{i,J}$ is used as client $C_i$'s objective function while sending updates to server $J$. This can be seen in Figure~\ref{Fig:SV2P2}.

\begin{figure}[t]
\begin{subfigure}{.43\textwidth}
  \centering
  \includegraphics[width=1.2\linewidth]{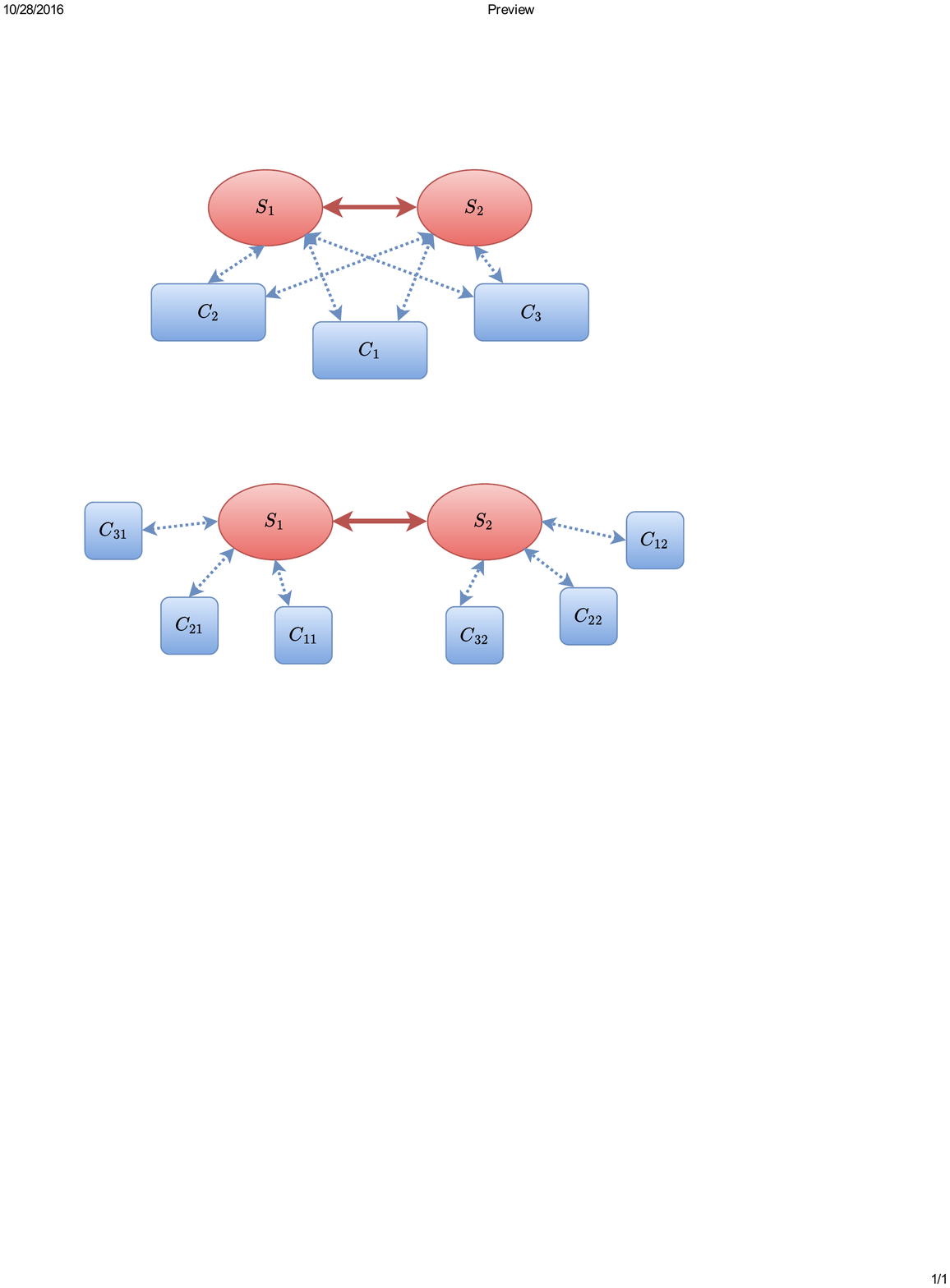}
  \caption{Distributed optimization problem posed in client-server architecture.}
  \label{Fig:SV2P1}
\end{subfigure} \hfill
\begin{subfigure}{.55\textwidth}
  \centering
  \includegraphics[width=.95\linewidth]{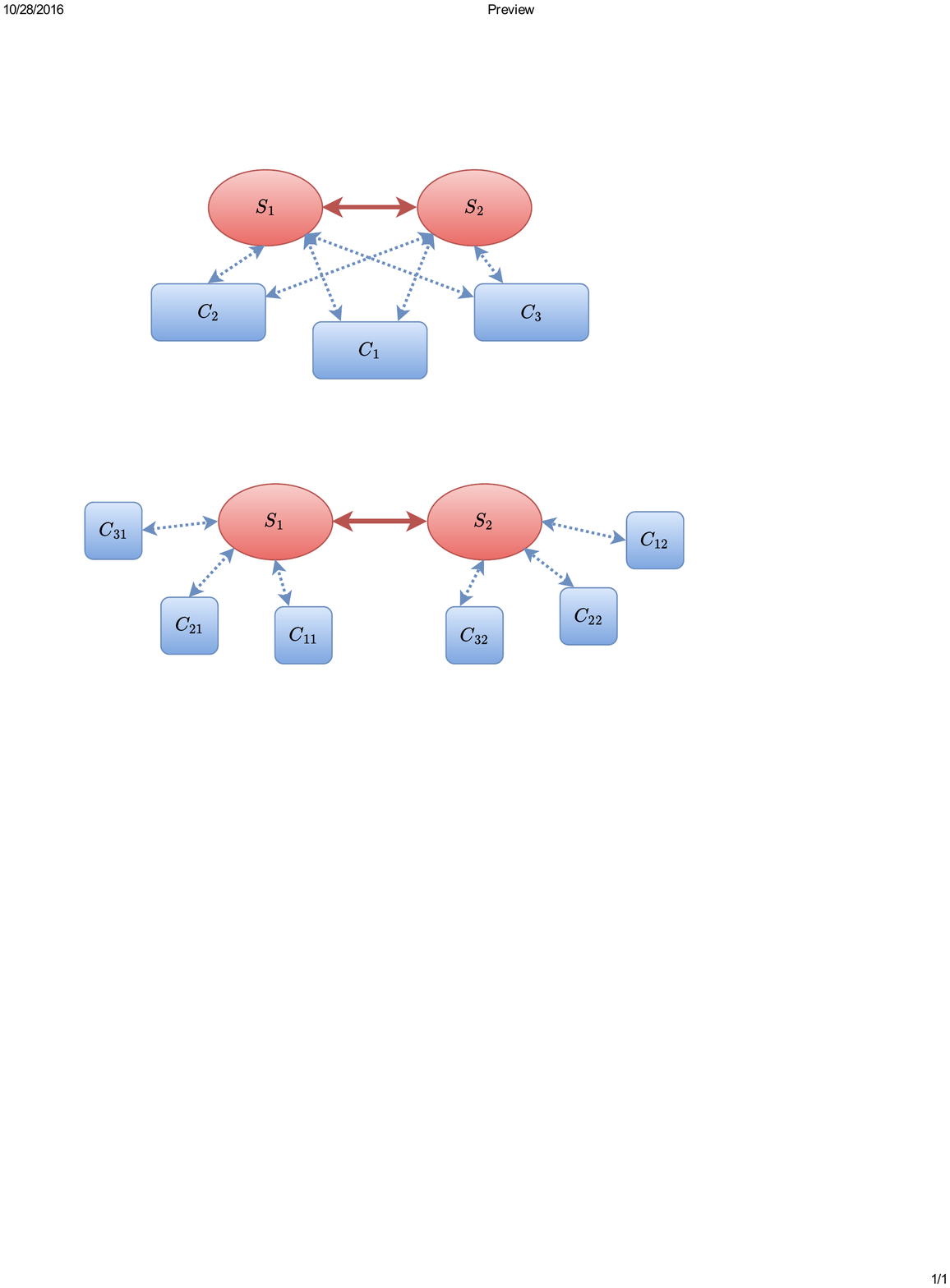}
  \caption{Evolution of maximum deviation of iterate from its average over time.}
  \label{Fig:SV2P2}
\end{subfigure} 
\caption{Function partitioning approach to privacy preserving optimization.}
\label{Fig:StatR-New}
\end{figure}

\begin{claim} \label{Cl:Priv1}
If every client sends updates to more than one server, and every server receives updates from more than one client, then no server can ever uncover the private objective function of any client.
\end{claim}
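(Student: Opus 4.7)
The plan is to prove Claim~\ref{Cl:Priv1} by an indistinguishability argument: for any server $Z$ and any target client $h$, exhibit two distinct sets of client objective functions (a ``real world'' and an ``alternative world'') that produce \emph{identical} observation sequences at $Z$. If such a pair exists, then no inference procedure at $Z$ can uniquely identify $f_h$, which is precisely the content of the claim.

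First I would catalog the information visible to $Z$: (i) its own iterate trajectory $\{x^Z_{i,k}\}$; (ii) for each client $h$ that uploads to $Z$, the weighted partial gradients $W_{i,k}[Z,h]\,\nabla f_{h,Z}(x^Z_{i,k})$; and (iii) the consensus messages $x^J_{\Delta,k}$ received from each neighboring server $J$. The crucial structural observation is that the update rule in Algorithm~\ref{Algo:ASLearn} depends on the partitions $f_{h',J}$ at server $J$ \emph{only through their aggregate} $\sum_{h'} W_{i,k}[J,h']\,\nabla f_{h',J}$. Hence any perturbation of individual partitions that leaves this aggregate unchanged at every server will propagate invisibly through the entire algorithm.

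Now fix an arbitrary target client $h$. By the first hypothesis, $h$ uploads to at least two servers, so there exists a server $J \neq Z$ such that $h$ contributes a partition $f_{h,J}$ to $J$. By the second hypothesis, $J$ receives uploads from at least two clients, so there exists another client $h' \neq h$ contributing $f_{h',J}$ to $J$. Choose a nonzero perturbation $\phi$, sufficiently small and smooth that $f_{h,J} + \phi$ and $f_{h',J} - \phi$ remain convex and continue to satisfy Assumptions~\ref{Asmp:SubBound} and \ref{Asmp:GradLip} (e.g., a small scaled bump function or a small quadratic), and define the alternative world
\[
f'_{h,J} = f_{h,J} + \phi, \qquad f'_{h',J} = f_{h',J} - \phi,
\]
with every other partition unchanged. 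Then $f'_h = f_h + \phi \neq f_h$, so the alternative objective of client $h$ genuinely differs from the real one.

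Finally, I would verify by induction on the iteration index $\{i,k\}$ that both worlds produce bitwise identical trajectories at every server. The base case is trivial. For the inductive step, at server $J$ the gradient contributions of $h$ and $h'$ are shifted by $\nabla\phi$ and $-\nabla\phi$, which cancel, so the aggregate gradient driving $J$'s update is unchanged; at every other server the partitions are untouched; and the consensus step mixes unchanged iterates into unchanged iterates. Hence $Z$ sees exactly the same direct gradients from $h$ (since $f'_{h,Z} = f_{h,Z}$) and exactly the same consensus messages from its neighbors, yet $f'_h \neq f_h$. The main obstacle is the admissibility of $\phi$: one must ensure that $f'_{h,J}$ and $f'_{h',J}$ remain in the assumed function class (convex, gradient-bounded by some $L$, Lipschitz gradients by some $N$). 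This is handled by taking $\phi$ compactly supported and of sufficiently small $C^2$-norm, so that all assumptions are preserved up to harmless changes in the constants; the argument hinges on the fact that the space of such admissible $\phi$ is infinite-dimensional and, in particular, contains more than one element.
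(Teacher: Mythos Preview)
Your approach is correct and rests on the same core insight as the paper---that each server sees only an \emph{aggregate} of the gradients contributed by its clients and therefore cannot isolate any single client's contribution---but your execution is considerably more formal than what the paper actually does. The paper's proof is a short informal paragraph: it observes that a malicious server, even after inferring the total gradient driving a neighboring server's update, sees only a sum of several partitions and asserts that ``one cannot estimate gradients from individual partitions based on aggregated gradients.'' It does not construct an alternative world, does not formalize what ``uncover'' means, and does not explicitly invoke the two hypotheses of the claim.

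Your indistinguishability construction makes all of this precise: you pick $J\neq Z$ using the first hypothesis and $h'\neq h$ at $J$ using the second, then exhibit a nontrivial perturbation that leaves every server's trajectory (hence every observation at $Z$) unchanged while genuinely altering $f_h$. This buys you a sharper statement of exactly what is proved and a transparent explanation of why both connectivity hypotheses are needed. One small point worth making explicit: the cancellation ``$\nabla\phi$ and $-\nabla\phi$'' at server $J$ tacitly assumes $W[J,h]=W[J,h']$. In the function-partitioning setting the claim is stated for, the paper sets these weights to $1$, so this is fine; in the general weighted setting you would simply scale the perturbation on $f_{h',J}$ by $W[J,h]/W[J,h']$.
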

\begin{proof}
Servers form a connected component (Assumption~\ref{Asmp:SerConn}) and share states with other servers. A malicious server can observe the state evolution of all its neighbors. Using the state evolution and step-sizes, a malicious server may estimate the total gradient being used at neighboring server. This observed gradient is a cumulative effect of several function partitions connected to a server (greater than 1 by assumption). Hence, any malicious server may observe only sums of certain partitions.

A malicious agent can observe private objective function if and only if it can estimate gradients from all the partitions of that functions individually. This is, quite clearly, impossible since servers use aggregated gradient of all the partitions that are connected to them and one cannot estimate gradients from individual partitions based on aggregated gradients. Hence, no server can uncover private objective function of any client under given connectivity conditions of the partitions. $\hfill \blacksquare$
\end{proof}

\subsection*{Privacy of Interleaved Consensus and Descent Algorithm}

We can draw parallels between function splitting strategy and Algorithms~\ref{Algo:ASLearn}, \ref{Algo:ASLearn2}. We can consider every partition created by agents (function partition strategy) to be functions belonging to virtual agents (so we have as many virtual agents as we have function partitions). Recall that each partition (virtual agent) communicates with one server and hence, we set the weight for that server to be one and zero everywhere else. We have showed in Claim~\ref{Cl:Priv1} that this method makes individual objective functions private to adversarial servers. Constant weight matrix provides privacy of objective functions, hence we conjecture that random weights will be at least as private. This follows from the fact that with randomly weighted gradients (with coordinate-wise weights) observed by any server it will be difficult for any server to guess actual gradients, thereby, providing privacy to the functions (data) stored with clients.

Although use of random multiplicative weights provides privacy there could be inadvertent leakage of information, e.g. if $\nabla f_i(x) = 0$, then $W[J,i] \ \nabla f_i(x) = 0$ for any random weight $W[J,i]$. Hence, if a server receives zero gradient, then, it is clear that the true gradient $\nabla f_i(x)$ is zero. We can easily guarantee privacy even under this scenario by tweaking the uploaded gradient. If $\nabla f_i(x) = 0$, then, instead of uploading $W[J,i] \ \nabla f_i(x)$ (= 0), any client $i$ uploads $\tilde{g}^J_i$ to server $J$, such that the gradient uploads add up to $0$ (i.e. $\sum_{J} \tilde{g}^J_i = 0$) and $\|\tilde{g}^J_i\|$ is bounded. This condition is similar to SLC and ensures that the overall gradient from client $i$ to all servers is zero. In this scenario, servers cannot guess if the true gradient is zero, based solely on the observed gradient upload.


\section{Numerics} \label{Sec:SimulationResults}
In this section we present the performance of iterative projected gradient descent and consensus algorithm on a distributed optimization problem. A static distributed optimization problem is presented in Section~\ref{Sec:Numerics-Prob}. Synchronous algorithm convergence and optimality results are also exhibited. A dynamic distributed optimization problem and its convergence results is presented in Section~\ref{Sec:Numerics-D-Prob}. Effect of different server-server topologies is presented in Section~\ref{Sec:Numerics-TopEffects}.

\begin{figure}[h]
    \centering
    \includegraphics[height=2.0in]{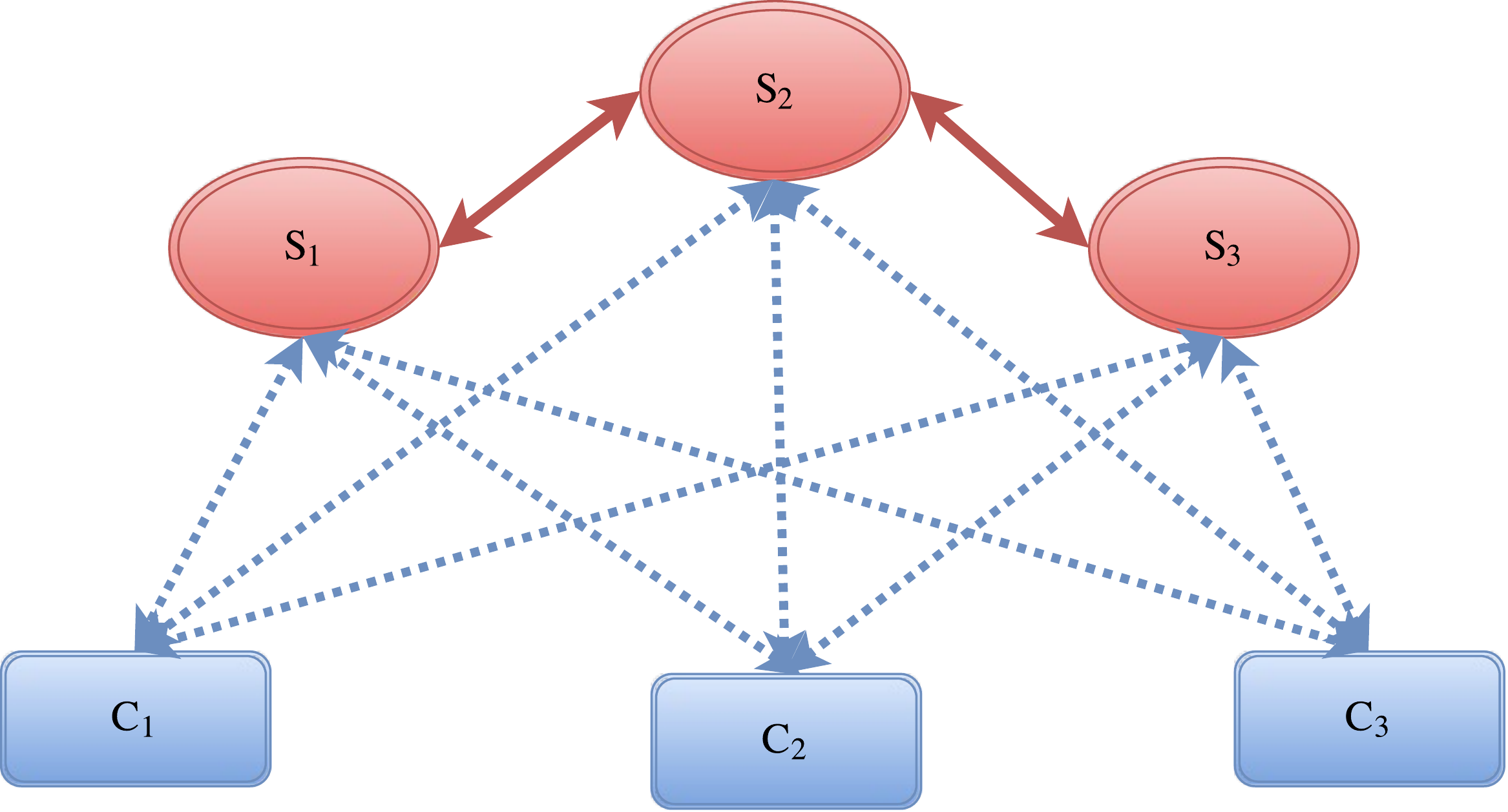}
    \caption{A schematic of distributed optimization problem: $S_i$ are Parameter Servers and $C_i$ are Clients. Red links represent the bidirectional flow of information between Servers and Blue links represents the downlink of states from Servers and uplink of gradients to Servers.}
    \label{Fig:Problem}
\end{figure}
\subsection{Distributed Optimization - A Static Case}  \label{Sec:Numerics-Prob}

In a static example we assume the server-server graph topology and server-client graph topology are both static. 

We consider a set of three Clients ($C_i$) each endowed with a scalar objective function ($f_i$) induced by their own dataset. The objective functions are given by, $f_1(x) = (x-1)^2$, $f_2(x) = (x-2)^2$ and $f_3(x) = (x-3)^2$. The scalar parameter belongs to the convex-compact decision set $\mathcal{X} = [-10, \; 10]$. We intend to find the minimizer of $f(x) = \sum_{i=1}^3 f_i (x)$, i.e.,
$$ \text{Find} \quad x^* \; = \; \underset{x \in \mathcal{X}}{\text{argmin}} \;  f(x). $$
Note that the functions have bounded gradients (over $\mathcal{X}$). $L_1 = 22$, $L_2 = 24$, $L_3 = 26$ and $\SB{L} = 72$. The gradients $g_h(x)$ for $h = 1, 2$ and $3$ are Lipschitz with constants being equal to 2, i.e. $N_1, N_2, N_3 = 2 \text{ and } \SB{N} = 6$. 

The Servers are connected in a static graph. Servers $S_1$ and $S_2$ are connected, and $S_2$ and $S_3$ are connected. The doubly stochastic weight matrix $B_k$ for this topology is a scrambling matrix. All clients are connected to all servers in a static graph (see Figure~\ref{Fig:Problem}). The weights are assumed to remain constant throughout the execution of the distributed optimization algorithm. The weight matrix is given by,
\begin{equation}
B_k = \begin{bmatrix}
0.8 & 0.2 & 0 \\
0.2 & 0.6 & 0.2 \\
0   & 0.2 & 0.8
\end{bmatrix}, \; \text{ and } \; W = \begin{bmatrix}
3 & -2 & -3 \\
-1 & 4 & -4 \\
-1 & -1 & 8
\end{bmatrix}.
\end{equation}
Note that $W$ satisfies the Symmetric Learning Condition (SLC) with $M = 1$ and the Bounded Update Condition (BUC) with $\bar{M} = 15$.

\begin{figure}[!b]
\begin{subfigure}{.5\textwidth}
  \centering
  \includegraphics[width=.95\linewidth]{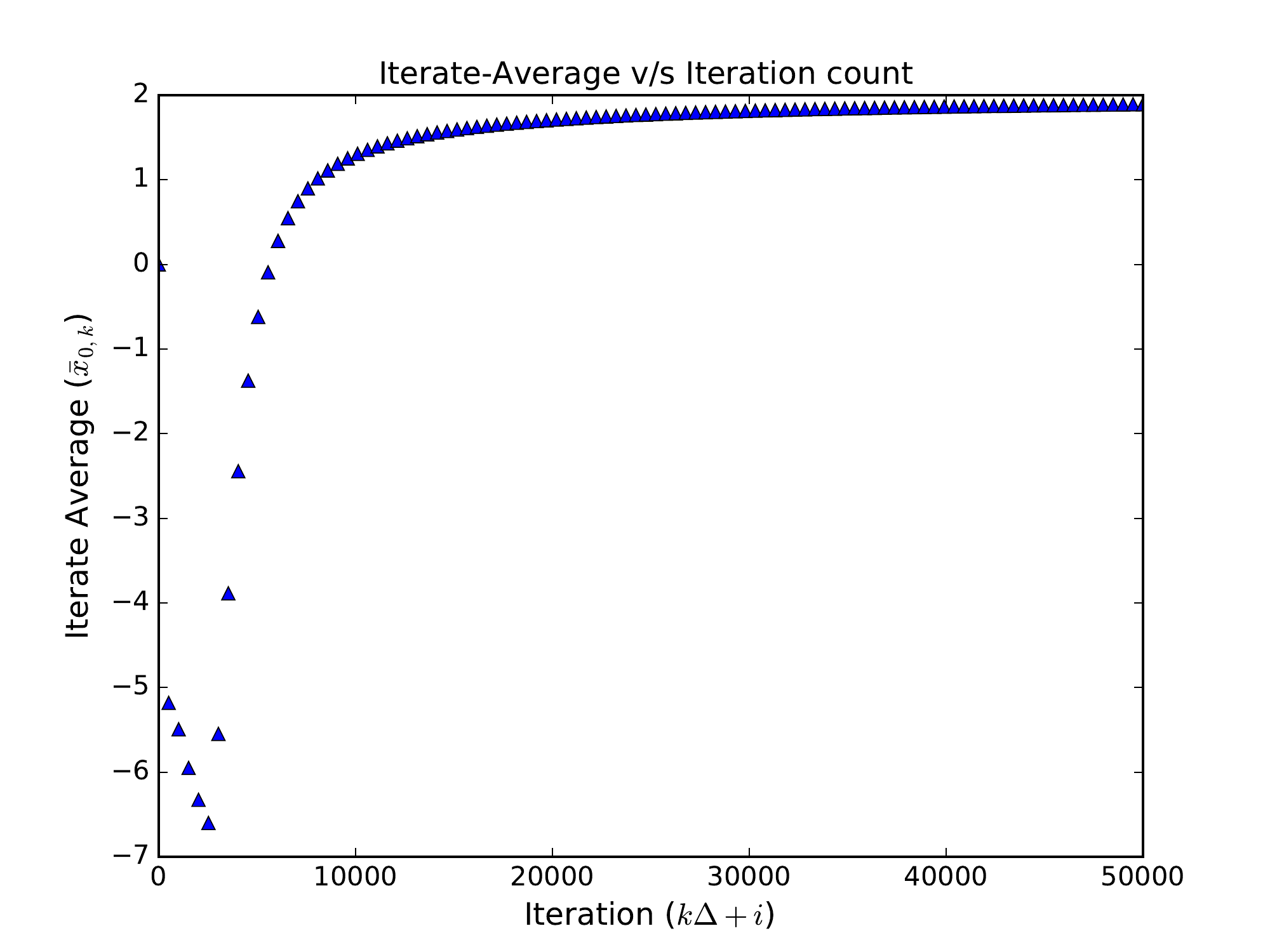}
  \caption{Iterate-average evolution over time.}
  \label{Fig:StatR-1}
\end{subfigure}%
\begin{subfigure}{.5\textwidth}
  \centering
  \includegraphics[width=.95\linewidth]{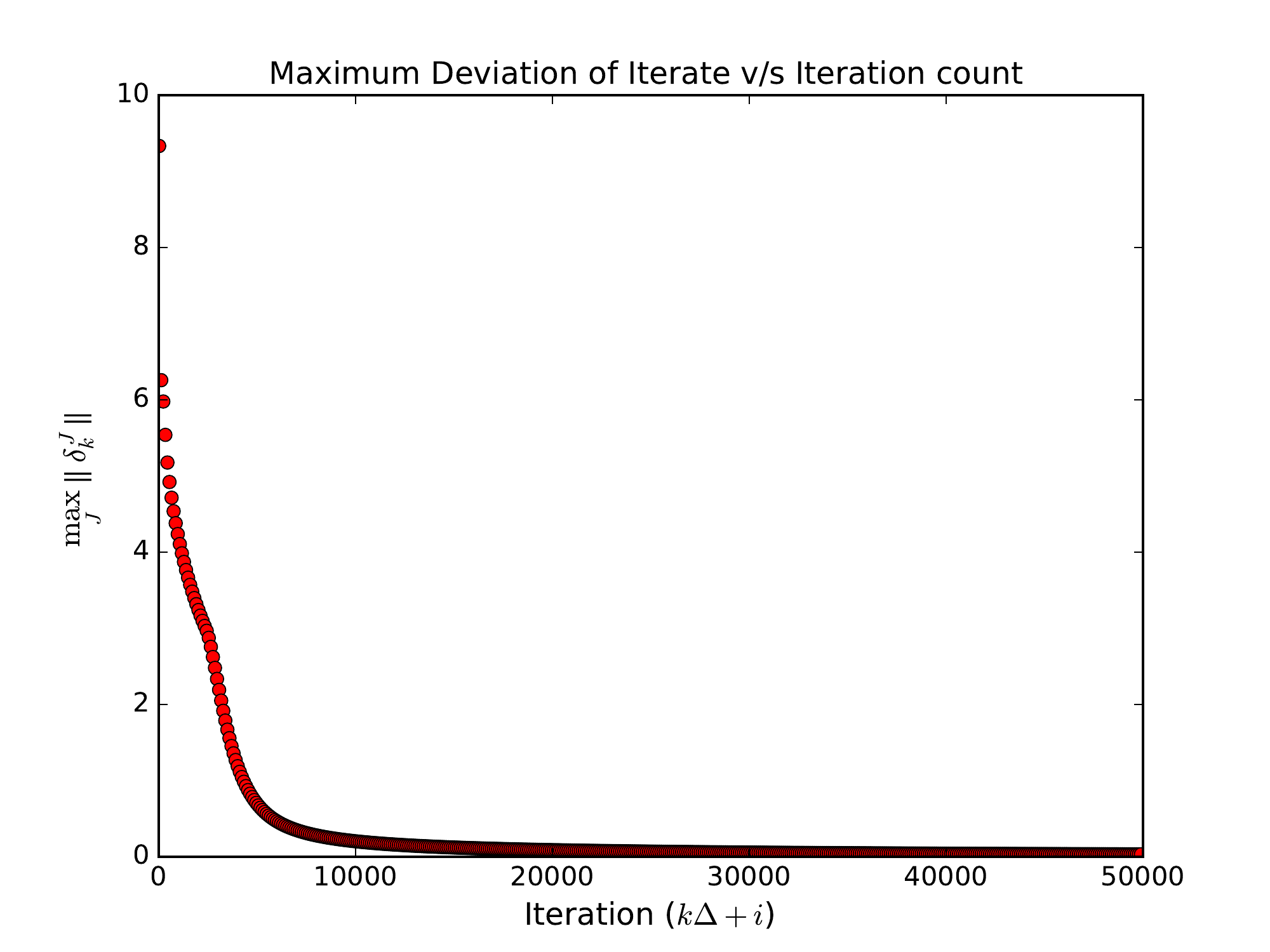}
  \caption{Evolution of maximum deviation of iterate from its average over time.}
  \label{Fig:StatR-2}
\end{subfigure} \\
\begin{subfigure}{.5\textwidth}
  \centering
  \includegraphics[width=.95\linewidth]{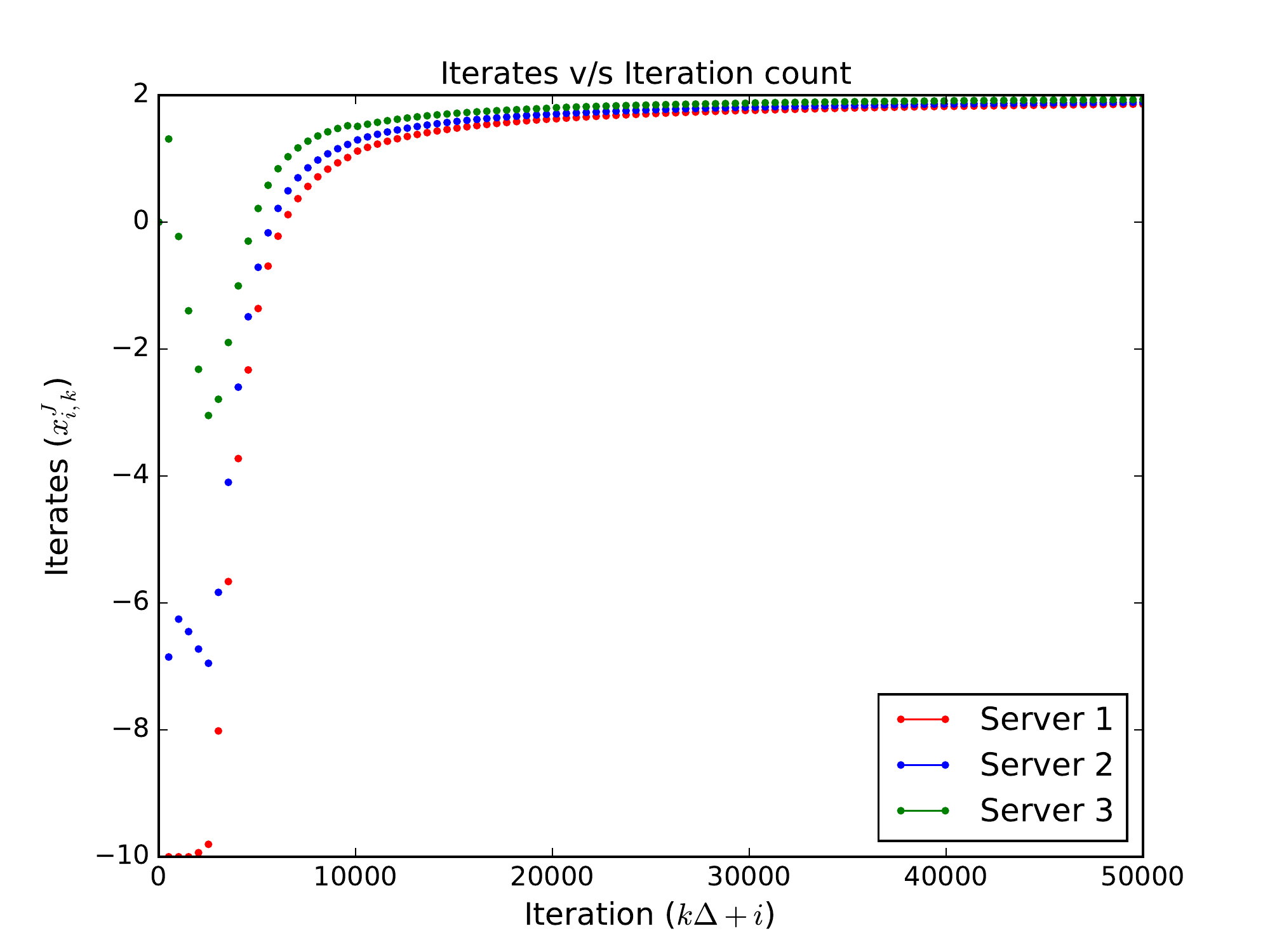}
  \caption{Evolution of iterates over time.}
  \label{Fig:StatR-3}
\end{subfigure}%
\begin{subfigure}{.5\textwidth}
  \centering
  \includegraphics[width=.95\linewidth]{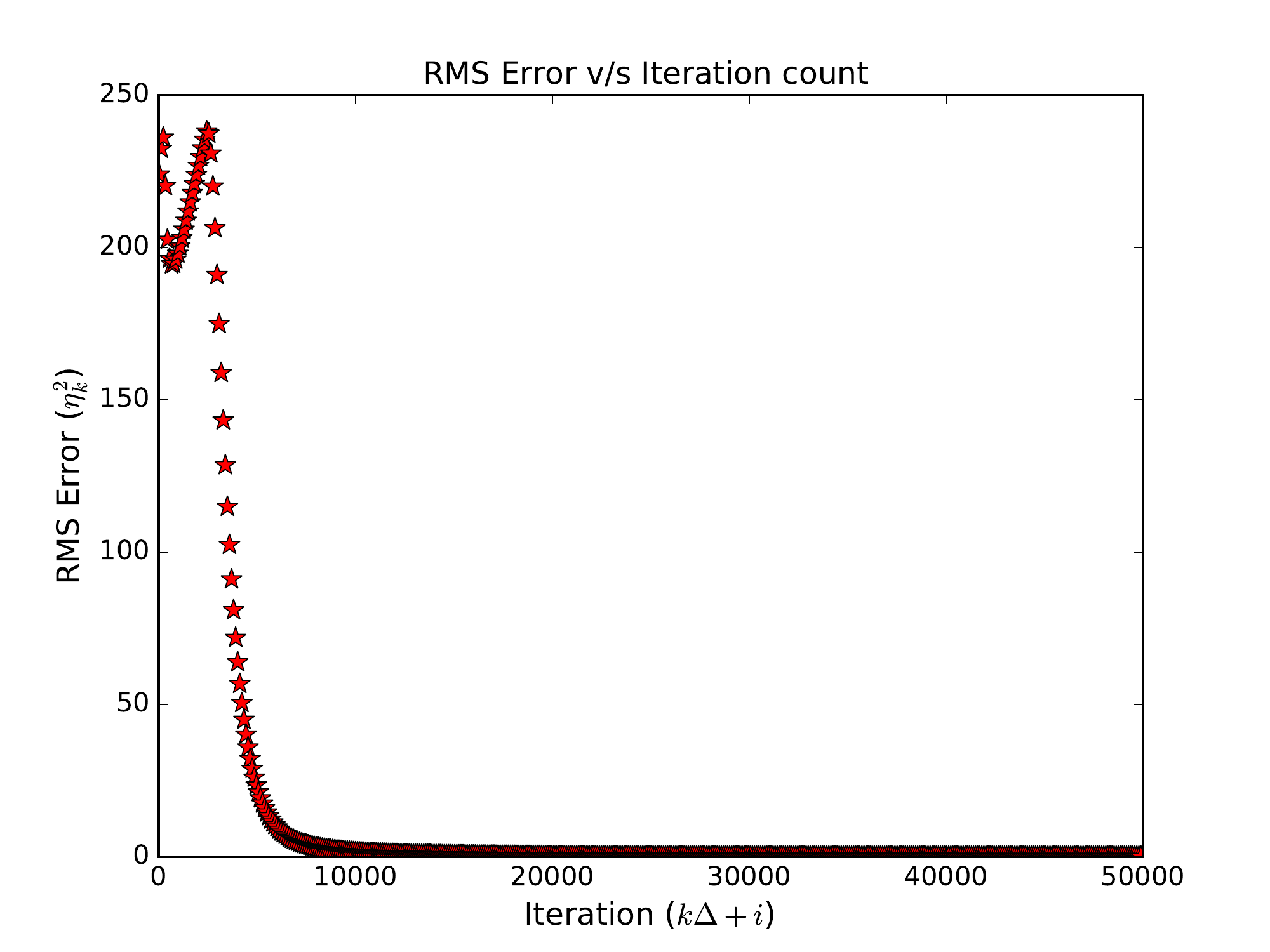}
  \caption{Evolution of RMS error $\eta_{i,k}^2$ over time.}
  \label{Fig:StatR-4}
\end{subfigure}
\caption{Static Problem}
\label{Fig:StatR}
\end{figure}

Interleaved Consensus and Projected Gradient Descent Algorithm described in Algorithm~\ref{Algo:ASLearn} and \ref{Algo:ASLearn2} is used to solve the distributed optimization problem described above. Consensus step is performed every five gradient descent steps ($\Delta = 5$). A decreasing learning rate is set $\alpha_k = 1/(k+0.0001)$. Note that this satisfies all the conditions on learning rate specified in Eq.~\ref{Eq:LearnStepCond}.

The server parameter ($x^J$) for all servers converges to $x^* = 2$. Figure~\ref{Fig:StatR-1} shows the convergence of iterate average to the optimum $x^* = 2$. The maximum deviation of iterate from its average over all servers is plotted with respect to time in Figure~\ref{Fig:StatR-2}. The plot shows that the server iterates converge to each other eventually as proved in Claim~\ref{Cl:Consensus}. The iterates ($x^J_{i,k}$) are plotted in Figure~\ref{Fig:StatR-3}. The figure shows convergence of iterates to the optimum. Figure~\ref{Fig:StatR-4} shows the decrease in RMS error ($\eta_k^2$). The RMS error decrease shows that \textit{all} servers move in the direction of the optimum. 

\begin{figure}[!b]
\begin{subfigure}{.5\textwidth}
  \centering
  \includegraphics[width=.95\linewidth]{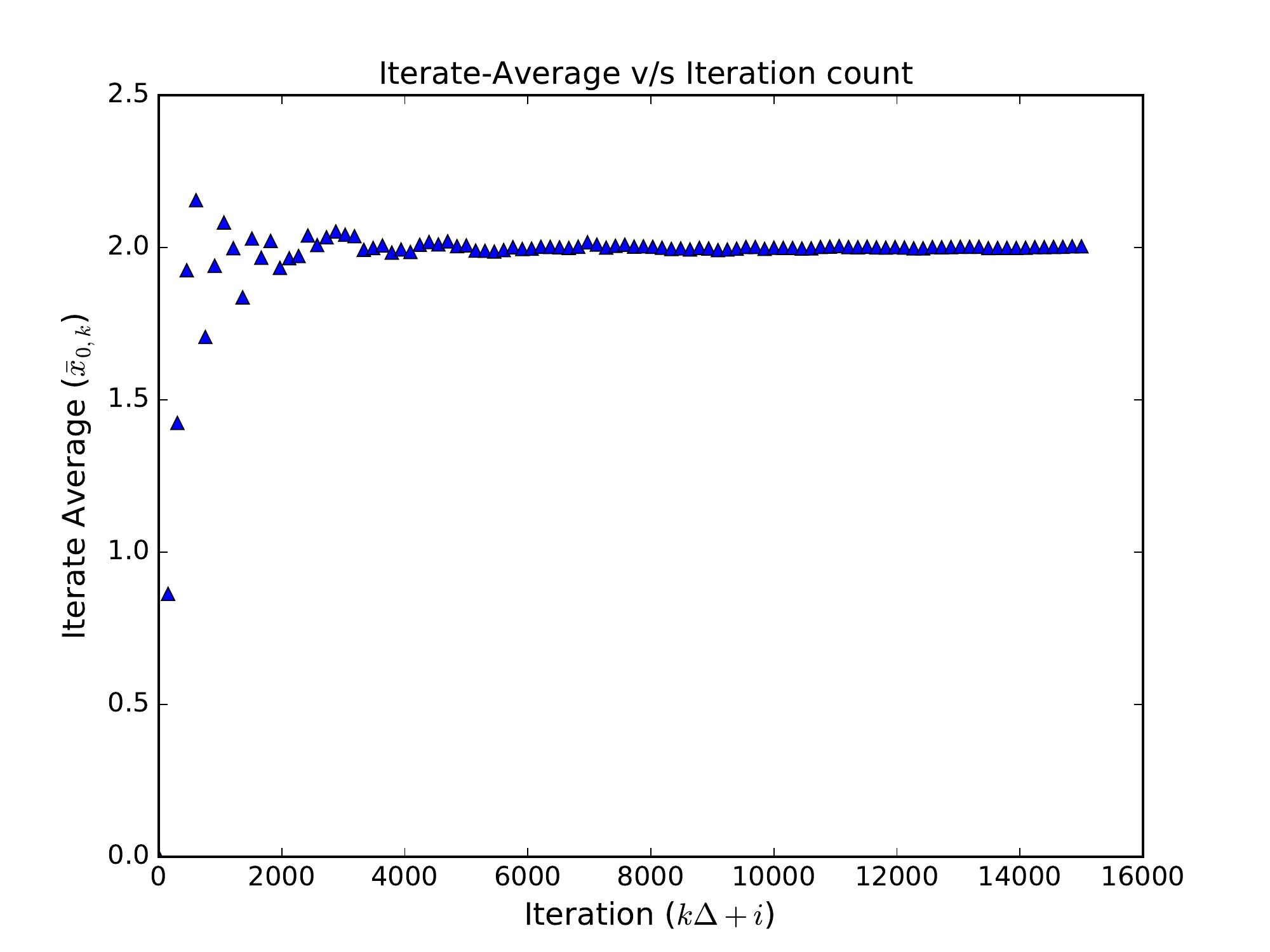}
  \caption{Iterate-average evolution over time.}
  \label{Fig:DynR-1}
\end{subfigure}%
\begin{subfigure}{.5\textwidth}
  \centering
  \includegraphics[width=.95\linewidth]{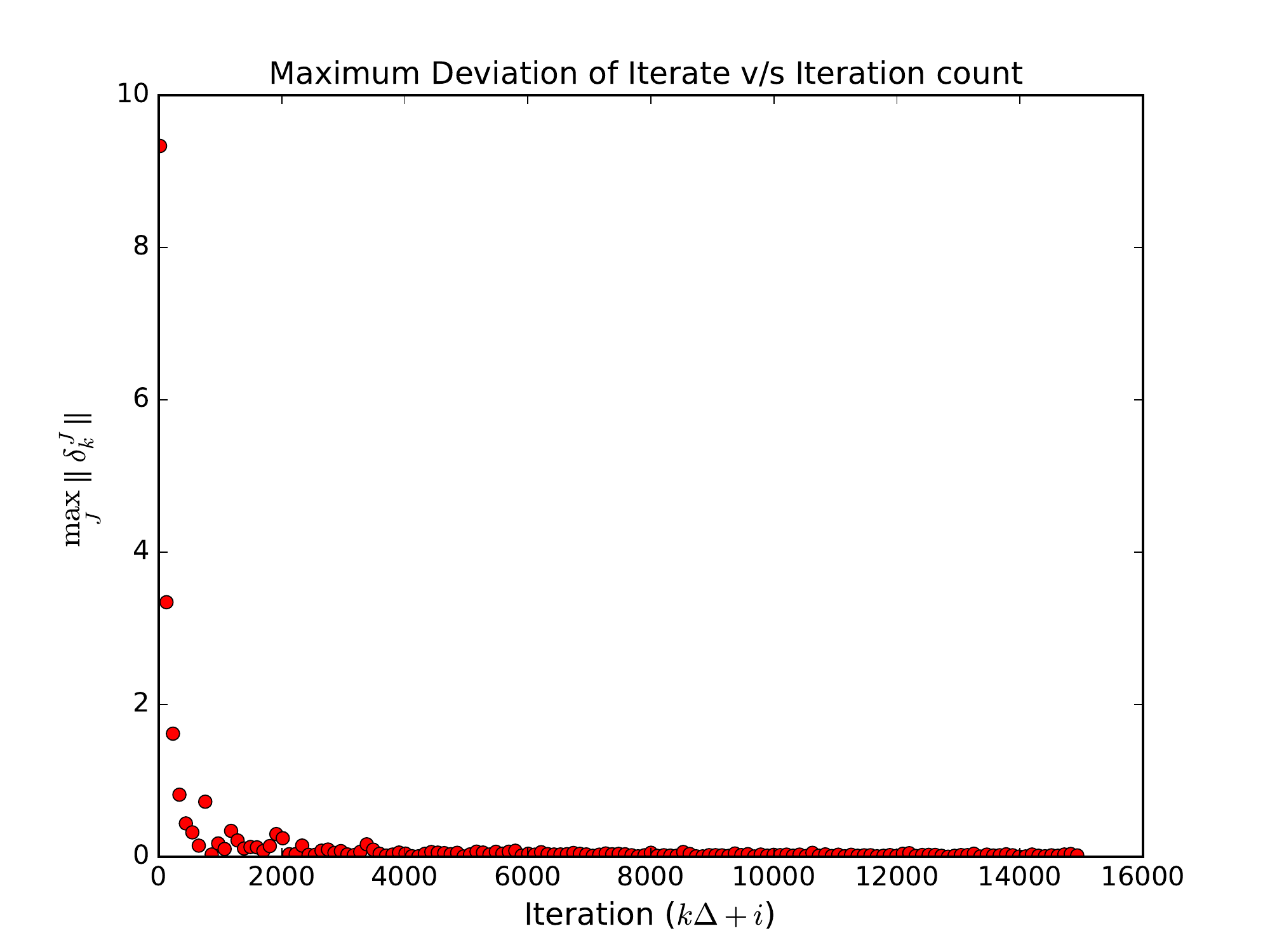}
  \caption{Evolution of maximum deviation of iterate from its average over time.}
  \label{Fig:DynR-2}
\end{subfigure} \\
\begin{subfigure}{.5\textwidth}
  \centering
  \includegraphics[width=.95\linewidth]{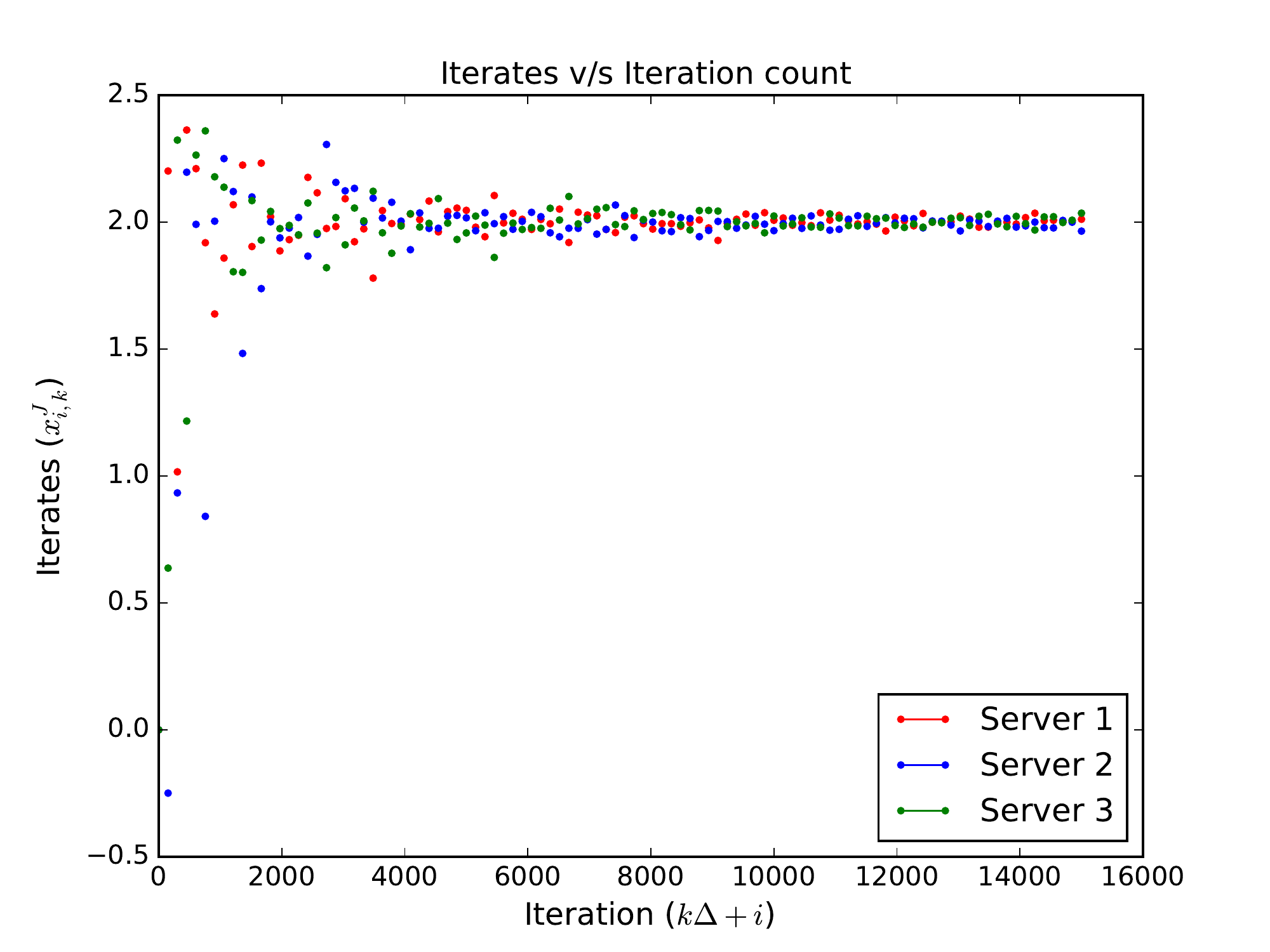}
  \caption{Evolution of iterates over time.}
  \label{Fig:DynR-3}
\end{subfigure}%
\begin{subfigure}{.5\textwidth}
  \centering
  \includegraphics[width=.95\linewidth]{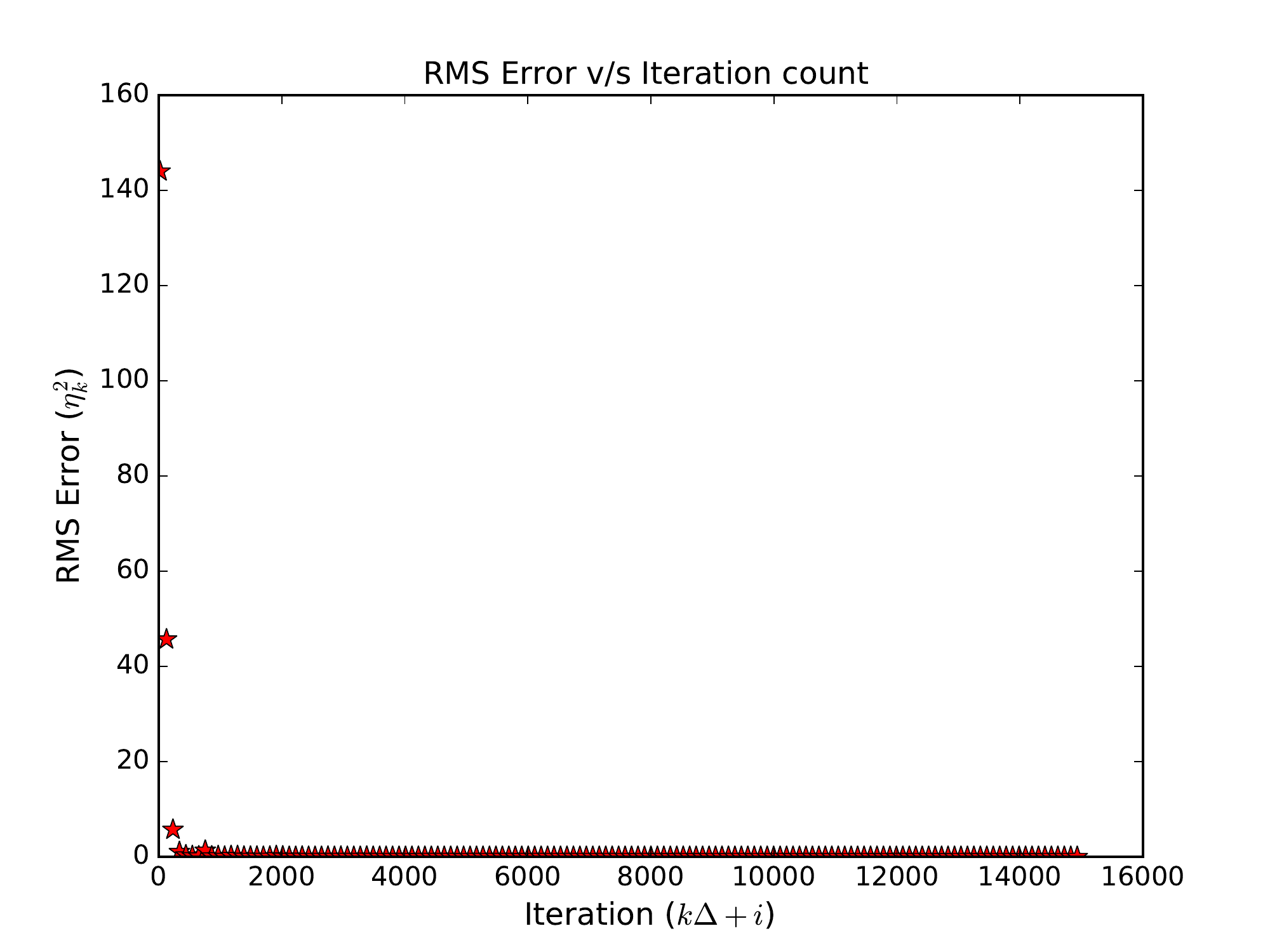}
  \caption{Evolution of RMS error $\eta_{i,k}^2$ over time.}
  \label{Fig:DynR-4}
\end{subfigure}
\caption{Dynamic Problem}
\label{Fig:DynR}
\end{figure}

\subsection{Distributed Optimization - A Dynamic Example}  \label{Sec:Numerics-D-Prob}
In the dynamic example we do not assyme anything about the server-client and server-server topologies apart from Assumptions~ \ref{Asmp:QConn} and \ref{Asmp:SerConn}.

We consider a set of three Clients ($C_i$) each endowed with a scalar objective function ($f_i$) induced by their own dataset. The objective functions are given by, $f_1(x) = (x-1)^2$, $f_2(x) = (x-2)^2$ and $f_3(x) = (x-3)^2$. The scalar parameter belongs to the convex-compact decision set $\mathcal{X} = [-10, \; 10]$. We intend to find the minimizer of $f(x) = \sum_{i=1}^3 f_i (x)$, i.e.,
$$ \text{Find} \quad x^* \; = \; \underset{x \in \mathcal{X}}{\text{argmin}} \;  f(x). $$
The gradient bounds and Lipschitz constants are same as the static case, i.e. $L_1 = 22$, $L_2 = 24$, $L_3 = 26$, $\SB{L} = 72$ and $N_1 = 2$, $N_2 = 2$, $N_3 = 2$, $\SB{N} = 6$. 

The Servers form a arbitrarily time-varying dynamic graph (connected). The doubly stochastic weight matrices $B_k$ for all possible connected topologies are given by,
\begin{equation}
{B_k}_1 = \begin{bmatrix}
0.8 & 0.2 & 0 \\
0.2 & 0.6 & 0.2 \\
0   & 0.2 & 0.8
\end{bmatrix}, \;
{B_k}_2 = \begin{bmatrix}
0.6 & 0.2 & 0.2 \\
0.2 & 0.8 & 0.0 \\
0.2 & 0.0 & 0.8
\end{bmatrix}, \text{ and }
{B_k}_3 = \begin{bmatrix}
0.8 & 0.0 & 0.2 \\
0.0 & 0.8 & 0.2 \\
0.2 & 0.2 & 0.6
\end{bmatrix}.
\end{equation}
Note that all three matrices are scrambling matrices and hence any choice of $B_k$ is acceptable.

All clients are connected to servers in a static graph. The weights are arbitrarily changed throughout the execution of the distributed optimization algorithm. The weight matrix $W$ is constructed to satisfy the Symmetric Learning Condition (SLC) with $M = 1$ and the Bounded Update Condition (BUC) with $\bar{M} = 15$. 

Interleaved Consensus and Projected Gradient Descent algorithm described in Algorithm~\ref{Algo:ASLearn} and \ref{Algo:ASLearn2} is used to solve the distributed optimization problem described above. Consensus step is performed every five gradient descent steps ($\Delta = 5$). A decreasing learning rate is set $\alpha_k = 1/(k+0.0001)$. Note that this satisfies all the conditions on learning rate specified in Eq.~\ref{Eq:LearnStepCond}. 

The server parameter ($x^J$) for all servers converges to $x^* = 2$. Figure~\ref{Fig:DynR-1} shows the convergence of iterate average to the optimum $x^* = 2$. Figures~\ref{Fig:DynR-2}, ~\ref{Fig:DynR-3} and ~\ref{Fig:DynR-4} depict the maximum deviation of iterate from the iterate-average, iterate convergence to each other and the decrease in RMS error ($\eta_k^2$) respectively. 

\begin{figure}[!b]
\begin{subfigure}{.5\textwidth}
  \centering
  \includegraphics[width=.65\linewidth]{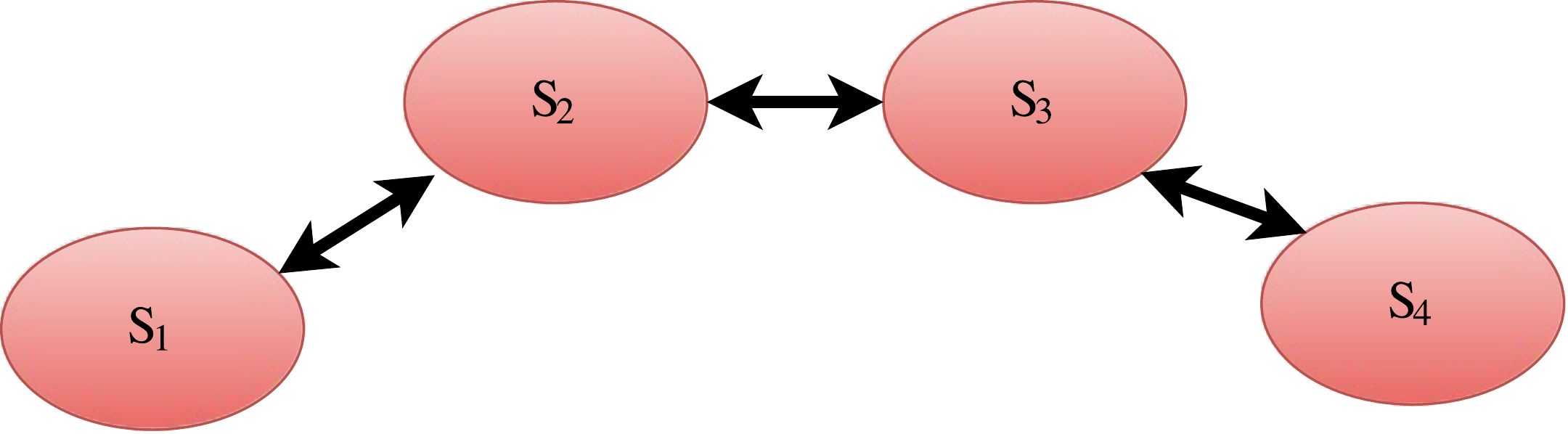}
  \caption{Line/Path Graph.}
  \label{Fig:T-1}
\end{subfigure}%
\begin{subfigure}{.5\textwidth}
  \centering
  \includegraphics[width=.45\linewidth]{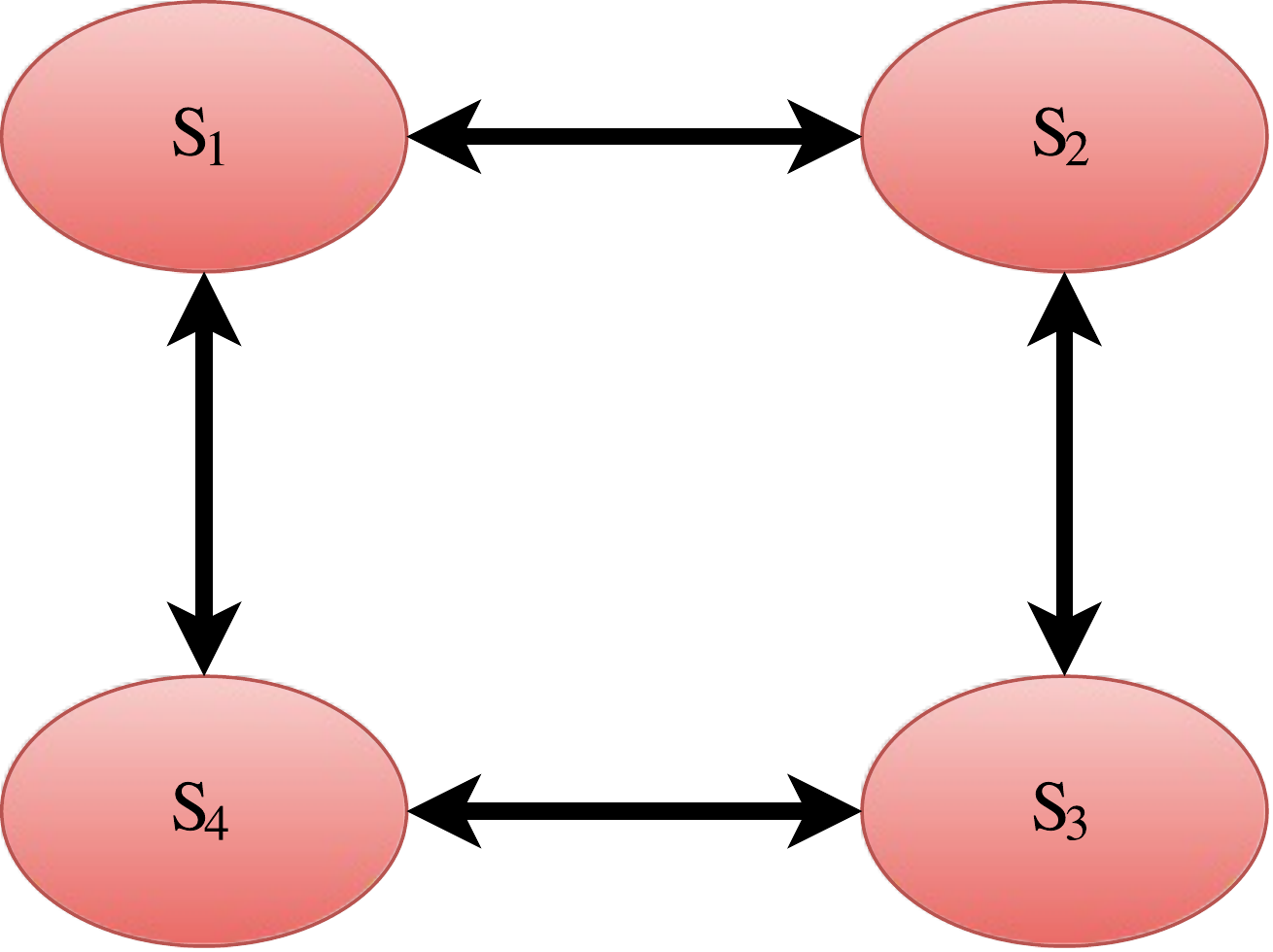}
  \caption{Cyclic Graph.}
  \label{Fig:T-2}
\end{subfigure} \\
\begin{subfigure}{.5\textwidth}
  \centering
  \includegraphics[width=.55\linewidth]{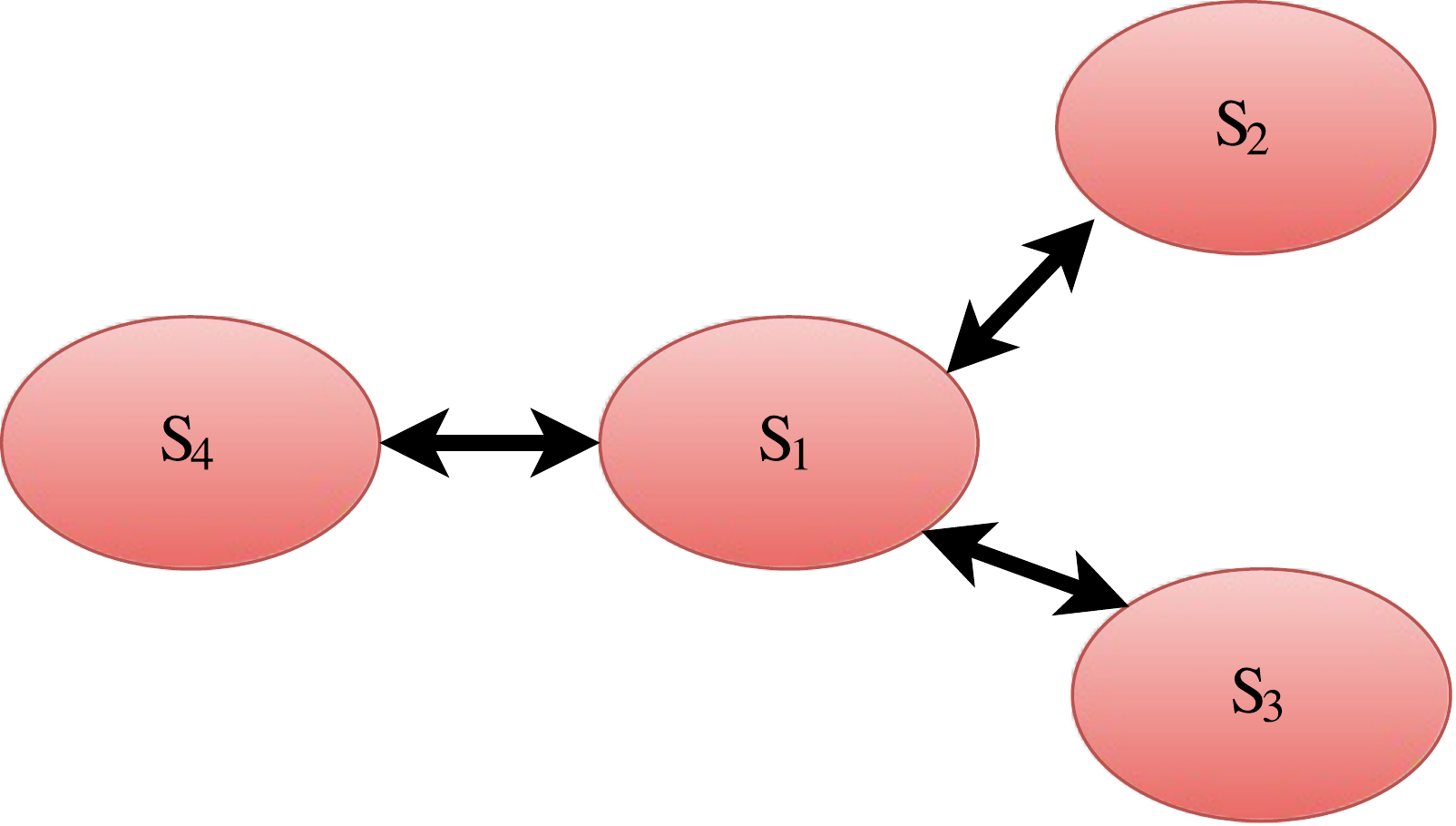}
  \caption{Star Graph.}
  \label{Fig:T-3}
\end{subfigure}%
\begin{subfigure}{.5\textwidth}
  \centering
  \includegraphics[width=.45\linewidth]{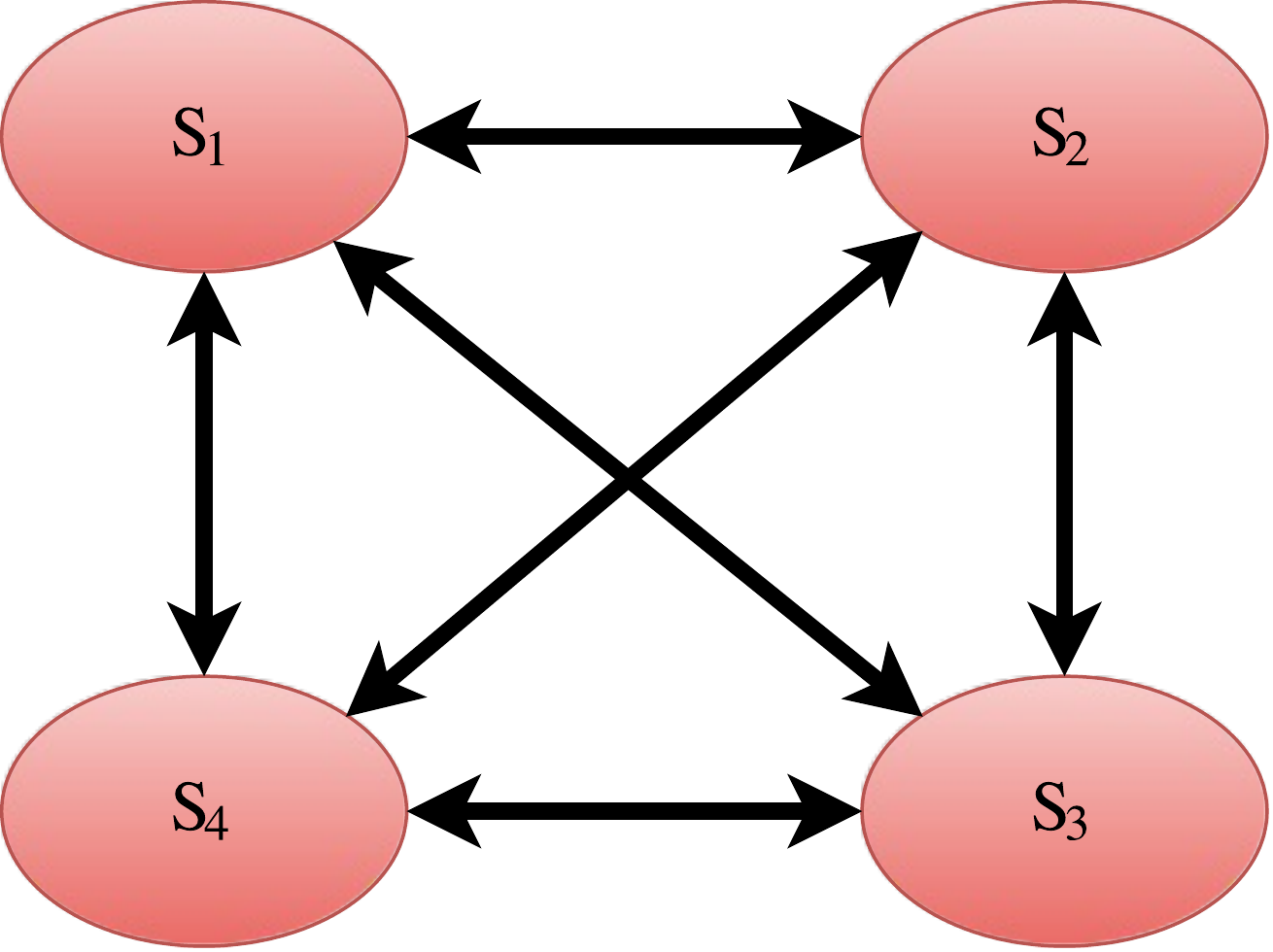}
  \caption{Complete Graph.}
  \label{Fig:T-4}
\end{subfigure}
\caption{Server-Server Graph Topologies}
\label{Fig:T}
\end{figure}

\subsection{Static and Dynamic Cases - Comparison}
The results are similar for the static and dynamic example presented in Sections~\ref{Sec:Numerics-Prob} and \ref{Sec:Numerics-D-Prob} respectively. Interleaved algorithm converges to the optimum in both cases. However, an interesting difference is that the iterates converge to the optimum much faster in the dynamic example. This can be seen clearly from the comparison of evolution of iterate-average (Figures~\ref{Fig:StatR-1} and \ref{Fig:DynR-1}, note the X-axes have different limits) and the RMS error plot (Figures~\ref{Fig:StatR-4} and \ref{Fig:DynR-4}, note the X-axes have different limits). 

The difference between static and dynamic solutions can be attributed to the fact that in the static case specific servers are forced in opposite direction (negative and positive weights). It takes several iterations ($\alpha_k$ has to be small enough) for the iterates to converge in response to the consensus step. However, in the dynamic case a server iterate may move closer or away from the optimum arbitrarily and hence it converges to the optimum relatively faster.

\subsection{Effect of Server Topologies} \label{Sec:Numerics-TopEffects}
We study the effect of different server topologies on the performance of Interleaved algorithm for a static distributed optimization problem. We consider four servers ($S_1$, $S_2$, $S_3$ and $S_4$) and three clients ($C_1$, $C_2$ and $C_3$). The objective functions for the three clients are same as seen in the static problem defined in Section~\ref{Sec:Numerics-Prob}. The optimum of $f(x)$ is $x^* = 2$. The weight matrix $W$ is given by,
\begin{equation*}
    W = \begin{bmatrix}
    4 & -1 & -2 \\
    -1 & 4 & -3 \\
    -1 & -1 & 8 \\
    -1 & -1 & -2
    \end{bmatrix}.
\end{equation*}
Note that this weight matrix satisfies SLC with $M = 1$ and BUC with $\bar{M} = 15$.

\begin{figure}[!b]
\begin{subfigure}{.48\textwidth}
  \centering
  \includegraphics[width=.95\linewidth]{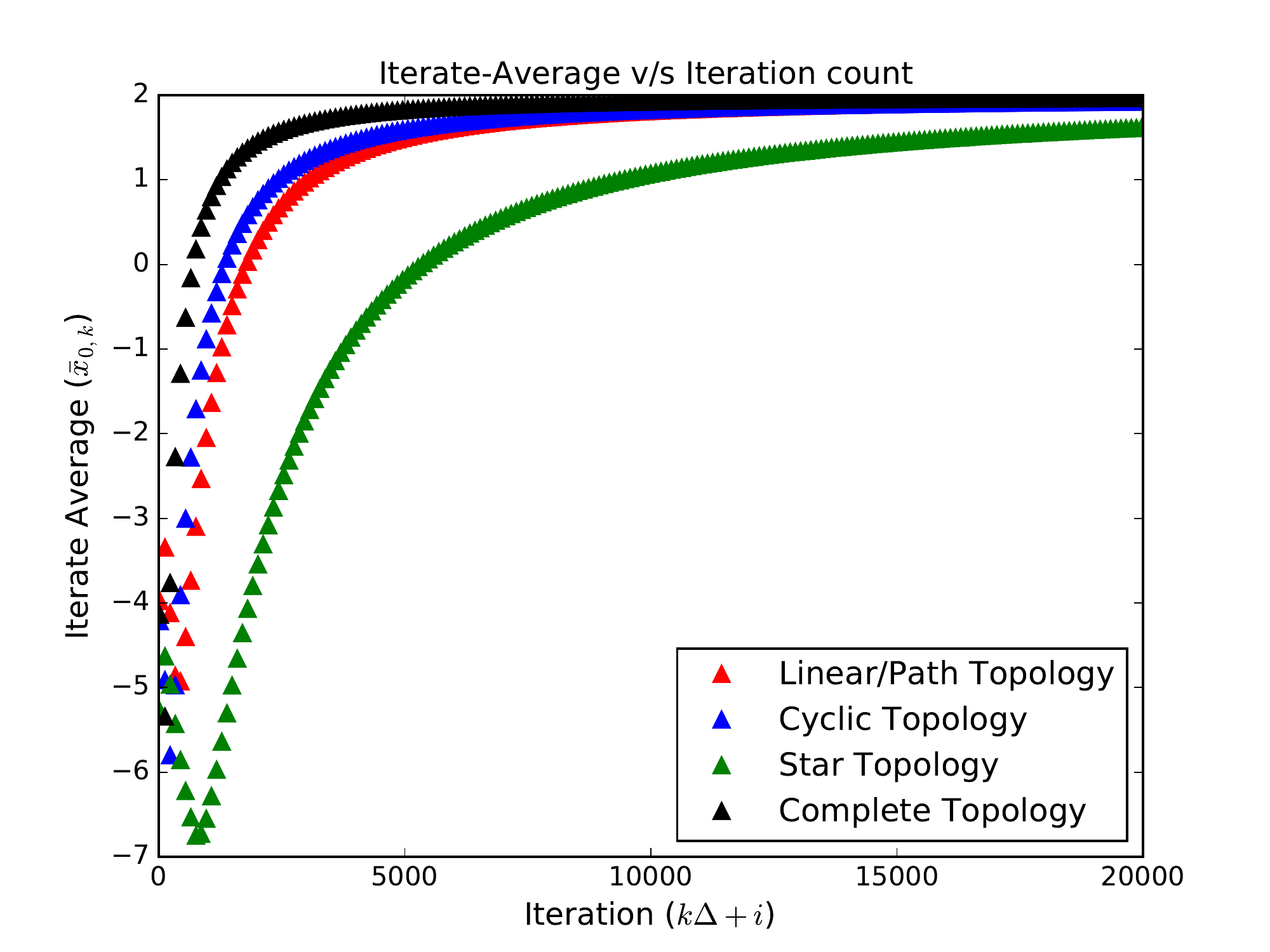}
  \caption{Iterate-average evolution over time with different server topologies.}
  \label{Fig:ST-1}
\end{subfigure} \hfill
\begin{subfigure}{.48\textwidth}
  \centering
  \includegraphics[width=.95\linewidth]{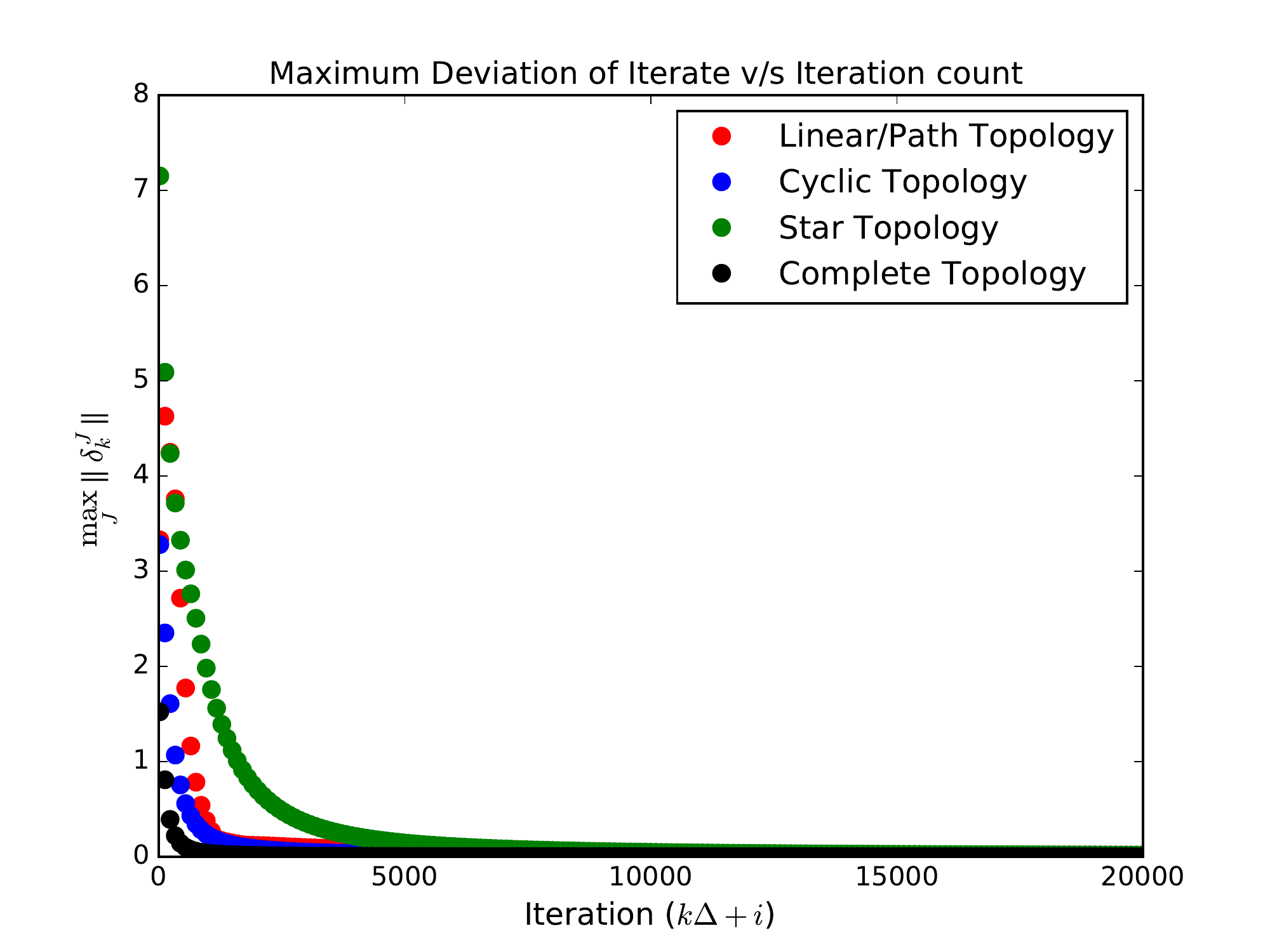}
  \caption{Evolution of maximum deviation of iterate from its average over time for different server topologies.}
  \label{Fig:ST-2}
\end{subfigure} \\
\centering
\begin{subfigure}{.52\textwidth}
  \centering
  \includegraphics[width=.9\linewidth]{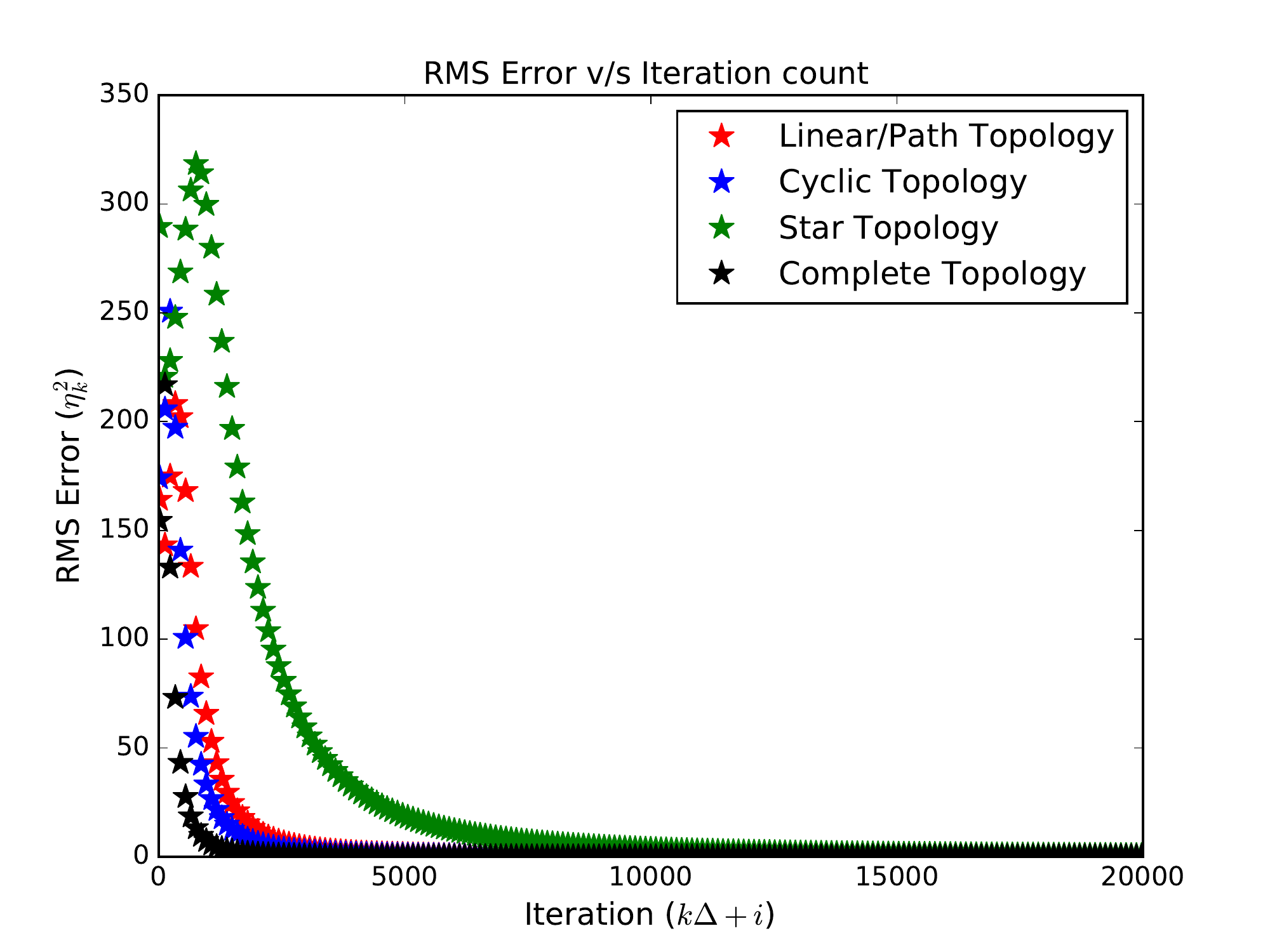}
  \caption{Evolution of RMS error time for different server topologies.}
  \label{Fig:ST-3}
\end{subfigure}
\caption{Effect of different Server-Server topologies (static case).}
\label{Fig:ST}
\end{figure}

We consider four different server topologies (constant over time) and corresponding doubly stochastic consensus weight matrices - Linear/Path graph (Figure~\ref{Fig:T-1}, ${B_k}_1$), Cyclic graph (Figure~\ref{Fig:T-2}, ${B_k}_2$), Star graph (Figure~\ref{Fig:T-3}, ${B_k}_3$) and Complete graph (Figure~\ref{Fig:T-4}, ${B_k}_4$),
\begin{align*}
&{B_k}_1 = \begin{bmatrix}
0.8 & 0.2 & 0 & 0.0\\
0.2 & 0.6 & 0.2 & 0.2\\
0.0   & 0.2 & 0.6 & 0.2 \\
0.0 & 0.0 & 0.2 & 0.8 \\
\end{bmatrix}, \;
{B_k}_2 = \begin{bmatrix}
0.6 & 0.2 & 0.0 & 0.2\\
0.2 & 0.6 & 0.2 & 0.0 \\
0.0 & 0.2 & 0.6 & 0.2 \\
0.2 & 0.0 & 0.2 & 0.6
\end{bmatrix}, \; \\ 
&{B_k}_3 = \begin{bmatrix}
0.4 & 0.2 & 0.2 & 0.2 \\
0.2 & 0.8 & 0.0 & 0.0 \\
0.2 & 0.0 & 0.8 & 0.0 \\
0.2 & 0.0 & 0.0 & 0.8
\end{bmatrix}, \; \text{and} \; {B_k}_4 = \begin{bmatrix}
0.4 & 0.2 & 0.2 & 0.2 \\
0.2 & 0.4 & 0.2 & 0.2\\
0.2 & 0.2 & 0.4 & 0.2 \\
0.2 & 0.2 & 0.2 & 0.4
\end{bmatrix}.
\end{align*}
Note that only Star and Complete graph topologies give rise to a scrambling transition matrix, $B_k$.

The simulation results in Figure~\ref{Fig:ST} show that the server-server communication topology has a significant impact on the speed of convergence. Star topology performs the worst in terms of speed of convergence (Figure~\ref{Fig:ST-1}) and reduction of RMS error (Figure~\ref{Fig:ST-3}). As expected, complete graph, server-server topology performs best, since it allows for fastest mixing of information. Linear and Cyclic graph both perform similar, however, Cyclic topology performs slightly better. This is also reasonable, since cyclic allows better mixing, (servers 1 and 4 are connected in cyclic graph, however, they are not connected in linear graph).

\section{Discussion}
We present a privacy promoting algorithm for distributed optimization in multi parameter server architecture. In Section~\ref{Sec:ConvergenceResults} correctness and optimality of the algorithm are proved. Simulation results presented in Section~\ref{Sec:SimulationResults} also show that iterates converge to the optimum. The speed of convergence depends on the communication topology between servers. We also show through simulation that the algorithm gets a significant speedup when the gradient weights are randomly changed at every iteration (as opposed to being constant). We reason that time-varying gradient weights ensures that no iterate gets pushed away from the optimum (negative weight) for extended periods of time. Hence, a iterate that gets kicked away from the equilibrium in an iteration, can easily get back towards optimum at the next instant when the weight becomes positive, resulting in faster convergence. 

We show convergence properties for this synchronous algorithm with a fixed $\Delta$. However, we conjecture that the algorithm will work as intended even when $\Delta$ changes during an execution, as long as it is frequent enough. This algorithm can also be tweaked to perform correctly in an asynchronous execution. Asynchronous algorithm along with its convergence analysis will be exhibited in a future technical report.   

Privacy enhancement comes from the random weights ($W$ matrix) used by the clients when the gradients are uploaded to parameter servers. Since, the weights are random, servers are oblivious to the true gradients. Enhanced privacy is achieved so long as the servers do not collaborate and share the gradients received from the clients directly. As the model is shared among servers, the parameter servers learn the correct model, however, cannot figure out specifics about the data improving privacy. We present a few intuitive arguments for privacy in function splitting approach (cf. \cite{gade16convsum}) in Section~\ref{Sec:PrivacyDiscussion} and draw parallels to motivate privacy in our algorithm. Detailed analysis of privacy guarantees will be presented in a future technical report. \\

\noindent \textbf{Optimization of Convex aggregate of Non-convex functions}

The analysis in Section~\ref{Sec:ConvergenceResults} motivates a solution for another interesting problem. Let us consider $C$ agents each endowed with a private objective function $f_i(x)$ that are not guaranteed to be convex, however the sum $f(x) = \sum_{i=1}^C f_i(x)$ is known to be convex. We refer to such an objective function ($f(x)$) as convex aggregate of non-convex functions. We intend to solve distributed learning problem (Problem~\ref{Prob:GlobalOpt}). We prove, in our report \cite{gade16convsum}, that coupled consensus and gradient descent algorithm presented in \cite{nedic2009distributed} can optimize convex sum of non-convex functions.

\section*{Acknowledgements}
Authors would like to thank and acknowledge Yuan-Ting Hu for stimulating discussions and valuable inputs.

\bibliography{Central.bib}

\begin{thebibliography}{10}

\bibitem{abadi2016tensorflow}
M.~Abadi, P.~Barham, J.~Chen, Z.~Chen, A.~Davis, J.~Dean, M.~Devin,
  S.~Ghemawat, G.~Irving, M.~Isard, {\em et~al.}, ``Tensorflow: A system for
  large-scale machine learning,'' {\em arXiv preprint arXiv:1605.08695}, 2016.

\bibitem{nesterov2012efficiency}
Y.~Nesterov, ``Efficiency of coordinate descent methods on huge-scale
  optimization problems,'' {\em SIAM Journal on Optimization}, vol.~22, no.~2,
  pp.~341--362, 2012.

\bibitem{liu2015asynchronous}
J.~Liu and S.~J. Wright, ``Asynchronous stochastic coordinate descent:
  Parallelism and convergence properties,'' {\em SIAM Journal on Optimization},
  vol.~25, no.~1, pp.~351--376, 2015.

\bibitem{agarwal2011distributed}
A.~Agarwal and J.~C. Duchi, ``Distributed delayed stochastic optimization,'' in
  {\em Advances in Neural Information Processing Systems}, pp.~873--881, 2011.

\bibitem{Singh2014}
C.~Singh, A.~Nedi{\'{c}}, and R.~Srikant, ``{Random Block-Coordinate Gradient
  Projection Algorithms},'' in {\em 53rd IEEE Conference on Decision and
  Control}, pp.~185--190, 2014.

\bibitem{Nedic2007}
A.~Nedi{\'{c}} and A.~Ozdaglar, ``{On the rate of convergence of distributed
  subgradient methods for multi-agent optimization},'' {\em Proceedings of the
  IEEE Conference on Decision and Control}, vol.~54, no.~1, pp.~4711--4716,
  2007.

\bibitem{Nedi2015}
A.~Nedi{\'{c}} and A.~Olshevsky, ``{Distributed Optimization Over Time-Varying
  Directed Graphs},'' {\em IEEE Transactions on Automatic Control}, vol.~60,
  no.~3, pp.~601--615, 2015.

\bibitem{hadjicostis2016robust}
C.~N. Hadjicostis, N.~H. Vaidya, and A.~D. Dom{\'\i}nguez-Garc{\'\i}a, ``Robust
  distributed average consensus via exchange of running sums,'' {\em IEEE
  Transactions on Automatic Control}, vol.~61, no.~6, pp.~1492--1507, 2016.

\bibitem{nedic2011asynchronous}
A.~Nedi{\'{c}}, ``Asynchronous broadcast-based convex optimization over a
  network,'' {\em IEEE Transactions on Automatic Control}, vol.~56, no.~6,
  pp.~1337--1351, 2011.

\bibitem{wei20131}
E.~Wei and A.~Ozdaglar, ``On the o (1= k) convergence of asynchronous
  distributed alternating direction method of multipliers,'' in {\em Global
  Conference on Signal and Information Processing (GlobalSIP), 2013 IEEE},
  pp.~551--554, IEEE, 2013.

\bibitem{zhang2014asynchronous}
R.~Zhang and J.~T. Kwok, ``Asynchronous distributed admm for consensus
  optimization.,'' in {\em ICML}, pp.~1701--1709, 2014.

\bibitem{ram2010distributed}
S.~S. Ram, A.~Nedi{\'c}, and V.~V. Veeravalli, ``Distributed stochastic
  subgradient projection algorithms for convex optimization,'' {\em Journal of
  optimization theory and applications}, vol.~147, no.~3, pp.~516--545, 2010.

\bibitem{ram2009incremental}
S.~S. Ram, A.~Nedi{\'{c}}, and V.~V. Veeravalli, ``Incremental stochastic
  subgradient algorithms for convex optimization,'' {\em SIAM Journal on
  Optimization}, vol.~20, no.~2, pp.~691--717, 2009.

\bibitem{su2015fault}
L.~Su and N.~Vaidya, ``Fault-tolerant multi-agent optimization: Part iii,''
  {\em arXiv preprint arXiv:1509.01864}, 2015.

\bibitem{Su2016podc}
L.~Su and N.~H. Vaidya, ``Fault-tolerant multi-agent optimization: Optimal
  iterative distributed algorithms,'' in {\em Proceedings of the 2016 ACM
  Symposium on Principles of Distributed Computing}, PODC '16, (New York, NY,
  USA), pp.~425--434, ACM, 2016.

\bibitem{huang2015differentially}
Z.~Huang, S.~Mitra, and N.~Vaidya, ``Differentially private distributed
  optimization,'' in {\em Proceedings of the 2015 International Conference on
  Distributed Computing and Networking}, p.~4, ACM, 2015.

\bibitem{Li2013}
M.~Li, L.~Zhou, Z.~Yang, A.~Li, and F.~Xia, ``{Parameter Server for Distributed
  Machine Learning},'' {\em Big Learning Workshop}, pp.~1--10, 2013.

\bibitem{NIPS2014_5597}
M.~Li, D.~G. Andersen, A.~J. Smola, and K.~Yu, ``Communication efficient
  distributed machine learning with the parameter server,'' in {\em Advances in
  Neural Information Processing Systems 27} (Z.~Ghahramani, M.~Welling,
  C.~Cortes, N.~D. Lawrence, and K.~Q. Weinberger, eds.), pp.~19--27, Curran
  Associates, Inc., 2014.

\bibitem{shokri2015privacy}
R.~Shokri and V.~Shmatikov, ``Privacy-preserving deep learning,'' in {\em
  Proceedings of the 22nd ACM SIGSAC Conference on Computer and Communications
  Security}, pp.~1310--1321, ACM, 2015.

\bibitem{abadi2016deep}
M.~Abadi, A.~Chu, I.~Goodfellow, H.~B. McMahan, I.~Mironov, K.~Talwar, and
  L.~Zhang, ``Deep learning with differential privacy,'' {\em arXiv preprint
  arXiv:1607.00133}, 2016.

\bibitem{gade16PLN}
S.~Gade and N.~Vaidya, ``Private learning on networks,'' {\em arXiv preprint
  arXiv:1612.05236}, 2016.

\bibitem{bertsekas1976goldstein}
D.~P. Bertsekas, ``{On the Goldstein-Levitin-Polyak gradient projection
  method},'' {\em Automatic Control, IEEE Transactions on}, vol.~21, no.~2,
  pp.~174--184, 1976.

\bibitem{tsi1984phdthesis}
J.~Tsitsiklis, {\em Problems in decentralized decision making and computation}.
\newblock PhD thesis, Massachusetts Institute of Technology, 1984.

\bibitem{nedic2009distributed}
A.~Nedi{\'c} and A.~Ozdaglar, ``Distributed subgradient methods for multi-agent
  optimization,'' {\em Automatic Control, IEEE Transactions on}, vol.~54,
  no.~1, pp.~48--61, 2009.

\bibitem{blondel2005convergence}
V.~Blondel, J.~M. Hendrickx, A.~Olshevsky, J.~Tsitsiklis, {\em et~al.},
  ``Convergence in multiagent coordination, consensus, and flocking,'' in {\em
  IEEE Conference on Decision and Control}, no.~3, p.~2996, IEEE; 1998, 2005.

\bibitem{xiao2007distributed}
L.~Xiao, S.~Boyd, and S.-J. Kim, ``Distributed average consensus with
  least-mean-square deviation,'' {\em Journal of Parallel and Distributed
  Computing}, vol.~67, no.~1, pp.~33--46, 2007.

\bibitem{robbins1985convergence}
H.~Robbins and D.~Siegmund, ``A convergence theorem for non negative almost
  supermartingales and some applications,'' in {\em Herbert Robbins Selected
  Papers}, pp.~111--135, Springer, 1985.

\bibitem{bertsekas2003convex}
D.~P. Bertsekas, A.~Nedi{\'{c}}, A.~E. Ozdaglar, {\em et~al.}, {\em Convex
  analysis and optimization}.
\newblock Athena Scientific, 2003.

\bibitem{fax2004information}
J.~A. Fax and R.~M. Murray, ``Information flow and cooperative control of
  vehicle formations,'' {\em IEEE transactions on automatic control}, vol.~49,
  no.~9, pp.~1465--1476, 2004.

\bibitem{seneta2006non}
E.~Seneta, {\em Non-negative matrices and Markov chains}.
\newblock Springer Science \& Business Media, 2006.

\bibitem{gade16convsum}
S.~Gade and N.~Vaidya, ``Distributed optimization of convex sum of non-convex
  functions,'' {\em arXiv preprint arXiv:1608.05401}, 2016.

\bibitem{nedic2001incremental}
A.~Nedi{\'{c}} and D.~P. Bertsekas, ``Incremental subgradient methods for
  nondifferentiable optimization,'' {\em SIAM Journal on Optimization},
  vol.~12, no.~1, pp.~109--138, 2001.

\end{thebibliography}
\bibliographystyle{ieeetr}

\newpage
\appendix
\section{Appendices}
\subsection{Alternate Proof for Eq.~\ref{Eq:ConsBoundktok1}} \label{sec:Apx0}
\begin{proof}
We begin with the consensus update equation in Eq.~\ref{Eq:ConsensusUpdateRelation1}, followed by subtracting vector $y \in \mathcal{X}$ on both sides.
\begin{align}
x^I_{k+1} &= \sum_{J=1}^S B_k[I,J] x^J_k       \qquad \qquad \qquad \ldots Eq.~\ref{Eq:ConsensusUpdateRelation1} \\
x^I_{k+1} - y &= \sum_{J=1}^S B_k[I,J] (x^J_k - y)  \qquad \ \quad \ldots y = \sum_{J=1}^S B_k[I,J] y \\
z^I_{k+1} &= \sum_{J=1}^S B_k[I,J] z^J_k \qquad \qquad \qquad \ldots z^I_{k+1} = x^I_{k+1} - y, \; \forall \ I = \{1, 2, \ldots, S\}
\end{align}
\noindent We now norm both sides of the quality and use the property that norm of sum is less than or equal to sum of norms.
\begin{align}
\|z^I_{k+1}\| = \|\sum_{J=1}^S B_k[I,J] z^J_k\| \leq \sum_{J=1}^S B_k[I,J] \|z^J_k\| \qquad \qquad \ldots B_k[I,J] \geq 0
\end{align}
\noindent Squaring both sides in the above inequality followed by algebraic expansion, we get, 
\begin{align}
\|z^I_{k+1}\|^2 &\leq \left(\sum_{J=1}^S B_k[I,J] \|z^J_k\|\right)^2 = \sum_{J=1}^S B_k[I,J]^2 \|z^J_k\|^2 + 2 \sum_{M \neq J} B_k[I,J] \ B_k[I,M] \ \|z^J_k\| \|z^M_k\| 
\end{align}
\noindent Now we use the property, $a^2 + b^2 \geq 2 a b$ for any $a, b$ in the latter term of the expansion above,
\begin{align}
\|z^I_{k+1}\|^2 &\leq \sum_{J=1}^S B_k[I,J]^2 \|z^J_k\|^2 + \sum_{M \neq J} B_k[I,J] \ B_k[I,M] \ \left(\|z^J_k\|^2 + \|z^M_k\|^2\right). 
\end{align}
Rearranging and using row stochasticity of $B_k$ matrix we get,
\begin{align}
\|z^I_{k+1}\|^2 &\leq \sum_{J=1}^S \left[ \|z^J_k\|^2 \left(B_k[I,J]^2 + \sum_{M \neq J} B_k[I,J] \ B_k[I,M] \right) \right]\\
&\leq \sum_{J=1}^S \left[ \|z^J_k\|^2 \left(B_k[I,J] \left( B_k[I,J] + \sum_{M \neq J} B_k[I,M] \right) \right) \right] \\
&\leq \sum_{J=1}^S B_k[I,J] \|z^J_k\|^2  \qquad \ldots B_k[I,J] + \sum_{M \neq J} B_k[I,M] = 1,\; \text{$B_k$ is row stochastic}
\end{align}
\noindent Summing the inequality over all servers, $I = 1, 2, \ldots, S$, 
\begin{align}
\sum_{I=1}^S \|z^I_{k+1}\|^2 &\leq \sum_{I=1}^S \sum_{J=1}^S B_k[I,J] \|z^J_k\|^2 = \sum_{J=1}^S \left(\|z^J_k\|^2 \left[ \sum_{I=1}^S B_k[I,J] \right] \right) \\
&\leq \sum_{J=1}^S \|z^J_k\|^2  \qquad \ldots \sum_{I=1}^S B_k[I,J] = 1,\; \text{$B_k$ is column stochastic}
\end{align}
\noindent This gives us Eq.~\ref{Eq:ConsBoundktok1},
\begin{align}
\sum_{J=1}^S \|x^J_{k+1} - y\|^2 \leq \sum_{J=1}^S \|x^J_k - y\|^2 
\end{align}
$\hfill \blacksquare$
\end{proof}

\subsection{Non-Negative Weight Matrix with Complete Server Graph} \label{sec:ApxA}
\subsubsection{Proof for Lemma~\ref{Lem:IterateConvNNWCSG}}
\begin{proof}
All servers agree to the state at the end of a cycle. This is a direct consequence of performing consensus after every $ \Delta$ projected gradient descent steps. 
\begin{equation}
x^I_{0,k} = x^J_{0,k} \qquad \forall \; I \neq J \text{and} \; \forall k 
\end{equation}

Using the non-expansivity property of the Projection operator, convexity of functions $f_i$, and gradient boundedness (Assumption~\ref{Asmp:SubBound}) we have,
\begin{align}
\| x^J_{i,k} - y \|^2 &= \| \mathcal{P}_\mathcal{X} [x^J_{i-1,k} - \alpha_k \sum_{h=1}^C W_{i-1,k}[J,h] \;  g_h(x^J_{i-1,k})] - y\|^2 \nonumber \\
&\leq \| x^J_{i-1,k} - \alpha_k \sum_{h=1}^C W_{i-1,k}[J,h] \  g_h(x^J_{i-1,k}) - y\|^2 \nonumber\\
&\leq \| x^J_{i-1,k} - y \|^2 +\alpha_k ^2 \left( \| \sum_{h=1}^C W_{i-1,k}[J,h] \  g_h(x^J_{i-1,k}) \|^2 \right) \nonumber \\
& \qquad \qquad \qquad \qquad \qquad \qquad - 2 \alpha_k \left(\sum_{h=1}^C W_{i-1,k}[J,h] \  g_h(x^J_{i-1,k}) \right)^T (x^J_{i-1,k} - y) \nonumber \\
&\leq \| x^J_{i-1,k} - y \|^2 + \alpha_k^2 \left( \sum_{h=1}^C W_{i-1,k}[J,h] \  L_h \right)^2 \nonumber \\
& \qquad \qquad \qquad \qquad \qquad \qquad - 2 \alpha_k \left(\sum_{h=1}^C W_{i-1,k}[J,h] \ \left(f_h(x^J_{i-1,k}) - f_h(y)\right) \right) \nonumber 
\end{align}

Adding the above inequality for all time instants within a cycle i.e. $i = 1, \; \cdots, \;  \Delta$, and cancelling telescoping terms;

\begin{align}
\| x^J_{ \Delta,k} - y \|^2 &\leq \| x^J_{0,k} - y \|^2 + \alpha_k^2 \sum_{i=1}^{ \Delta}\left(\sum_{h=1}^C W_{i-1,k}[J,h] \  L_h \right)^2 \nonumber \\ 
& \qquad - 2 \alpha_k \sum_{i=1}^{ \Delta}\left(\sum_{h=1}^C W_{i-1,k}[J,h] \ \left(f_h(x^J_{i-1,k}) - f_h(y)\right) \right)
\end{align}

Adding the above inequality for all servers, i.e. $J = 1, \; \cdots, \; S$;
\begin{align}
\sum_{J = 1}^S \| x^J_{ \Delta,k} - y \|^2 &\leq \sum_{J = 1}^S \| x^J_{0,k} - y \|^2 + \alpha_k^2 \sum_{J = 1}^S \sum_{i=1}^{ \Delta}\left( \sum_{h=1}^C W_{i-1,k}[J,h] \  L_h \right)^2 \nonumber \\ 
& \qquad - 2 \alpha_k \sum_{J = 1}^S \sum_{i=1}^{ \Delta}\left(\sum_{h=1}^C W_{i-1,k}[J,h] \ \left(f_h(x^J_{i-1,k}) - f_h(y)\right) \right)
\end{align}

We consider now the consensus step, use Jensen's inequality and non-negativity of norm to get,
\begin{align}
\| x^J_{0,k+1} - y \|^2 &= \| \frac{1}{S} \sum_{J = 1}^S x^J_{ \Delta,k} - y \|^2 \nonumber \\
&=  \| \frac{1}{S} \left( \sum_{J = 1}^S \left(x^J_{ \Delta,k} - y\right) \right) \|^2 \;
\leq \; \frac{1}{S} \sum_{J = 1}^S  \| x^J_{ \Delta,k} - y \|^2  {\; \leq \; \sum_{J = 1}^S  \| x^J_{ \Delta,k} - y \|^2} \label{Eq:ConsensusStep}
\end{align}

Combining the two inequalities presented above and strengthening the inequality we get,
\begin{align}
&\| x^J_{0,k+1} - y \|^2 \leq \frac{1}{S} \sum_{J = 1}^S  \| x^J_{ \Delta,k} - y \|^2 \leq \frac{1}{S} \sum_{J = 1}^S \| x^J_{0,k} - y \|^2 + \frac{1}{S} \alpha_k^2 \sum_{J = 1}^S \sum_{i=1}^{ \Delta}\left( \sum_{h=1}^C W_{i-1,k}[J,h] \  L_h\right)^2 \nonumber \\ 
& \qquad - 2 \frac{1}{S} \alpha_k \sum_{J = 1}^S \sum_{i=1}^{ \Delta}\left(\sum_{h=1}^C W_{i-1,k}[J,h] \ \left(f_h(x^J_{i-1,k}) - f_h(y)\right) \right) \nonumber \\
&\leq \frac{1}{S} \sum_{J = 1}^S \| x^J_{0,k} - y \|^2 - 2 \frac{M}{S} \alpha_k \left( f(x^J_{0,k}) - f(y)\right) + \frac{1}{S} \alpha_k^2 \sum_{J = 1}^S \sum_{i=1}^{ \Delta}\left(\sum_{h=1}^C W_{i-1,k}[J,h] \  L_h\right)^2 \nonumber \\ 
& \qquad - 2 \frac{1}{S} \alpha_k \sum_{J = 1}^S \sum_{i=1}^{ \Delta}\left(\sum_{h=1}^C W_{i-1,k}[J,h] \ \left(f_h(x^J_{i-1,k}) - f_h(x^J_{0,k})\right) \right) \ldots \text{Add-subtract $f_h(x_{0,k})$}  \nonumber \\
&\leq \frac{1}{S} \sum_{J = 1}^S \| x^J_{0,k} - y \|^2 - 2 \frac{M}{S} \alpha_k \left( f(x^J_{0,k}) - f(y)\right) + \frac{1}{S} \alpha_k^2 \sum_{J = 1}^S \sum_{i=1}^{ \Delta}\left( \sum_{h=1}^C W_{i-1,k}[J,h] \  L_h  \right)^2 \nonumber \\ 
& \qquad + 2 \frac{1}{S} \alpha_k \sum_{J = 1}^S \sum_{i=1}^{ \Delta}\left(\sum_{h=1}^C W_{i-1,k}[J,h] \ L_h \left( \| x^J_{i-1,k} - x^J_{0,k} \| \right) \right) \ldots \text{Strengthening the ineq.}  \nonumber  \\
&\leq \frac{1}{S} \sum_{J = 1}^S \| x^J_{0,k} - y \|^2 - 2 \frac{M}{S} \alpha_k \left( f(x^J_{0,k}) - f(y)\right) + \frac{1}{S} \alpha_k^2 \sum_{J = 1}^S \sum_{i=1}^{ \Delta}\left(\sum_{h=1}^C W_{i-1,k}[J,h] \  L_h  \right)^2 \nonumber \\ 
& \qquad + 2 \frac{1}{S} \alpha_k^2 \sum_{J = 1}^S \sum_{i=1}^{ \Delta}\left(\sum_{h=1}^C W_{i-1,k}[J,h] \ L_h \left(\sum_{t=1}^{i-1}\sum_{l = 1}^{C} W_{t-1,k}[J,l] L_l \right) \right) \ldots \text{Boundedness of gradient}    \nonumber\\
&\leq \frac{1}{S} \sum_{J = 1}^S \| x^J_{0,k} - y \|^2 - 2 \frac{M}{S} \alpha_k \left( f(x^J_{0,k}) - f(y)\right) \nonumber\\
& \qquad + \frac{1}{S} \alpha_k^2 \sum_{J = 1}^S \sum_{i=1}^{ \Delta}\left( \left(\sum_{h=1}^C W_{i-1,k}[J,h] \  L_h \right)^2  + 2 \left(\sum_{h=1}^C W_{i-1,k}[J,h] L_h \left( \sum_{t=1}^{i-1}\sum_{l = 1}^{C} W_{t-1,k}[J,l] L_l \right) \right) \right)  \nonumber   \\
&\leq \frac{1}{S} \sum_{J = 1}^S \| x^J_{0,k} - y \|^2 - 2 \frac{M}{S} \alpha_k \left( f(x^J_{0,k}) - f(y)\right) + \frac{1}{S} \alpha_k^2 C_0^2 \qquad  \ldots \text{Rearranging terms} \nonumber \\
&\leq \| x^J_{0,k} - y \|^2 - 2 \frac{M}{S} \alpha_k \left( f(x^J_{0,k}) - f(y)\right) + \frac{1}{S} \alpha_k^2 C_0^2   \qquad \ldots \text{$x^J_{0,k} = x^I_{0,k}$ for all $I,J$}
\label{Eq:ConvLem}
\end{align}
where, in the second to last step, using non-negativity of $W$ matrix we get, $$C_0^2 = \sum_{J = 1}^S \sum_{i=1}^{ \Delta}\left( \left(\sum_{h=1}^C W_{i-1,k}[J,h] \  L_h \right)^2  + 2 \left(\sum_{h=1}^C W_{i-1,k}[J,h] L_h \left( \sum_{t=1}^{i-1}\sum_{l = 1}^{C} W_{t-1,k}[J,l] L_l \right) \right) \right)$$ for some real number $C_0 > 0$.
$\hfill \blacksquare$
\end{proof}

\subsubsection{Proof for Theorem~\ref{Th:ConvNNWCSG}}
\begin{proof}
The proof closely follows the proof to Proposition 2.1 in \cite{nedic2001incremental}. We use the fact that once the iterate enters a certain level set it will not get too far away from the set. Let us fix a $\gamma > 0$ and let $k_0$ be such that $\gamma \geq \alpha_k C_0^2/(M)$ for all $k \geq k_0$. 

\begin{itemize}
\item[] Case 1: $f(x_{0,k}) > f^* + \gamma$

We have from Lemma~\ref{Lem:IterateConvNNWCSG}, for all $x^* \in \mathcal{X}^*$ and all $k \geq k_0$, 

\begin{align} 
\| x_{0,k+1} - x^* \|^2 &\leq \| x_{0,k} - x^* \|^2 - 2 \frac{M}{S} \alpha_k \left( f(x^J_{0,k}) - f(x^*)\right) + \frac{1}{S} \alpha_k^2 C_0^2 \nonumber \\
&\leq \| x_{0,k} - x^* \|^2 - 2 \frac{M}{S} \alpha_k \gamma + \frac{1}{S} \alpha_k^2 C_0^2 \nonumber \\
&\leq \| x_{0,k} - x^* \|^2 -  \frac{\alpha_k M}{S} (2 \gamma - \frac{\alpha_k C_0^2}{M}) \; \leq \; \| x_{0,k} - x^* \|^2 - \frac{M \alpha_k \gamma}{S}. \nonumber
\end{align}

The above inequality implies that,
\begin{equation}
dist(x_{0,k+1}, \mathcal{X}^*) \leq dist(x_{0,k}, \mathcal{X}^*) - \frac{M \alpha_k \gamma}{S}.
\label{Eq:ConvResult-1}
\end{equation}
Hence, for all $k \geq k_0$, the distance of the iterate from the optimal set $\mathcal{X}^*$ is decreasing due to assumptions on $\alpha_k$ (so long as $f(x_{0,k}) > f^* + \gamma$ remains valid). 

\item[] Case 2: $f(x_{0,k}) \leq f^* + \gamma$

This scenario (Case 2) occurs for infinitely many $k$. This follows directly from Eq.~\ref{Eq:ConvResult-1}, and the assumptions on $\alpha_k$ \footnote{To elaborate further, whenever $f(x_{0,k}) > f^* + \gamma$ is satisfied, $x_{0,k}$ moves closer to $x^*$ and $f(x_{0,k})$ moves closer to $f^*$ (this follows from continuity of f). Hence, $f(x_{0,k}) \leq f^* + \gamma$ will happen infinitely often. Even if for certain iterations, the function value increases and Case 1 gets satisfied, the iterate will be pushed back so that Case 2 occurs.}. $x_{0,k}$ belongs to the level set $L_{\gamma} = \{ y \in \mathcal{X} | f(y) \leq f^* + \gamma \}$. The level set is bounded due to boundedness of $\mathcal{X}^*$ and we have,
$$ dist(x_{0,k}, \mathcal{X}^*) \leq d(\gamma) \triangleq \max_{y \in L_\gamma} dist(y,\mathcal{X}^*)  < \infty.$$
We get from Eq.~\ref{Eq:ConvLem}, $\| x_{0,k+1} - x_{0,k} \| \leq \alpha_k C_0/\sqrt{S}$, by substituting $y = x_{0,k}$. Further using triangle inequality as $\| x_{0,k+1} - x^* \| \leq \| x_{0,k+1} - x_{0,k} \| + \| x_{0,k} - x^* \|$, we have,
$$dist(x_{0,k+1}, \mathcal{X}^*) =  \| x_{0,k+1} - x^* \| \leq  \| x_{0,k} - x^* \| + \| x_{0,k+1} - x_{0,k} \| \leq dist(x_{0,k},\mathcal{X}^*) + \alpha_k C_0 / \sqrt{S}$$
$$dist(x_{0,k+1}, \mathcal{X}^*)  \leq d(\gamma)+ \alpha_k C_0 / \sqrt{S} \qquad \forall \; k \geq k_0.$$

Therefore, as $\alpha_k \rightarrow 0$, $$\limsup_{k \rightarrow \infty} dist(x_{0,k}, \mathcal{X}^*) \leq d(\gamma) \qquad \forall \; \gamma > 0.$$

Since, $f$ is continuous and its level sets are compact, we get $\lim_{\gamma \rightarrow 0} d(\gamma) = 0 $ which clearly implies, $$\lim_{k \rightarrow \infty} dist(x_{0,k}, \mathcal{X}^*) = 0$$  and consequently, $$\lim_{k \rightarrow \infty} f(x_{0,k}) = f^*.$$ 
\end{itemize} 
$\hfill \blacksquare$
\end{proof}

\subsection{Complete Server Graph} \label{sec:ApxB}
\subsubsection{Proof for Lemma~\ref{Lem:IterateConvCSG}}
\begin{proof}
We start with projection gradient descent method and establish error between the iterate and any point $y \in \mathcal{X}$,
\begin{align}
\| x^J_{i,k} - y \|^2 &= \| \mathcal{P}_\mathcal{X} [x^J_{i-1,k} - \alpha_k \sum_{h=1}^C W_{i-1,k}[J,h] \;  g_h(x^J_{i-1,k})] - y\|^2 \nonumber \\
&\leq \| x^J_{i-1,k} - \alpha_k \sum_{h=1}^C W_{i-1,k}[J,h] \  g_h(x^J_{i-1,k}) - y\|^2 \nonumber\\
&\leq \| x^J_{i-1,k} - y \|^2 +\alpha_k ^2 \left( \| \sum_{h=1}^C W_{i-1,k}[J,h] \  g_h(x^J_{i-1,k}) \|^2 \right)  \nonumber\\
& \qquad - 2 \alpha_k \left(\sum_{h=1}^C W_{i-1,k}[J,h] \  g_h(x^J_{i-1,k}) \right)^T (x^J_{i-1,k} - y)  \label{Eq:Projection} 
\end{align}
Adding the above inequality for all time instants within a cycle i.e. $i = 1, \; \cdots, \;  \Delta$, followed by a summation over all agents $J$ and the consensus step Eq.~\ref{Eq:ConsensusStep} ;
\begin{align}
\| x_{0,k+1} - y \|^2 \leq \sum_{J=1}^S\| x^J_{ \Delta,k} - y \|^2 &\leq \sum_{J=1}^S \| x^J_{0,k} - y \|^2 + \alpha_k^2 \sum_{J=1}^S \sum_{i=1}^{ \Delta}\left(\sum_{h=1}^C W_{i-1,k}[J,h] \  L_h \right)^2 \nonumber \\ 
& \hspace{-0.5in} \underbrace{- 2 \alpha_k \sum_{J=1}^S \sum_{i=1}^{ \Delta} \left[ \left(\sum_{h=1}^C W_{i-1,k}[J,h] \  g_h(x^J_{i-1,k}) \right)^T (x^J_{i-1,k} - y) \right]}_{{\textstyle \Lambda}} \label{Eq:Interim}
\end{align}
Typically one would use the gradient/gradient property to rewrite the third term ($\Lambda$) in Eq.~\ref{Eq:Interim} as a sum of functions. This is however not possible here, since the multipliers $W_{i-1,k} [J,h]$ could be negative. We will manipulate the third term and use bounds to obtain something involving addition/subtraction of function values (evaluated at iterate and $y$). 
We start by writing an expression for $x^J_{i,k}$,
$$ x^J_{i,k} = x^J_{i-1,k} - \alpha_k \left[ \sum_{h=1}^C W_{i-1,k} [J,h] g_h(x^J_{i-1,k})\right] + e^J_{i-1} $$
where $e^J_{i-1}$ is the difference term between a gradient descent and its projection at $i^{th}$ iteration. Recursively performing this unrolling operation we get,
$$ x^J_{i,k} = x^J_{0,k} - \alpha_k \sum_{t = 1}^{i} \left[ \sum_{h=1}^C W_{t-1,k} [J,h] g_h(x^J_{t-1,k})\right] + \sum_{t = 1}^{i} e^J_{t-1}.$$
Each $e^J_{i-1}$ has a finite bound and it can be shown by using the above expressions and triangle inequality,
\begin{align} 
\| e^J_{i-1} \| &= \| x^J_{i,k} - x^J_{i-1,k} + \alpha_k \sum_{h = 1}^C W_{i-1,k}[J, h] g_h(x^J_{i-1,k}) \| \nonumber\\
& \leq \|x^J_{i,k} - x^J_{i-1,k} \| +  \alpha_k \|\sum_{h = 1}^C W_{i-1,k}[J, h] g_h(x^J_{i-1,k}) \| \nonumber\\
& \leq 2 \alpha_k\| \sum_{h = 1}^C W_{i-1,k}[J, h] g_h(x^J_{i-1,k}) \| \nonumber \\
& \qquad \ldots \text{From  Eq.~\ref{Eq:Projection}, $\|x^J_{i,k} - x^J_{i-1,k} \| \leq  \alpha_k  \| \sum_{h = 1}^C W_{i-1,k}[J, h] g_h(x^J_{i-1,k}) \|$}
\end{align}
We hence arrive at the following expression, that will be used later in the analysis. Note the terms $\alpha_k D^J_{i-1,k}$ and $\alpha_k E^J_{i-1,k}$ will be used extensively later.
\begin{align}
 y - x^J_{i-1,k} &= y - x^J_{0,k} + \underbrace{\alpha_k \sum_{t = 1}^{i-1} \left[ \sum_{h=1}^C W_{t-1,k} [J,h] g_h(x^J_{t-1,k})\right]}_{\alpha_k D^J_{i-1,k}} - \underbrace{\sum_{t = 1}^{i-1} e^J_{t-1}}_{\alpha_k E^J_{i-1,k}} \nonumber \\ 
 &\triangleq y - x^J_{0,k} + \alpha_k ( D^J_{i-1,k} - E^J_{i-1,k}). \label{Eq:EstErrEqn}
 \end{align}
We obtain bounds on both $\sum_{J=1}^S D^J_{i-1,k}$ and $\sum_{J=1}^S E^J_{i-1,k}$ using the property $\| \sum (.) \| \leq \sum \| (.) \|$ successively,

\begin{align}
\sum_{J=1}^S \|E_{i-1,k} \| &= \sum_{J=1}^S \| \frac{1}{\alpha_k} \sum_{t = 1}^{i-1} e_{t-1}\| \leq  \frac{1}{\alpha_k} \sum_{J=1}^S \sum_{t = 1}^{i-1} \|e_{t-1}\| \leq \frac{1}{\alpha_k} \sum_{J=1}^S \sum_{t = 1}^{i-1} 2 \alpha_k\| \sum_{h = 1}^C W_{i-1,k}[J, h] g_h(x_{i-1,k}) \| \nonumber \\
&\leq 2 \sum_{J=1}^S \sum_{t = 1}^{i-1}  \sum_{h = 1}^C | W_{i-1,k}[J, h]| \ \|g_h(x_{i-1,k}) \| \leq 2 \sum_{J=1}^S \sum_{t = 1}^{ \Delta}  \sum_{h = 1}^C | W_{i-1,k}[J, h] | \ \|g_h(x_{i-1,k}) \| \nonumber \\
&\leq 2 \bar{M} \sum_{h=1}^{C} L_h = 2 \bar{M} \SB{L} \quad \ldots \text{Replace $i-1$ with $ \Delta$} \label{Eq:EBound3} \\
\sum_{J=1}^S\|D_{i-1,k}\| & = \| \sum_{J=1}^S \sum_{t = 1}^{i-1} \left[ \sum_{h=1}^C W_{t-1,k} [J,h] g_h(x_{t-1,k})\right] \| \leq \sum_{J=1}^S \sum_{t = 1}^{i-1} \left[ \sum_{h=1}^C | W_{t-1,k} [J,h] | \ \|g_h(x_{t-1,k})\|\right] \nonumber \\
& \leq \sum_{J=1}^S \sum_{t = 1}^{ \Delta} \left[ \sum_{h=1}^C | W_{t-1,k} [J,h] | \ \|g_h(x_{t-1,k})\|\right] \leq  2 \bar{M} \sum_{h=1}^{C} L_h = 2 \bar{M} \SB{L}\label{Eq:DBound3}
\end{align} 

We now perform similar unrolling for the gradient function. The gradient is Lipschitz bounded, hence we have, $\| g_h(x_{i-1,k}) - g_h(x_{0,k})\| \leq N_h \| x_{i-1,k} - x_{0,k}\|$. This gives us,
\begin{align} 
g_h(x_{i-1,k}) =  g_h(x_{0,k}) + l_{h,i-1} \quad \text{where,} \; \| l_{h,i-1} \| \leq N_h \| x_{i-1,k} - x_{0,k}\|, 
\label{Eq:UnrollGrad3}
\end{align}
where, the vector $l_{h,i-1}$ is such that its norm is less than or equal to $N_h \| x_{i-1,k} - x_{0,k}\|$. We further bound $\| x_{i-1,k} - x_{0,k} \|$ using Eq.~\ref{Eq:EstErrEqn} and using triangle inequality,
$$ \| x_{i-1,k} - x_{0,k} \| = \alpha_k \| D^J_{i-1,k} - E_{i-1,k}\| \leq \alpha_k (\| D^J_{i-1,k}\| + \|E_{i-1,k}\|) \leq 4 \alpha_k \bar{M} \SB{L}$$
$$ \| l_{h,i-1} \| \leq N_h \| x_{i-1,k} - x_{0,k} \| \leq 4 \alpha_k \bar{M} N_h \SB{L} $$
Now we move on to using these expressions to bound terms in $\Lambda$.
\begin{align*}
\Lambda &= 2 \alpha_k \sum_{J=1}^S \sum_{i=1}^{ \Delta} \left[ \left(\sum_{h=1}^C W_{i-1,k}[J,h] \ \left( g_h(x_{0,k}) + l_{h,i-1}\right) \right)^T (y - x^J_{i-1,k}) \right] \ldots \text{Unroll $g_h(x)$, Eq.~\ref{Eq:UnrollGrad3}} \nonumber\\
&= 2 \alpha_k \sum_{J=1}^S  \sum_{i=1}^{ \Delta} \left[ \left(\sum_{h=1}^C W_{i-1,k}[J,h] \  g_h(x_{0,k})\right) ^T (y - x_{0,k} + \alpha_k (D_{i-1,k} - E_{i-1,k}))  \right] \ldots \text{Eq.~\ref{Eq:EstErrEqn}} \nonumber\\ 
& \quad + 2 \alpha_k \sum_{J=1}^S  \sum_{i=1}^{ \Delta} \left[ \left(\sum_{h=1}^C W_{i-1,k}[J,h] \  l_{h,i-1} \right)^T (y - x_{0,k} + \alpha_k (D_{i-1,k} - E_{i-1,k}))  \right] \nonumber \\
&\leq 2 \alpha_k \left[- M (f(x_{0,k}) - f(y)) + \alpha_k \sum_{J=1}^S \sum_{i=1}^{ \Delta} \left[ \left(\sum_{h=1}^C W_{i-1,k}[J,h] \  g_h(x_{0,k})\right)^T (D_{i-1,k} - E_{i-1,k})  \right] \right]  \nonumber\\ 
& \quad + 2 \alpha_k \sum_{J=1}^S \sum_{i=1}^{ \Delta} \left[ \left(\sum_{h=1}^C W_{i-1,k}[J,h] \  l_{h,i-1} \right)^T (y - x_{0,k} + \alpha_k (D_{i-1,k} - E_{i-1,k}))  \right] \ldots \text{Subgrad. ineq.}  \nonumber \\
& \quad \text{Strengthening the inequality, } \nonumber \\
&\leq - 2 \alpha_k M (f(x_{0,k}) - f(y)) + 2 \alpha_k^2 \| \sum_{J=1}^S \sum_{i=1}^{ \Delta} \left[ \left(\sum_{h=1}^C W_{i-1,k}[J,h] \  g_h(x_{0,k})\right)^T (D_{i-1,k} - E_{i-1,k})  \right] \|  \nonumber \\
& \quad + 2 \alpha_k \| \sum_{J=1}^S \sum_{i=1}^{ \Delta} \left[ \left(\sum_{h=1}^C W_{i-1,k}[J,h] \  l_{h,i-1} \right)^T (y - x_{0,k} + \alpha_k (D_{i-1,k} - E_{i-1,k}))  \right]\|\nonumber  \\
& \quad \text{Using property of norms, $\| \sum (.) \|  \leq \sum \| (.) \|$,} \nonumber \\
&\leq - 2 \alpha_k M (f(x_{0,k}) - f(y)) + 2 \alpha_k^2 \sum_{J=1}^S  \sum_{i=1}^{ \Delta} \left[ \left(\sum_{h=1}^C| W_{i-1,k}[J,h]| \  \|g_h(x_{0,k}) \|\right) \|(D_{i-1,k} - E_{i-1,k})\|  \right]   \nonumber \\
& \quad + 2 \alpha_k \sum_{J=1}^S \sum_{i=1}^{ \Delta} \left[ \left(\sum_{h=1}^C | W_{i-1,k}[J,h] | \ \| l_{h,i-1} \| \right) \| (y - x_{0,k} + \alpha_k (D_{i-1,k} - E_{i-1,k}))\|  \right] \nonumber  \\
& \quad \text{Using the property, $\sum ab \leq \sum a \sum b$, if $a,b \geq 0$, and triangle inequality, } \nonumber \\
& \leq - 2 \alpha_k M (f(x_{0,k}) - f(y)) + 2 \alpha_k^2  \bar{M} (\sum_{h=1}^{C} L_h) \sum_{J=1}^S \sum_{i=1}^{ \Delta} \left[ \|(D_{i-1,k}\| + \| E_{i-1,k})\|  \right]    \nonumber \\
& \quad + 2 \alpha_k \sum_{J=1}^S \sum_{i=1}^{ \Delta} \left[ \left(\sum_{h=1}^C |W_{i-1,k}[J,h]| \ \| l_{h,i-1} \| \right) \left(\| (y - x_{0,k})\| + \alpha_k \| D^J_{i-1,k} \| + \alpha_k \|E^J_{i-1,k} \| \right)  \right] \nonumber  \\
& \leq - 2 \alpha_k M (f(x_{0,k}) - f(y)) + 8  \Delta \ \alpha_k^2  \bar{M}^2   (\sum_{h=1}^{C} L_h )^2 \ldots \text{Eq.~\ref{Eq:EBound3} and \ref{Eq:DBound3}}  \nonumber \\
& \quad + 2 \alpha_k^2 \ \| (y - x_{0,k})\| \ \sum_{J=1}^S \sum_{i=1}^{ \Delta} \left[ \left( \sum_{h=1}^C | W_{i-1,k}[J,h] | \ N_h (\| D^J_{i-1,k} \| + \| E^J_{i-1,k} \|) \right)  \right] \nonumber \\
& \quad + 2 \alpha_k^3 \sum_{J=1}^S \sum_{i=1}^{ \Delta} \left[ \left(\sum_{h=1}^C | W_{i-1,k}[J,h] | \ N_h (\| D^J_{i-1,k} \| + \| E^J_{i-1,k} \|)^2 \right) \right] \nonumber  \\
&\leq - 2 \alpha_k M (f(x_{0,k}) - f(y)) + 8  \Delta \ \alpha_k^2  \bar{M}^2   (\sum_{h=1}^{C} L_h )^2  + 2 \alpha_k^2 \ \| (y - x_{0,k})\| \left[ \bar{M} \SB{N}  + 4 C  \Delta \bar{M} \SB{L} \right]  \nonumber \\
& \quad + 2 \alpha_k^3 \left[ \bar{M}\SB{N}  + C  \Delta \left( \sum_{J=1}^S \| D^J_{i-1,k} \| + \sum_{J=1}^S \| E^J_{i-1,k} \| \right)^2 \right] \nonumber  \\
&\leq - 2 \alpha_k M (f(x_{0,k}) - f(y)) + 8  \Delta \ \alpha_k^2  \bar{M}^2   (\sum_{h=1}^{C} L_h )^2  + 2 \alpha_k^2 \ \| (y - x_{0,k})\| \left[ \bar{M} \SB{N}  + 4 C  \Delta \bar{M} \SB{L} \right] \nonumber \\
& \quad + 2 \alpha_k^3 \left[ \bar{M}\SB{N}  + C  \Delta (4 \bar{M} \SB{L})^2\right] \nonumber  \\
& \leq - 2 \alpha_k M (f(x_{0,k}) - f(y)) + 2 \alpha_k^2 \left[ \bar{M} \SB{N}  + 4 C  \Delta \bar{M} \SB{L} \right] \ \| (y - x_{0,k})\|  + 8  \Delta \ \alpha_k^2  \bar{M}^2   (\sum_{h=1}^{C} L_h )^2 + \tilde{C}^2 \nonumber 
\end{align*}
where $\tilde{C}^2 = 2 \alpha_k^3 \left[ \bar{M}\SB{N}  + C  \Delta (4 \bar{M} \SB{L})^2\right]$. Substituting the above inequality in Eq.~\ref{Eq:Interim} and using the property $2\|x\|  \leq 1 + \|x\|^2$,
\begin{align}
\| x^J_{0,k+1} - y \|^2 \leq \| x^J_{ \Delta,k} - y \|^2 &\leq (1 + \frac{\alpha_k^2}{S} \bar{M} F) \| x^J_{0,k} - y \|^2 - 2  \frac{1}{S}  \alpha_k M (f(x_{0,k}) - f(y)) +  \frac{1}{S} \alpha_k^2 C_0^2
\label{Eq:Complete}
\end{align}
where, $$C_0^2 = \sum_{J=1}^S \sum_{i=1}^{ \Delta}\left(\sum_{h=1}^C  W_{i-1,k}[J,h] \ L_h \right)^2 + 8  \Delta \bar{M}^2   (\sum_{h=1}^{C} L_h )^2 + 2 \alpha_k \left[ \bar{M}\SB{N}  + C  \Delta (4 \bar{M} \SB{L})^2\right] + \bar{M}F $$ and $F = \left[  \SB{N}  + 4 C  \Delta \SB{L} \right]$. 
$\hfill \blacksquare$
\end{proof}

\subsubsection{Proof for Theorem~\ref{Th:ConvCSG}}
\begin{proof}
We will use the deterministic version of  Lemma~\ref{Lem:RobSiegConv} for proving convergence. We being by rewriting the relationship between iterates from Lemma~\ref{Lem:IterateConvCSG}, for $y = x^* \in \mathcal{X}^*$,
\begin{align}
\| x^J_{0,k+1} - x^*\|^2 &\leq (1 + \underbrace{\frac{\alpha_k^2}{S} \bar{M} F}_{q_k}) \| x^J_{0,k} - x^* \|^2 - 2  \frac{1}{S}  \alpha_k M (f(x_{0,k}) - f(x^*)) +  \underbrace{\frac{1}{S} \alpha_k^2 C_0^2}_{w_k}
\label{Eq:Complete2}
\end{align}
where, $$C_0^2 = \sum_{J=1}^S \sum_{i=1}^{ \Delta}\left(\sum_{h=1}^C W_{i-1,k}[J,h] \  L_h \right)^2 + 8  \Delta \bar{M}^2   (\sum_{h=1}^{C} L_h )^2 + 2 \alpha_k \left[ \bar{M}\SB{N}  + C  \Delta (4 \bar{M} \SB{L})^2\right] + \bar{M}F $$ and $F = \left[  \SB{N}  + 4 C  \Delta \SB{L} \right]$.
Clearly, the conditions required by Lemma~\ref{Lem:RobSiegConv} hold viz. $\sum_{k=0}^\infty q_k < \infty$ and $\sum_{k=0}^\infty w_k < \infty$ ($\sum_{k=1}^\infty \alpha_k^2 < \infty$). The rest of the proof is similar to the proof of Theorem~\ref{Th:ConvMain} and it follows directly that $x^J_{0,k}$ enters the optimal set.
$\hfill \blacksquare$
\end{proof}

\end{document}